\newcommand{\Order}{\mathrm{O}}
\newcommand{\Prob}{\mathrm{Pr}}
\newcommand{\order}{\mathrm{o}}
\newcommand{\poly}{\mathrm{poly}}
\newcommand{\defeq}{\stackrel{\mbox{\scriptsize{\normalfont\rmfamily def}}}{=}}
\newcommand{\dd}{\mathrm{d}}
\renewcommand{\Vec}[1]{\mbox{\boldmath $#1$}}
\newcommand{\Real}{\mathbb{R}}
\newcommand{\Integer}{\mathbb{Z}}
\newcommand{\Vol}{\mathrm{Vol}}
\newcommand{\conv}{\mathrm{conv}}
\theoremstyle{plain}
\newtheorem{theorem}{Theorem}[section]
\newtheorem{lemma}[theorem]{Lemma}
\newtheorem{corollary}[theorem]{Corollary}
\newtheorem{proposition}[theorem]{Proposition}
\newtheorem{observation}[theorem]{Observation}
\newtheorem{claim}{Claim}
\theoremstyle{definition}
\newtheorem{algorithm}{Algorithm}
\newtheorem{problem}{Problem}
\title{
 An FPTAS for the Volume of a ${\cal V}$-polytope\\
 ---It is Hard to Compute The Volume of\\ The Intersection of Two Cross-polytopes}
\author{
 Ei Ando\thanks{Sojo University, 4-22-1, Ikeda, Nishi-Ku, Kumamoto, 860-0082, Japan. {ando-ei@cis.sojo-u.ac.jp}} \and 
 Shuji Kijima\thanks{
 Kyushu University, 744, Motooka, Nishi-Ku, Fukuoka, 819-0395, Japan.
 {kijima@inf.kyushu-u.ac.jp}}
}
\begin{document}
\maketitle
\begin{abstract}
Given an $n$-dimensional convex body by a membership oracle in general, 
  it is known that 
   any polynomial-time {\em deterministic} algorithm 
   cannot approximate its volume within ratio $(n/\log n)^n$. 
 There is a substantial progress on {\em randomized} approximation 
   such as Markov chain Monte Carlo 
   for a high-dimensional volume, and for many {\#}P-hard problems, 
 while some deterministic approximation algorithms are recently developed only for a few {\#}P-hard problems. 
 Motivated by a {\em deterministic} approximation of the volume of a ${\cal V}$-polytope, 
   that is a polytope with 
   few vertices and (possibly) exponentially many facets, 
  this paper investigates the volume of a ``knapsack dual polytope,'' 
  which is known to be {\#}P-hard due to Khachiyan (1989). 
 We reduce an approximate volume of a knapsack dual polytope 
  to that of the {\em intersection of two cross-polytopes}, 
  and give FPTASs for those volume computations. 
 Interestingly, 
  the volume of the intersection of two cross-polytopes (i.e., $L_1$-balls) is {\#}P-hard, 
  unlike the cases of $L_{\infty}$-balls or $L_2$-balls.  

\noindent
{\bf Keywords: }
 Deterministic approximation, {\#}P-hard, ${\cal V}$-polytope, intersection of $L_1$-balls 
\end{abstract}

\section{Introduction}
\subsection{Approximation of a high dimensional volume: randomized vs.\ deterministic}
 A high dimensional volume is hard to compute, even for approximation. 
 When an $n$-dimensional convex body is given by a {\em membership oracle}, 
  no polynomial-time {\em deterministic} algorithm can approximate 
  its volume within ratio $(n/\log n)^n$~\cite{BF1987,Elekes1986,Lovasz1986,PNAS2013}D
 Intuitively, 
  the impossibility comes from the fact 
   that the volume of an $n$-dimensional $L_{\infty}$-ball (i.e., hypercube) 
   is exponentially large to the volume of its inscribed $L_2$-ball or $L_1$-ball, 
  nevertheless the $L_2$-ball ($L_1$-ball as well) is convex and 
   touches each facet of the $L_{\infty}$-ball (see e.g., \cite{Matousek}). 
 Lov\'{a}sz said in~\cite{Lovasz1986} for a convex body $K$ that 
 ``{\it If $K$ is a {\em polytope}, 
   then there may be much better ways to compute $\Vol(K)$}.''
 Unfortunately,  
  an exact volume is often {\#}P-hard, 
  even for a relatively simple polytope. 
 For instance, 
  the {\em volume} of a knapsack polytope, 
   which is given by a box constraint (i.e., hypercube $[0,1]^n$) and a single linear inequality, 
  is a well-known {\#}P-hard problem~\cite{DF1988}. 

 The difficulty 
   caused by the exponential gap between $L_{\infty}$-ball and $L_{1}$-ball 
   also does harm a simple Monte Carlo algorithm. 
 Then, the Markov chain Monte Carlo (MCMC) method, a sophisticated {\em randomized} algorithm, 
   achieves a great success for approximating a high volume.  
 Dyer, Frieze and Kannan~\cite{DFK1991} gave the first 
  fully polynomial-time randomized approximation scheme (FPRAS) 
  for the volume computation of a general convex body\footnote{
   Precisely, they are concerned with a ``well-rounded'' convex body, 
   after an affine transformation of a general finite convex body. 
}. 
 They employed a {\em grid-walk}, 
   which is efficiently implemented with a membership oracle, and 
  showed its rapidly mixing, then 
  they gave an FPRAS runs in $\Order^*(n^{23})$ time 
   where $\Order^*$ ignores $\poly(\log n)$ and $1/\epsilon$ terms. 
 After several improvements, 
  Lov\'{a}sz and Vempala~\cite{LV2006} improved the time complexity to $\Order^*(n^4)$
  in which they employ hit-and-run walk, and recently 
  Cousins and Vempala~\cite{CV2015} gave an $\Order^*(n^3)$-time algorithm.  
 Many randomized techniques, including MCMC, also have been developed for designing FPRAS for {\#}P-hard problems. 

 In contrast, 
  a development of a {\em deterministic} approximation for {\#}P-hard problems is a current challenge, and 
   not many results seem to be known.  
 A remarkable progress is the {\em correlation decay} argument due to Weitz~\cite{Weitz2006}; 
  he designed a {\em fully polynomial time approximation scheme} ({\em FPTAS}\/)   
  for counting independent sets in graphs whose maximum degree is at least $5$. 
 A similar technique is independently presented by Bandyopadhyay and Gamarnik~\cite{BG08}, and 
  there are several recent developments on the technique, 
   e.g.,~\cite{GK2007,BGKNT2007,LLY12,LLY13,LLL14}. 
 For counting knapsack solutions\footnote{
  Given $\Vec{a} \in \mathbb{Z}_{>0}^n$ and $b \in \mathbb{Z}_{>0}$, 
   the problem is to compute $|\{ \Vec{x} \in \{0,1\}^n \mid \sum_{i=1}^n a_i x_i \leq b \}|$. 
  Remark that it is computed in polynomial time when all the inputs $a_i$ ($i=1,\ldots,n$) and $b$ 
  are bounded by $\poly(n)$, using a version of the standard dynamic programming for knapsack problem 
  (see e.g.,~\cite{Dyer,GKMSVV2011}). 
 Nevertheless, it should be worth noting that 
  \cite{GKM10} and \cite{SVV2012} 
needed special techniques, 
   different from ones for optimization problems, 
  to design FPTASs for the counting problem. 
}, 
  Gopalan, Klivans and Meka~\cite{GKM10}, and 
 \v{S}tefankovi\v{c}, Vempala and Vigoda~\cite{SVV2012} gave 
   deterministic approximation algorithms based on 
the dynamic programming (see also~\cite{GKMSVV2011}), 
   in a similar way to a simple random sampling algorithm by Dyer~\cite{Dyer}. 
 Modifying the dynamic programming, 
  Li and Shi~\cite{LS2014} gave an FPTAS 
   for the volume of a knapsack polytope, 
  which runs in $\Order((n^3/\epsilon^2) \poly \log b)$ time 
  where $b$ is the capacity of a knapsack.  
 Motivated by a different approach,  
  Ando and Kijima~\cite{AK2015} gave another FPTAS 
  for the volume of a knapsack polytope. 
 Their scheme is based on a classical approximate convolution, 
  and runs in $\Order(n^3/\epsilon)$ time, 
  independent of the size of items and the capacity of a knapsack 
  reckoning without numerical calculus.

\subsection{${\cal H}$-polytope and ${\cal V}$-polytope}
 An ${\cal H}$-polyhedron is an intersection of finitely many closed half-spaces in $\mathbb{R}^n$. 
 An ${\cal H}$-polytope is a bounded ${\cal H}$-polyhedron. 
 A ${\cal V}$-polytope is a convex hull of a finite point set in $\mathbb{R}^n$~\cite{Matousek}. 
 From the view point of computational complexity, 
  a major difference between an ${\cal H}$-polytope and a ${\cal V}$-polytope 
 is the measure of their `input size.' 
 An ${\cal H}$-polytope given by linear inequalities defining half-spaces  
   may have vertices exponentially many to the number of the inequalities, 
  e.g., 
    an $n$-dimensional hypercube is given by $2n$ linear inequalities 
    as an ${\cal H}$-polytope, and has $2^n$ vertices. 
 In contrast, a ${\cal V}$-polytope given by a point set 
   may have facets exponentially many to the number of vertices, 
  e.g., 
   an $n$-dimensional cross-polytope (that is an $L_1$-ball, in fact) is 
   given by a set of $2n$ points 
    as a ${\cal V}$-polytope, and it has $2^n$ facets. 

  There are many interesting properties, that are known, or unknown,
  between ${\cal H}$-polytope and ${\cal V}$-polytope~\cite{Matousek}. 
 A membership query is polynomial time for both ${\cal H}$-polytope and ${\cal V}$-polytope. 
 It is still unknown 
  about the complexity of a query if a given pair of 
   ${\cal V}$-polytope and ${\cal H}$-polytope are identical. 
 Linear programming (LP) on a ${\cal V}$-polytope is trivially polynomial time 
   since it is sufficient to check the objective value of all 
   vertices and hence 
   LP is usually concerned with an ${\cal H}$-polytope. 

\subsection{Volume of ${\cal V}$-polytope}
 Motivated by a hardness of the volume computation of a ${\cal V}$-polytope, 
  Khachiyan~\cite{Khachiyan1989} is concerned with the following ${\cal V}$-polytope:  
 Suppose a vector 
  $\Vec{a}=(a_1,\dots,a_n)\in \Integer_{\geq 0}^n$ is given,  
  where without loss of generality we may assume that $a_1 \geq a_2 \geq \cdots \geq a_n$. 
 Then let  
\begin{eqnarray}
 P_{\Vec{a}} \defeq \conv\left\{\pm \Vec{e}_1, \ldots, \pm \Vec{e}_n,\Vec{a}\right\}
\label{def:P_a}
\end{eqnarray}
  where $\Vec{e}_1,\ldots,\Vec{e}_n$ are the standard basis vectors in $\mathbb{R}^n$. 
 This paper calls $P_{\Vec{a}}$ {\em knapsack dual polytope}\footnote{
 See \cite{Matousek} for the duality of polytopes. 
 In fact, $P_{\Vec{a}}$ itself is not the dual of a knapsack polytope in a canonical form, 
  but it is obtained by an affine transformation from a dual of knapsack polytope under some assumptions. 
 Khachiyan~\cite{Khachiyan1993} says that 
   computing {\em $\Vol(P_{\Vec{a}})$ 
  `is ``polar'' to determining the volume of the intersection of a cube and a halfspace}.' 
}. 
 Khachiyan~\cite{Khachiyan1989} 
  showed that computing $\Vol(P_{\Vec{a}})$ is $\#P$-hard\footnote{
   If all $a_i$ ($i=1,\ldots,n$) are bounded by $\poly(n)$, it is computed in polynomial time, 
   so did the counting knapsack solutions. 
  See also footnote 1 for counting knapsack solutions. 
}.
 The hardness is given by a Cook reduction from counting set partitions, 
  of which the decision version is a cerebrated {\em weakly} NP-hard problem. 
 We do not know any (efficient) technique to translate the volume 
   between them a polytope and its dual polytope.

\subsection{Contribution}
 Motivated by a development of techniques 
  for {\em deterministic} approximation of the volumes of ${\cal V}$-polytopes, 
   this paper investigates the knapsack dual polytope $P_{\Vec{a}}$ given by \eqref{def:P_a}. 
 The main goal of the paper is to establish the following theorem. 
\begin{theorem}\label{th:main}
 For any $\epsilon$ ($0<\epsilon<1$), 
  there exists a {\em deterministic} algorithm that outputs a value $\widehat{V}$ 
  satisfying $(1-\epsilon)\Vol(P_{\Vec{a}}) \le \widehat{V} \le (1+\epsilon)\Vol(P_{\Vec{a}})$ 
  in $\Order(n^{10} \epsilon^{-6})$ time.
\end{theorem}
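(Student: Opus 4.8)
The plan is to reduce $\Vol(P_{\Vec{a}})$ to a single ``knapsack-type'' quantity by a beneath-and-beyond argument, and then to build an approximate-convolution FPTAS for that quantity (equivalently, for the volume of the intersection of two $L_1$-balls). Write $C=\conv\{\pm\Vec{e}_1,\dots,\pm\Vec{e}_n\}$ for the standard cross-polytope, the unit $L_1$-ball, whose volume $\Vol(C)=2^n/n!$ is known exactly; we may assume $\|\Vec{a}\|_1>1$, since otherwise $\Vec{a}\in C$ and $P_{\Vec{a}}=C$. Because $P_{\Vec{a}}=\conv(C\cup\{\Vec{a}\})$, the beneath-and-beyond principle~\cite{Matousek} decomposes $P_{\Vec{a}}\setminus C$, with pairwise disjoint interiors, into the pyramids $\conv(F_{\sigma}\cup\{\Vec{a}\})$ over the facets $F_{\sigma}=\conv\{\sigma_i\Vec{e}_i:1\le i\le n\}$, $\sigma\in\{\pm1\}^n$, that are \emph{visible} from $\Vec{a}$, i.e.\ those with $\langle\sigma,\Vec{a}\rangle>1$. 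Each $F_{\sigma}$ is a regular $(n-1)$-simplex of $(n-1)$-volume $\sqrt{n}/(n-1)!$, the distance from $\Vec{a}$ to its affine hull is $(\langle\sigma,\Vec{a}\rangle-1)/\sqrt{n}$, and so the corresponding pyramid has volume $(\langle\sigma,\Vec{a}\rangle-1)/n!$. Summing over visible facets gives the closed form
\[
\Vol(P_{\Vec{a}})=\frac{2^n}{n!}\Bigl(1+\Exp_{\sigma}\bigl[(\langle\sigma,\Vec{a}\rangle-1)^{+}\bigr]\Bigr),
\]
where $\sigma$ is uniform on $\{\pm1\}^n$ and $x^{+}=\max\{x,0\}$. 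Substituting $\sigma_i=1-2x_i$ with $x_i\in\{0,1\}$ exhibits $2^n\,\Exp_{\sigma}[(\langle\sigma,\Vec{a}\rangle-1)^{+}]$ as a weighted count of knapsack solutions (re-proving Khachiyan's $\#$P-hardness~\cite{Khachiyan1989}); matching it against $\int_{\Real^n}\mathbf{1}[\sum_i|x_i|\le r_1]\,\mathbf{1}[\sum_i|x_i-v_i|\le r_2]\,\dd x$ for suitable $r_1,r_2>0$ and $\Vec{v}\in\Real^n$ determined by $\Vec{a}$ is the reduction, advertised in the abstract, to the volume of an intersection of two cross-polytopes. Thus it suffices to output $\widehat{m}$ with $|\widehat{m}-m|\le\epsilon(1+m)$, where $m:=\Exp_{\sigma}[(\langle\sigma,\Vec{a}\rangle-1)^{+}]$, since then $\widehat{V}:=(2^n/n!)(1+\widehat{m})$ is a $(1\pm\epsilon)$-approximation of $\Vol(P_{\Vec{a}})$.

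The second step is the FPTAS for $m$. The structural fact to exploit is that $\langle\sigma,\Vec{a}\rangle=\sum_{i=1}^n\sigma_i a_i$ is a sum of independent terms, so the law of the partial sum $P_k:=\sum_{i=1}^k\sigma_i a_i$ is obtained from that of $P_{k-1}$ by one elementary convolution ($P_k=P_{k-1}\pm a_k$, each sign with probability $1/2$), and $m$ is a functional of the law of $P_n$. This law has support of size up to $2^k$ spread over an interval of length up to $2\|\Vec{a}\|_1$, which can be exponential in $n$, so it cannot be kept exactly; following the spirit of~\cite{GKM10,SVV2012,AK2015}, I would instead maintain a compressed surrogate in which the relevant range is partitioned into $\poly(n/\epsilon)$ buckets on an adaptive (geometric) scale, storing for each bucket both its probability mass and the conditional first moment of $P_k$, from which $\Exp[(P_n-1)^{+}]$ is read off at the end. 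Phrasing the reduction through the intersection of two cross-polytopes instead makes one track the \emph{joint} surrogate of the pair $(\sum_{i\le k}|x_i|,\sum_{i\le k}|x_i-v_i|)$ over a two-dimensional grid, which is where additional polynomial factors in the running time arise.

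The error analysis is, I expect, the main obstacle. Each convolve-then-rebucket step introduces a small multiplicative distortion, and the crux---exactly the difficulty that forced~\cite{GKM10,SVV2012,AK2015} to use arguments different from those for the corresponding optimization problems---is to show that these distortions do not accumulate beyond a $(1\pm\epsilon)$ factor over the $\Theta(n)$ steps, and that \emph{relative} (not merely additive) accuracy is preserved on the tail statistic $\Exp[(P_n-1)^{+}]$, which can itself be exponentially small or exponentially large; one must also argue that the conditional-first-moment track remains accurate and that discretizing the continuous $L_1$-ball-intersection volume costs only another $\poly(n/\epsilon)$ factor. Granting this, a careful accounting---$\Theta(n)$ convolution steps, $\poly(n/\epsilon)$ buckets (respectively grid cells) per step, and a $\poly(n/\epsilon)$-size sweep over the auxiliary parameters introduced by the reduction---yields the running time $\Order(n^{10}\epsilon^{-6})$ of Theorem~\ref{th:main}.
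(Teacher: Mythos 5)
Your beneath-and-beyond decomposition is correct, and it does yield the exact identity
\[
  \Vol(P_{\Vec{a}}) \;=\; \frac{2^n}{n!}\Bigl(1 + \Exp_\sigma\bigl[(\langle\sigma,\Vec{a}\rangle-1)^{+}\bigr]\Bigr)
  \;=\; \frac{2^n}{n!}\,\Exp_\sigma\bigl[(1-\langle\sigma,\Vec{a}\rangle)^{+}\bigr],
\]
which is a genuinely different first step from the paper's, and in one respect cleaner. The paper never decomposes $P_{\Vec{a}}$ exactly: it introduces the geometric sequence of cross-polytopes $Q_k=C((1-\beta^k)\Vec{a},\beta^k)$, proves $\bigcup_k Q_k\subseteq P_{\Vec{a}}$ with only $(1-O(\epsilon))$ relative loss (Lemma~\ref{lem:geometric-series}), and then computes $\Vol(\bigcup_k Q_k)=\frac{1}{1-\beta^n}\bigl(\frac{2^n}{n!}-\Vol(Q_0\cap Q_1)\bigr)$ exactly (Lemma~\ref{lem:vol-union-gem-series}); it must then separately prove (Lemma~\ref{lem:intersection-small}) that the subtraction $\frac{2^n}{n!}-\Vol(Q_0\cap Q_1)$ does not nearly cancel. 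Your formula trades away one approximation stage and avoids a subtraction outright, since $\Exp_\sigma[(1-\langle\sigma,\Vec{a}\rangle)^{+}]\ge 1$. That is a real observation worth recording.

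Everything after that, however, is a gap. First, the asserted ``matching'' of $2^n\Exp_\sigma[(\langle\sigma,\Vec{a}\rangle-1)^{+}]$ against $\int\mathbf{1}[\sum|x_i|\le r_1]\mathbf{1}[\sum|x_i-v_i|\le r_2]\,\dd x$ is not established and I do not see how it could be an exact identity: the left-hand side is a finite sum over the discrete cube $\{\pm1\}^n$ and the right-hand side is a continuous $n$-dimensional volume. (The paper does connect the two, but only as a limiting argument in the hardness proof of Section~\ref{sec:hardness}, not as an exact reduction usable by an FPTAS.) Second, and centrally, the approximate-convolution step --- what surrogate is kept, how the ``conditional first moments'' evolve through a convolution with $\pm a_k$, and above all why the $\Theta(n)$ rebucketing distortions do not accumulate past $(1\pm\epsilon)$ --- is flagged by you as ``the main obstacle'' and then simply granted. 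That is precisely where the paper's technical content lies: the staircase DP (Algorithm~\ref{algorithm:CPI}), the two-sided sandwich $\Psi_i(u,v)\le G_i(u,v)\le \Psi_i(u+\tfrac{i}{M},v+\tfrac{ir}{M})$ (Lemmas~\ref{lemma:lowG} and~\ref{lem:upG}), and the cone argument of Lemma~\ref{lem:cone}, which uses the hypothesis $\|\Vec{c}\|_1\le r$ to lower-bound $\Psi(1,r)/\Psi(1+\tfrac{n}{M},r(1+\tfrac{n}{M}))$. Your moment-augmented bucketing is a \emph{different} DP, not covered by those lemmas and not a direct port of GKM/SVV/AK either, all of which track only cumulative masses, not moments; it would need its own accuracy lemma of the same flavor, and none is offered. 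Third, the running time $\Order(n^{10}\epsilon^{-6})$ is quoted at the end but not derived; in the paper it comes from the concrete parameter choices $\beta=1-\epsilon/(2n\|\Vec{a}\|_1)$ and $M=\lceil 4n^2\delta^{-1}\rceil$ with $\delta=\Theta(\epsilon^2/n)$, and under your (different) algorithm it would have to be recomputed rather than asserted to match.
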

 As far as we know, 
   this is the first result on designing an FPTAS 
  for the volume of a ${\cal V}$-polytope which is known to be {\#}P-hard. 
 We also discuss some topics related to the volume of ${\cal V}$-polytopes 
  appearing in the proof process. 
 Let us briefly explain the outline of the paper.

\paragraph{Technique/organization}
 The first step for Theorem~\ref{th:main} 
  is a transformation of the {\em approximation problem} to another one: 
 An approximate volume of $P_{\Vec{a}}$ is reduced 
   to the volume of a union of geometric sequence of cross-polytopes (Section~\ref{sec:geometric-series}),  
  and then it is reduced to the volume of the intersection of two cross-polytopes (Section~\ref{sec:red-two-balls}). 
 We remark 
  that the former reduction is just for approximation, and is useless for a {\#}P-hardness. 
 A technical point of this step is that 
  the latter reduction is based on a subtraction---if 
  you are familiar with an approximation,  
  you may worry that a subtraction may destroy an approximation ratio\footnote{
 Suppose you know that $x$ is approximately 49 within 1{\%} error. 
 Then, you know that $x+50$ is approximately 99 within 1{\%} error. 
 However, it is difficult to say $50-x$ is approximately 1. 
 Even when additionally you know that $x$ does not exceed 50, 
  $50-x$ may be 2, 1, 0.1 or smaller than 0.001, meaning that the approximation ratio is unbounded. 
}. 
 It requires careful tuning of a parameter ($\beta$ in Section~\ref{sec:fptas-P_a}) 
    which plays conflicting functions in Sections~\ref{sec:geometric-series} and~\ref{sec:red-two-balls}: 
  the larger $\beta$, the better approximation in Section~\ref{sec:geometric-series}, 
  while the smaller $\beta$, the better in Section~\ref{sec:red-two-balls}. 
 Then, Section~\ref{sec:algorithm-P_a} claims by giving an appropriate $\beta$ that 
  if we have an FPTAS for the volume of an 
  {\em intersection of two cross-polytopes} 
  then we have an FPTAS of $\Vol(P_{\Vec{a}})$. 

 Section~\ref{sec:fptas-two-cp} is a technical core of the paper, 
  where we give an FPTAS for the volume of the intersection of two cross-polytopes (i.e., $L_1$-balls). 
 The scheme is based on a modified version of the technique developed in~\cite{AK2015}, 
   which is based on a classical approximate convolution. 
 At a glance, 
  the volume of the intersection of two-balls may seem easy. 
 It is true for two $L_{\infty}$-balls (i.e., hypercubes\footnote{
  To be precise, an $L_{\infty}$-ball is a hypercube in a position parallel to the axis, 
   meaning that any $L_{\infty}$-ball is transformed to any other one 
   by scaling and parallel move, without using a rotation. 
  If two hypercubes are not in a parallel position, 
   the volume of the intersection is {\#}P-hard since the volume of a knapsack polytope is. 
 }), or $L_2$-balls (i.e., Euclidean balls). 
 However, 
  we show in Section~\ref{sec:hardness} that  
  the volume of the intersection of cross-polytopes is {\#}P-hard. 
 Intuitively, 
  this interesting fact may come 
  from the fact that 
  the ${\cal V}$-polytope, meaning that an $n$-dimensional cross-polytope, 
   has $2^n$ facets. 
 In Section~\ref{sec:extension}, 
  we extend the technique in Section~\ref{sec:fptas-two-cp} 
  to the intersection of any constant number of cross-polytopes. 
 Section~\ref{sec:n+c} briefly discusses 
  the complexity of the volume computation of a ${\cal V}$-polytope 
  regarding the number of vertices.

\section{Preliminary}\label{sec:prelimiary}
 This section presents some notation. 
 Let $\conv(S)$ denote the convex hull of $S \subseteq \mathbb{R}^n$, 
  where $S$ is not restricted to a finite point set. 
 A {\em cross-polytope} $C(\Vec{c},r)$ 
  of radius $r \in \mathbb{R}_{> 0}$ centered at $\Vec{c} \in \mathbb{R}^n$ is given by 
\begin{eqnarray}
 C(\Vec{c},r) \defeq \conv\{ \Vec{c} \pm r\Vec{e}_i \ i=1,\ldots,n\}
\label{def:C(c,r)}
\end{eqnarray}
  where $\Vec{e}_1,\ldots,\Vec{e}_n$ are the standard basis vectors in $\mathbb{R}^n$. 
 Clearly, $C(\Vec{c},r)$ has $2n$ vertices. 
 In fact, $C(\Vec{c},r)$ is an $L_1$-ball in $\mathbb{R}^n$ described by 
\begin{eqnarray}
 C(\Vec{c},r) 
  &=&
   \left\{\Vec{x}\in \Real^n \,\left|\, \|\Vec{x}-\Vec{c}\|_1 \le r \right.\right\} 
 \label{def:C(c,r)2}\\
  &=&
  \left\{\Vec{x}\in \Real^n \mid \langle \Vec{x}-\Vec{c}, \Vec{\sigma}\rangle \le r 
  \ (\forall \Vec{\sigma}\in \{-1,1\}^n)\right\} 
 \label{def:C(c,r)3}
\end{eqnarray}
 where 
  $\|\Vec{u}\|_1 = \sum_{i=1}^n |u_i|$ for $\Vec{u}=(u_1,\ldots,u_n) \in \mathbb{R}^n$ and 
  $\langle \Vec{u},\Vec{v} \rangle = \sum_{i=1}^n u_iv_i$  
 for $\Vec{u},\Vec{v} \in \mathbb{R}^n$. 
 Note that $C(\Vec{c},r)$ has $2^n$ facets. 
 It is not difficult to see that 
  the volume of a cross-polytope in $n$-dimension is 
\begin{eqnarray}
 \Vol(C(\Vec{c},r)) = \frac{2^n}{n!} r^n \label{eq:vol-cp}
\end{eqnarray}
 for any $r \geq 0$ and $\Vec{c} \in \mathbb{R}^n$, 
  where $\Vol(S)$ for $S \subseteq \mathbb{R}^n$ denotes the ($n$-dimensional) volume of $S$.

\section{FPTAS for Knapsack Dual Polytope}
\label{sec:fptas-P_a}
 This section reduces an approximation of $\Vol(P_{\Vec{a}})$ to 
   that of the intersection of two cross-polytopes. 
 In Section~\ref{sec:fptas-two-cp}, 
   we will give an FPTAS for the volume of a latter polytope, accordingly 
   we obtain Theorem~\ref{th:main}. 

\subsection{Reduction to a geometric series of cross-polytopes}\label{sec:geometric-series}
 Let $\beta$ be a parameter\footnote{
     We will set $\beta = 1-\dfrac{\epsilon}{2n\|\Vec{a}\|_1}$, later.} 
  satisfying $0<\beta<1$, and 
  let $Q_0,Q_1,Q_2,\ldots$ be a sequence of cross-polytopes defined by 
\begin{eqnarray}
 Q_k \defeq C((1-\beta^k)\Vec{a}, \beta^k) \label{def:Q_k}
\end{eqnarray}
  for $k=0,1,2,\ldots$. 
Remark that 
\begin{eqnarray*}
 Q_0 &=& C(\Vec{0}, 1), \\
 Q_1 &=& C((1-\beta)\Vec{a}, \beta), \\
 Q_{\infty} &=& C(\Vec{a}, 0) = \{\Vec{a}\}. 
\end{eqnarray*}
 The goal of Section~\ref{sec:geometric-series} is to establish the following. 
\begin{lemma}\label{lem:geometric-series}
 Let $\epsilon$ satisfy $0<\epsilon<1$. 
 If $1-\beta \leq \dfrac{c_1\epsilon}{n\|\Vec{a}\|_1}$ where $0 < c_1 \epsilon < 1$, 
  then 
\begin{eqnarray*}
 (1-c_1\epsilon)\Vol(P_{\Vec{a}}) 
 \leq
 \Vol\left(\bigcup_{k=0}^\infty Q_k\right)
  \leq \Vol(P_{\Vec{a}}). 
\end{eqnarray*}
\end{lemma}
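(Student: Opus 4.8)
The plan is to sandwich $\bigcup_{k=0}^\infty Q_k$ between $P_{\Vec{a}}$ and a slightly shrunk copy of it. The upper bound $\Vol(\bigcup_k Q_k) \le \Vol(P_{\Vec{a}})$ should follow from showing $Q_k \subseteq P_{\Vec{a}}$ for every $k$. Since $P_{\Vec{a}} = \conv\{\pm\Vec{e}_1,\dots,\pm\Vec{e}_n,\Vec{a}\}$ is convex and contains both $C(\Vec{0},1)=\conv\{\pm\Vec{e}_i\}$ and the point $\Vec{a}$, it contains every segment from $\Vec{a}$ to a point of $C(\Vec{0},1)$; the set of such segments is exactly $\bigcup_{t\in[0,1]} \bigl((1-t)\Vec{a} + t\,C(\Vec{0},1)\bigr) = \bigcup_{t\in[0,1]} C((1-t)\Vec{a},t)$, and taking $t=\beta^k$ gives $Q_k$. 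So in fact $\bigcup_{k=0}^\infty Q_k \subseteq \bigcup_{t\in[0,1]} C((1-t)\Vec{a},t) \subseteq P_{\Vec{a}}$, which is even cleaner.

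For the lower bound I would argue that the union of the $Q_k$ already fills up almost all of $P_{\Vec{a}}$ when $\beta$ is close to $1$. The natural route: $\bigcup_{k=0}^\infty Q_k \supseteq$ (the ``cone-like'' body $\bigcup_{t\in[0,1]}C((1-t)\Vec{a},t)$) minus a thin shell, because consecutive scales $\beta^k$ and $\beta^{k+1}$ leave gaps, but those gaps are controlled by $1-\beta$. More precisely, for any $t\in(0,1]$ pick $k$ with $\beta^{k+1} < t \le \beta^k$; then a point of $C((1-t)\Vec{a},t)$ is within Hausdorff (or $\ell_1$) distance $\Order((1-\beta)(t\|\Vec{a}\|_1 + t)) = \Order((1-\beta)(\|\Vec{a}\|_1+1))$ of a point of $Q_k$ (the center moves by $|\beta^k - t|\,\|\Vec{a}\|_1$ and the radius changes by $|\beta^k - t|$). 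Hence $\bigcup_{t}C((1-t)\Vec{a},t) \subseteq \bigl(\bigcup_k Q_k\bigr) \oplus B_1(\Order((1-\beta)\|\Vec{a}\|_1))$, and since $\bigcup_t C((1-t)\Vec{a},t)$ is exactly $P_{\Vec{a}}$ (or at least a body of the same volume, up to the lower-dimensional facet between the $\pm\Vec{e}_i$'s — this needs a line of justification), a standard ``inner volume vs.\ $\epsilon$-neighborhood'' estimate for a convex body contained in a bounded region gives $\Vol(P_{\Vec{a}}) - \Vol(\bigcup_k Q_k) \le \Order(n\cdot(1-\beta)\|\Vec{a}\|_1)\cdot\Vol(P_{\Vec{a}})$, which is at most $c_1\epsilon\,\Vol(P_{\Vec{a}})$ under the hypothesis $1-\beta \le c_1\epsilon/(n\|\Vec{a}\|_1)$. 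The factor $n$ here is the usual price of passing from a Hausdorff-distance bound to a volume bound for an $n$-dimensional convex body (the surface-to-volume ratio), which is why the hypothesis carries $n\|\Vec{a}\|_1$ in the denominator.

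The step I expect to be the main obstacle is making the lower bound rigorous without appealing to Hausdorff-neighborhood machinery that the paper may not want to invoke. A more hands-on alternative, which I would actually pursue, is a direct scaling comparison: show $P_{\Vec{a}} \subseteq$ the image of $\bigcup_k Q_k$ under the dilation $\Vec{x} \mapsto \Vec{a} + \beta^{-1}(\Vec{x}-\Vec{a})$ centered at $\Vec{a}$ — i.e., that blowing the union up by factor $1/\beta$ about the apex $\Vec{a}$ already covers all of $P_{\Vec{a}}$, because this dilation sends $Q_{k+1}$ to $Q_k$ and sends $Q_0=C(\Vec{0},1)$ to $C(\Vec{a}-\beta^{-1}\Vec{a},\beta^{-1}) = C((1-\beta^{-1})\Vec{a},\beta^{-1}) \supseteq P_{\Vec{a}}$ (the last containment needs checking: it should hold because this cross-polytope has radius $\beta^{-1}>1$ and is centered so as to contain $\Vec{a}$ and all $\pm\Vec{e}_i$, provided $\beta^{-1}$ is large enough relative to $\|\Vec{a}\|_1$, which is where a constraint on $\beta$ — and possibly an extra term — enters). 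If that containment holds, then $\Vol(P_{\Vec{a}}) \le \beta^{-n}\Vol(\bigcup_k Q_k)$, so $\Vol(\bigcup_k Q_k) \ge \beta^n \Vol(P_{\Vec{a}}) \ge (1-n(1-\beta))\Vol(P_{\Vec{a}}) \ge (1-c_1\epsilon)\Vol(P_{\Vec{a}})$ using $\beta^n \ge 1-n(1-\beta)$ and the hypothesis on $1-\beta$ (noting $n(1-\beta)\le c_1\epsilon/\|\Vec{a}\|_1 \le c_1\epsilon$ since $\|\Vec{a}\|_1\ge 1$). Reconciling the ``$\beta^{-1}$ must be large'' requirement with ``$\beta$ close to $1$'' is the delicate point — it likely forces either a harmless inflation of $P_{\Vec{a}}$ absorbed into the other side of the argument, or a slightly different dilation center, and pinning down the exact statement that makes both directions go through is the crux of the proof.
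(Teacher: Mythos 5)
Your upper bound is fine and essentially matches the paper's Lemma~\ref{lemma:subset}: every $Q_k$ sits inside the slice $C((1-t)\Vec{a},t)$ with $t=\beta^k$, and $\bigcup_{t\in[0,1]}C((1-t)\Vec{a},t)=P_{\Vec{a}}$, so $\bigcup_k Q_k\subseteq P_{\Vec{a}}$. The lower bound is where the proposal breaks down, and you correctly flag it as the crux --- but neither of your two sketches closes the gap.

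Your ``direct scaling'' attempt dilates by $\beta^{-1}$ about $\Vec{a}$. That map sends $Q_{k+1}$ to $Q_k$, so $\phi\!\left(\bigcup_{k\ge 0}Q_k\right)=Q_{-1}\cup\bigcup_{k\ge0}Q_k$ where $Q_{-1}=C((1-\beta^{-1})\Vec{a},\beta^{-1})$. Two things go wrong. First, $Q_{-1}\not\supseteq P_{\Vec{a}}$: the check $\Vec{a}\in Q_{-1}$ reduces to $\|\Vec{a}\|_1\le 1$, which fails for integer $\Vec{a}$; and since $\beta$ is close to $1$, the radius $\beta^{-1}$ is barely more than $1$, so there is no room to ``absorb'' this. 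Second, and more fundamentally, the dilation about $\Vec{a}$ merely re-indexes the balls --- it does \emph{not} fill the lens-shaped regions $\conv(Q_k\cup Q_{k+1})\setminus(Q_k\cup Q_{k+1})$, which are exactly the gaps that account for $\Vol(P_{\Vec{a}})-\Vol(\bigcup_k Q_k)$. So $P_{\Vec{a}}\subseteq\phi(\bigcup_k Q_k)$ is false, and the containment argument collapses. Your Hausdorff-neighborhood sketch has the right intuition (gaps of width $\Order((1-\beta)\|\Vec{a}\|_1)$, pay a factor $n$ to convert to volume) but, as you say, it is not a proof and the ``inner-volume vs.\ $\epsilon$-neighborhood'' bound you invoke is not elementary for this kind of non-convex union.

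What the paper does instead is inflate each $Q_k$ \emph{in place}: keep the center $(1-\beta^k)\Vec{a}$ but enlarge the radius by the factor $1+\Delta$ with $\Delta=(1-\beta)\|\Vec{a}\|_1$, giving $R_k=C((1-\beta^k)\Vec{a},(1+\Delta)\beta^k)$. A short triangle-inequality computation shows $R_k\supseteq Q_k\cup Q_{k+1}$, hence (by convexity of $R_k$) $R_k\supseteq\conv(Q_k\cup Q_{k+1})$, and combined with the elementary fact $\bigcup_k\conv(Q_k\cup Q_{k+1})\cup\{\Vec{a}\}\supseteq P_{\Vec{a}}$ (Lemma~\ref{lemma:supset}), this gives $\Vol(\bigcup_k R_k)\ge\Vol(P_{\Vec{a}})$. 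Since $\bigcup_k R_k$ is \emph{not} a similar copy of $\bigcup_k Q_k$ (the centers were left fixed), the paper introduces $\hat{R}_k=C((1+\Delta)(1-\beta^k)\Vec{a},(1+\Delta)\beta^k)$, which \emph{is} the scaled copy, and argues $\Vol(\bigcup_k\hat{R}_k)\ge\Vol(\bigcup_k R_k)$ via $\Vol(\hat{R}_k\cap\hat{R}_{k+1})\le\Vol(R_k\cap R_{k+1})$. The volume ratio is then exactly $(1+\Delta)^{-n}\ge 1-c_1\epsilon$ under the hypothesis on $1-\beta$. This is the step your proposal is missing: the right dilation is the \emph{radial} inflation with centers fixed, not the dilation centered at the apex $\Vec{a}$; and one must then relate that inflated union to a genuine scaled copy via the extra auxiliary family $\hat{R}_k$.
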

 We remark that $P_{\Vec{a}}$ defined by~\eqref{def:P_a} is also described by 
\begin{eqnarray}
  P_{\Vec{a}} = \conv(C(\Vec{0},1) \cup \{\Vec{a}\}) 
\label{def:P_a2}
\end{eqnarray}
 using $C(\Vec{0},1)$. 
  Figure~\ref{fig:crosspolytope_infseq} illustrates the approximation of 
  $P_{\Vec{a}}$ by this infinite sequence of cross-polytopes. 
\begin{figure}[t]
\begin{center}
  \includegraphics[clip, width=160pt]{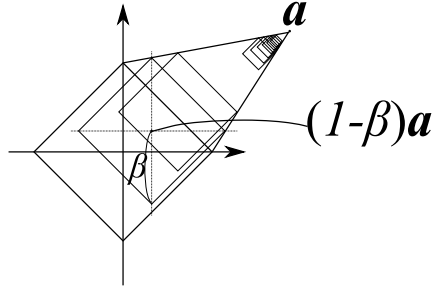}
\end{center}
\caption{Approximating $P_{\Vec{a}}$ by an infinite sequence of cross-polytopes. }
\label{fig:crosspolytope_infseq}
\end{figure}
 The second inequality in Lemma~\ref{lem:geometric-series} 
  is relatively easy by the following lemma. 
\begin{lemma}\label{lemma:subset}
\begin{align*}
 \bigcup_{k=0}^\infty Q_k\subseteq P_{\Vec{a}}.
\end{align*}
\end{lemma}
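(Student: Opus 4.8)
The plan is to show that each $Q_k \subseteq P_{\Vec{a}}$ individually; since a union of subsets of $P_{\Vec{a}}$ is again a subset, this suffices. Using the description \eqref{def:P_a2}, namely $P_{\Vec{a}} = \conv(C(\Vec{0},1) \cup \{\Vec{a}\})$, it is enough to exhibit each point of $Q_k$ as a convex combination of $\Vec{a}$ and a point of $C(\Vec{0},1)$. The natural guess is that the whole cross-polytope $Q_k = C((1-\beta^k)\Vec{a},\beta^k)$ is the image of $C(\Vec{0},1)$ under the affine contraction $\Vec{x} \mapsto (1-\beta^k)\Vec{a} + \beta^k \Vec{x}$, i.e. $Q_k = (1-\beta^k)\Vec{a} + \beta^k C(\Vec{0},1)$. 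This identity is immediate from the definition \eqref{def:C(c,r)} of a cross-polytope, since scaling all $2n$ generating vertices $\pm\Vec{e}_i$ by $\beta^k$ and translating by $(1-\beta^k)\Vec{a}$ exactly produces the vertex set $\{(1-\beta^k)\Vec{a} \pm \beta^k \Vec{e}_i\}$ of $Q_k$, and the convex hull commutes with affine maps.

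Given that identity, the key step is the following: for any $\Vec{y} \in C(\Vec{0},1)$, the point $(1-\beta^k)\Vec{a} + \beta^k \Vec{y}$ is the convex combination $\lambda \Vec{a} + (1-\lambda)\Vec{y}$ with $\lambda = 1 - \beta^k$, and since $0 < \beta < 1$ we have $0 \le \lambda < 1$, so this is a genuine convex combination of $\Vec{a} \in P_{\Vec{a}}$ and $\Vec{y} \in C(\Vec{0},1) \subseteq P_{\Vec{a}}$. Because $P_{\Vec{a}}$ is convex, this point lies in $P_{\Vec{a}}$. Hence $Q_k \subseteq P_{\Vec{a}}$ for every $k \ge 0$, and therefore $\bigcup_{k=0}^\infty Q_k \subseteq P_{\Vec{a}}$.

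I do not expect a genuine obstacle here; the lemma is essentially a bookkeeping statement about affine images of cross-polytopes. The only point that needs a word of care is the boundary behavior at $k=0$, where $\lambda=0$ and the combination degenerates to a point of $C(\Vec{0},1)$ itself (still in $P_{\Vec{a}}$), and the limiting case $Q_\infty = \{\Vec{a}\}$, which is trivially in $P_{\Vec{a}}$ and in any case is not part of the indexed union. A second minor check is simply that $C(\Vec{0},1)$ and $\{\Vec{a}\}$ are subsets of $P_{\Vec{a}}$, which is immediate from \eqref{def:P_a2} since $P_{\Vec{a}}$ is their convex hull. So the proof reduces to: (i) record the affine-image identity $Q_k = (1-\beta^k)\Vec{a} + \beta^k C(\Vec{0},1)$; (ii) interpret each such point as a convex combination of $\Vec{a}$ and a point of $C(\Vec{0},1)$; (iii) invoke convexity of $P_{\Vec{a}}$ and take the union over $k$.
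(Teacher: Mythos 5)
Your argument is correct and is essentially the same as the paper's: the paper defines the map $\eta_k(\Vec{x}) = (\Vec{x}-(1-\beta^k)\Vec{a})/\beta^k$ from $Q_k$ into $Q_0 = C(\Vec{0},1)$ and observes $\Vec{x} = \beta^k\eta_k(\Vec{x}) + (1-\beta^k)\Vec{a}$, which is precisely your convex-combination identity read in the other direction (you start from $\Vec{y}\in C(\Vec{0},1)$ and scale forward, the paper starts from $\Vec{x}\in Q_k$ and scales back). No substantive difference.
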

\begin{proof}
 Notice that $P_{\Vec{a}}$ is the convex hull of $Q_0$ and $\Vec{a}$ 
    (see~\eqref{def:P_a} for the definition of $P_{\Vec{a}}$). 
 We give a map $\eta_k \colon Q_k \to Q_0$, and show that 
  any $\Vec{x} \in Q_k$ is in the line segment between $\eta_k(\Vec{x})$ and $\Vec{a}$. 
 Notice that both $Q_k$ and $Q_0$ are $L_1$-balls, meaning that they are similar, 
  and our map $\eta_k$ is a natural correspondence between them. 
 Let 
\begin{eqnarray}
 \eta_k(\Vec{x}) 
  = \frac{\Vec{x}-(1-\beta^k)\Vec{a}}{\beta^k}.
\label{eq:eta_k}
\end{eqnarray}
 Then, $\|\eta_k(\Vec{x}) -\Vec{0}\|_1 \leq 1$ holds 
  since $\|\Vec{x}-(1-\beta^k)\Vec{a}\|_1 \leq \beta^k$ 
  by the assumption that $\Vec{x} \in Q_k$ (recall the definition \eqref{def:Q_k} of $Q_k$). 
 This implies that $\eta_k(\Vec{x}) \in Q_0$. 
 Notice that \eqref{eq:eta_k} implies that 
\begin{eqnarray*}
 \Vec{x} = \beta^k \eta_k(\Vec{x}) + (1-\beta^k)\Vec{a}
\end{eqnarray*}
 where $0 < \beta^k < 1$, 
 meaning that $\Vec{x}$ is given by a convex combination of $\eta_k(\Vec{x})$ and $\Vec{a}$. 
\end{proof}

 Next, we show the first inequality in Lemma~\ref{lem:geometric-series}. 
 As a preliminary, we show the following. 
\begin{lemma}\label{lemma:supset}
\begin{eqnarray*}
 \bigcup_{k=0}^{\infty} \conv (Q_k \cup Q_{k+1}) \cup \{\Vec{a}\} \supseteq P_{\Vec{a}}.
\end{eqnarray*}
\end{lemma}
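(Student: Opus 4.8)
The plan is to use the description $P_{\Vec{a}}=\conv(Q_0\cup\{\Vec{a}\})$ from \eqref{def:P_a2} together with the observation, already implicit in the proof of Lemma~\ref{lemma:subset}, that each $Q_k$ is the image of $Q_0$ under the contraction toward $\Vec{a}$ with ratio $\beta^k$. Given a point $\Vec{x}\in P_{\Vec{a}}$, I would write it as a convex combination $\Vec{x}=s\Vec{y}+(1-s)\Vec{a}$ with $\Vec{y}\in Q_0$ and $s\in[0,1]$, and then locate $\Vec{x}$ inside one of the ``shells'' $\conv(Q_k\cup Q_{k+1})$ according to the magnitude of $s$.

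First I would dispose of the degenerate case $s=0$, where $\Vec{x}=\Vec{a}$ and hence $\Vec{x}$ belongs to the term $\{\Vec{a}\}$ on the right-hand side. So assume $0<s\le 1$. Because $0<\beta<1$, the intervals $[\beta^{k+1},\beta^{k}]$ for $k=0,1,2,\dots$ cover $(0,1]$, so there is an index $k\ge 0$ with $\beta^{k+1}\le s\le\beta^{k}$; fix such a $k$ and write $s=\lambda\beta^{k+1}+(1-\lambda)\beta^{k}$ for a suitable $\lambda\in[0,1]$.

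The key step is then to feed the same $\Vec{y}$ through the two contractions: set $\Vec{u}=\beta^{k+1}\Vec{y}+(1-\beta^{k+1})\Vec{a}$ and $\Vec{v}=\beta^{k}\Vec{y}+(1-\beta^{k})\Vec{a}$. Exactly as in the proof of Lemma~\ref{lemma:subset}, for $\Vec{y}\in Q_0$ one has $\|\beta^{j}\Vec{y}+(1-\beta^{j})\Vec{a}-(1-\beta^{j})\Vec{a}\|_1=\beta^{j}\|\Vec{y}\|_1\le\beta^{j}$, so $\Vec{u}\in Q_{k+1}$ and $\Vec{v}\in Q_{k}$. A one-line computation shows that $\lambda\Vec{u}+(1-\lambda)\Vec{v}$ has $\Vec{y}$-coefficient $\lambda\beta^{k+1}+(1-\lambda)\beta^{k}=s$ and $\Vec{a}$-coefficient $1-s$, hence equals $\Vec{x}$; therefore $\Vec{x}\in\conv(Q_k\cup Q_{k+1})$. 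Taking the union over $k$ (and over all $\Vec{x}$) yields the claimed inclusion.

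I do not expect a genuine obstacle in this argument; the only points requiring a little care are the case analysis on $s$ (in particular that $s=1$ is permitted, corresponding to $k=0$, and that every $\Vec{x}\ne\Vec{a}$ admits a representation with $s>0$) and the verification that the intervals $[\beta^{k+1},\beta^{k}]$, $k\ge 0$, indeed exhaust $(0,1]$, which is exactly where $0<\beta<1$ is used. The remaining computation that the displayed convex combination reproduces $\Vec{x}$ is routine.
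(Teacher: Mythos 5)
Your proof is correct and follows essentially the same approach as the paper: you parametrize points of $P_{\Vec{a}}$ along the segment from $\Vec{a}$ to a point $\Vec{y}\in Q_0$, and locate them between the images of $\Vec{y}$ under two consecutive contractions $\eta_{k}^{-1}$ and $\eta_{k+1}^{-1}$ toward $\Vec{a}$. Your version is somewhat more explicit — you write out the convex-combination coefficients and the case $s=0$ — where the paper leaves these details to the reader, but the argument is the same.
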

 In fact, Lemma~\ref{lemma:supset} holds by equality, but we only show $\supseteq$ here. 
\begin{proof}
 Suppose $\Vec{x} \in P_{\Vec{a}}$. 
 Since $P_{\Vec{a}} = \conv(C(\Vec{0},1) \cup \{\Vec{a}\})$, 
  it is not difficult to see that 
  there is a $\Vec{y}_0 \in C(\Vec{0},1)$ such that 
  $\Vec{x}$ is in the line segment $\overline{\Vec{a},\Vec{y}}$ between $\Vec{a}$ and $\Vec{y}$. 
 Using bijective maps $\eta_k$ for $k=1,2,\ldots$ defined by \eqref{eq:eta_k}, 
  $\eta_k^{-1}(\Vec{y})$ is in $\overline{\Vec{a},\Vec{y}}$ in order of $k$. 
 Suppose $\Vec{x}$ is between $\eta_{k^*}^{-1}(\Vec{y})$ and $\eta_{k^*+1}^{-1}(\Vec{y})$, 
  then $\Vec{x} \in \conv(Q_{k^*} \cup Q_{k^*+1})$. 
 We obtain the claim. 
\end{proof}

\begin{lemma}\label{lemma:upper}
 If $1-\beta \leq \dfrac{c_1\epsilon}{n \|\Vec{a}\|_1}$, then  
\begin{align*}
 \Vol\left(\bigcup_{k=0}^\infty Q_k\right) \geq (1-c_1\epsilon)\Vol(P_{\Vec{a}}). 
\end{align*}
\end{lemma}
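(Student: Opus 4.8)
The plan is to run Lemma~\ref{lemma:supset} through a self‑similarity argument. Let $\phi$ denote the homothety $\phi(\Vec x)=(1-\beta)\Vec a+\beta\Vec x$ centred at $\Vec a$; by \eqref{def:Q_k} we have $\phi(\Vec a)=\Vec a$ and $\phi(Q_k)=Q_{k+1}$, so $\phi$ carries the pair $(Q_k,Q_{k+1})$ onto $(Q_{k+1},Q_{k+2})$. Put
\begin{align*}
 R_k\ \defeq\ \conv(Q_k\cup Q_{k+1})\setminus(Q_k\cup Q_{k+1}),
\end{align*}
the ``sliver'' that the convex hull adds to the union of two consecutive beads. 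Since $\phi$ is an affine bijection, $R_k=\phi^k(R_0)$, hence $\Vol(R_k)=\beta^{kn}\Vol(R_0)$ and $\sum_{k\ge0}\Vol(R_k)=\Vol(R_0)/(1-\beta^n)$. As $\conv(Q_k\cup Q_{k+1})=Q_k\cup Q_{k+1}\cup R_k$ and $\bigcup_kQ_{k+1}\subseteq\bigcup_kQ_k$, we get $\bigcup_k\conv(Q_k\cup Q_{k+1})=\bigl(\bigcup_kQ_k\bigr)\cup\bigl(\bigcup_kR_k\bigr)$, so Lemma~\ref{lemma:supset} (together with $\Vol(\{\Vec a\})=0$) yields $\Vol(P_{\Vec a})\le\Vol\bigl(\bigcup_{k\ge0}Q_k\bigr)+\Vol(R_0)/(1-\beta^n)$. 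Thus it suffices to prove $\Vol(R_0)/(1-\beta^n)\le c_1\epsilon\,\Vol(P_{\Vec a})$. Writing $\gamma\defeq1-\beta$, the hypothesis gives $n\gamma\le n\|\Vec a\|_1\gamma\le c_1\epsilon<1$ (we may assume $\|\Vec a\|_1\ge1$, the case $\Vec a=\Vec 0$ being trivial), so truncating the alternating expansion of $(1-\gamma)^n$ gives $1-\beta^n\ge n\gamma-\binom n2\gamma^2\ge\tfrac12 n\gamma$; hence it is enough to show
\begin{align*}
 \Vol(R_0)\ \le\ \tfrac12\,c_1\epsilon\,n\gamma\,\Vol(P_{\Vec a}).
\end{align*}

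For the right‑hand side I would use two lower bounds on $\Vol(P_{\Vec a})$: the trivial $\Vol(P_{\Vec a})\ge\Vol(Q_0)=2^n/n!$ (from \eqref{def:P_a2}), and, since every facet $C(\Vec0,1)\cap\{\langle\Vec x,\Vec\sigma\rangle=1\}$ with $\langle\Vec a,\Vec\sigma\rangle>1$ is visible from $\Vec a$, the estimate obtained by adding to $Q_0$ the cones from $\Vec a$ over those facets, which after summing gives $\Vol(P_{\Vec a})=\Omega\bigl(\|\Vec a\|_1\,2^n/(\sqrt n\,n!)\bigr)$ once $\|\Vec a\|_1=\Omega(\sqrt n)$. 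The crux is the upper bound on $\Vol(R_0)=\Vol\bigl(\conv(Q_0\cup Q_1)\setminus(Q_0\cup Q_1)\bigr)$, and the essential point --- which the excerpt's own warning about subtraction foreshadows --- is that this sliver is \emph{quadratically} small in the displacement $\gamma\|\Vec a\|_1$, not merely linearly. Indeed $Q_0=C(\Vec0,1)$ and $Q_1=\phi(Q_0)=C((1-\beta)\Vec a,\beta)$ are cross‑polytopes whose boundaries are scalings‑plus‑translates of one another and which overlap heavily (their Hausdorff distance is $\Order(\gamma\|\Vec a\|_1)\le\Order(c_1\epsilon/n)$), while a point of $R_0$ lies in $\conv(Q_0\cup Q_1)$ yet outside \emph{both} bodies; so $R_0$ is pinched to $0$ along the $(n-2)$‑dimensional set $\partial Q_0\cap\partial Q_1$ and, transverse to it, is a ``bigon'' both of whose transverse dimensions are $\Order(\gamma\|\Vec a\|_1)$. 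Concretely, writing $\conv(Q_0\cup Q_1)=\bigcup_{\lambda\in[0,1]}(\lambda Q_0+(1-\lambda)Q_1)$, bounding the protrusion of $\lambda Q_0+(1-\lambda)Q_1$ beyond $Q_0\cup Q_1$ in the directions across $\partial Q_0\cap\partial Q_1$ (it protrudes by $\Order(\lambda(1-\lambda)\gamma\|\Vec a\|_1)$, and the relevant $(n-2)$‑content of $\partial Q_0\cap\partial Q_1$ is controlled by the simplicial structure of $C(\Vec0,1)$), and integrating over $\lambda$, one expects a bound of the form $\Vol(R_0)\le C\,(\gamma\|\Vec a\|_1)^2\,(2^n/n!)\,P(n)$ with $C$ an absolute constant and $P$ a fixed polynomial.

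Feeding that estimate into the displayed inequality and spending $\gamma\|\Vec a\|_1\le c_1\epsilon/n$ on a single factor, the target $\Vol(R_0)\le\tfrac12 c_1\epsilon n\gamma\Vol(P_{\Vec a})$ reduces to $2C\|\Vec a\|_1\,P(n)\,2^n\le n^2\,n!\,\Vol(P_{\Vec a})$, which the two lower bounds above supply --- the first when $\|\Vec a\|_1=\Order(\sqrt n)$, the second when $\|\Vec a\|_1$ is larger --- provided $P(n)=\Order(n^{3/2})$. The hard part will therefore be precisely the quadratic estimate on $\Vol(R_0)$: one must show that the convex‑hull sliver of two overlapping, nearly congruent cross‑polytopes is of second order in their separation, with a geometric constant only polynomial in $n$. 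As a sanity check, the equivalent polar picture --- placing the centre at $\Vec a$, the beads $\phi^k(Q_0)$ cover every ray except near the tangent directions, where consecutive beads leave a geometric‑series gap contributing $\Theta\bigl((\gamma\|\Vec a\|_1)^2\bigr)$ per ray --- reproduces both the quadratic scaling and its tightness.
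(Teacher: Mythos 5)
Your reduction up to the point where it suffices to show $\Vol(R_0)/(1-\beta^n)\le c_1\epsilon\,\Vol(P_{\Vec a})$ is sound: the homothety $\phi$ indeed carries the sliver $R_k=\conv(Q_k\cup Q_{k+1})\setminus(Q_k\cup Q_{k+1})$ to $R_{k+1}$, so $\sum_k\Vol(R_k)=\Vol(R_0)/(1-\beta^n)$, and combining this with Lemma~\ref{lemma:supset} gives the displayed sufficient condition. The estimate $1-\beta^n\ge\tfrac12 n\gamma$ under $n\gamma<1$ is also fine.

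However, after that you reduce everything to the unproven claim that the sliver volume is \emph{quadratic} in the displacement, of the form $\Vol(R_0)\le C(\gamma\|\Vec a\|_1)^2\,(2^n/n!)\,P(n)$ with $P(n)=\Order(n^{3/2})$. You explicitly flag this as ``the hard part'' and give only a heuristic (a bigon pinched along $\partial Q_0\cap\partial Q_1$, both transverse widths $\Order(\Delta)$). This is not a proof: the $(n-2)$-content of $\partial Q_0\cap\partial Q_1$, the fact that the bigon cross-sections vary along the pinch set, and the polynomial control $P(n)=\Order(n^{3/2})$ are all nontrivial and unestablished. A first-order bound $\Vol(R_0)=\Order(\Delta\cdot 2^n/n!)$ is easy but, as your own arithmetic shows, does not close the argument even with the refined lower bound on $\Vol(P_{\Vec a})$; the second-order bound is genuinely needed and genuinely missing. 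Note also that the lower bound $\Vol(P_{\Vec a})=\Omega(\|\Vec a\|_1 2^n/(\sqrt n\,n!))$ is itself asserted without proof, though it is the lesser of the two gaps.

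The paper sidesteps all of this with a cleaner device. Define $R_k:=C\bigl((1-\beta^k)\Vec a,(1+\Delta)\beta^k\bigr)$, an inflated copy of $Q_k$, and verify directly (via the $L_1$ triangle inequality) that $R_k\supseteq\conv(Q_k\cup Q_{k+1})$; then $\bigcup_k R_k\cup\{\Vec a\}\supseteq P_{\Vec a}$. Next, translate each $R_k$ outward along $\Vec a$ to $\hat R_k:=C\bigl((1+\Delta)(1-\beta^k)\Vec a,(1+\Delta)\beta^k\bigr)$, which is exactly $(1+\Delta)$ times $Q_k$ scaled from the origin; the translation only decreases consecutive overlaps, so $\Vol\bigl(\bigcup\hat R_k\bigr)\ge\Vol\bigl(\bigcup R_k\bigr)$, while $\Vol\bigl(\bigcup\hat R_k\bigr)=(1+\Delta)^n\Vol\bigl(\bigcup Q_k\bigr)$ by similarity. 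Hence $\Vol\bigl(\bigcup Q_k\bigr)/\Vol(P_{\Vec a})\ge(1+\Delta)^{-n}\ge 1-c_1\epsilon$. This is a \emph{multiplicative} comparison that needs only the first-order relation $\Delta\le c_1\epsilon/n$; no second-order sliver estimate, no lower bound on $\Vol(P_{\Vec a})$, and no Hausdorff-distance heuristics are needed. You should replace the quadratic sliver program with an argument of that type.
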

\begin{proof}
 For convenience, let 
\begin{eqnarray}
 \Delta &\defeq& (1-\beta)\|\Vec{a}\|_1 \label{eq:Delta}\\
 &\leq& \dfrac{c_1\epsilon}{n \|\Vec{a}\|_1}\|\Vec{a}\|_1 
  =  \dfrac{c_1\epsilon}{n}  \nonumber
\end{eqnarray}
 Notice that $\Delta$ is the distance between the centers of cross-polytopes $Q_0$ and $Q_1$. 
 Let 
\begin{eqnarray*}
 R_k  &\defeq& C((1-\beta^k)\Vec{a},(1+\Delta)\beta^k)
\end{eqnarray*}
 e.g., $R_0 = C(\Vec{0},1+\Delta)$. 
 We claim that $R_k \supseteq \conv (Q_k \cup Q_{k+1})$, 
  which implies $\bigcup_{k=0}^{\infty} R_k \cup \{\Vec{a}\} \supseteq P_{\Vec{a}}$ by Lemma~\ref{lemma:supset}. 
 It is not difficult to see that 
  $R_k  = C((1-\beta^k)\Vec{a},(1+\Delta)\beta^k) \supseteq C((1-\beta^k)\Vec{a},\beta^k) = Q_k$. 
 Next, we show that 
  $R_k  = C((1-\beta^k)\Vec{a},(1+\Delta)\beta^k) 
   \supseteq C((1-\beta^{k+1})\Vec{a},\beta^{k+1}) = Q_{k+1}$.  
 Let 
  $\Vec{x} \in Q_{k+1}$, then 
\begin{eqnarray*}
 \|\Vec{x} - (1-\beta^{k+1})\Vec{a}\|_1
 &=&
  \|\Vec{x} - (1-\beta^k)\Vec{a} - (\beta^k -\beta^{k+1})\Vec{a} \|_1 \\
 &\geq& 
  \|\Vec{x} - (1-\beta^k)\Vec{a}\|_1 - (\beta^k -\beta^{k+1})\|\Vec{a} \|_1
\end{eqnarray*}
   and it implies that 
\begin{eqnarray*}
 \|\Vec{x} - (1-\beta^k)\Vec{a}\|_1 
 &\leq & 
  \beta^{k+1} + (\beta^k -\beta^{k+1})\|\Vec{a} \|_1 \\
 &=& 
   \beta^{k+1} + \beta^k(1 -\beta)\|\Vec{a} \|_1 \\
 &=& 
   \beta^{k+1} + \beta^k \Delta \\
 &\leq &
  (1+\Delta)\beta^k. 
\end{eqnarray*}
 Thus $R_k \supseteq Q_{k+1}$. 
 Clearly a cross-polytope $R_k$ is convex, we obtain the claim $R_k \supseteq \conv(Q_k \cup Q_{k+1})$, 
  which implies that 
 $\Vol(\bigcup_{k=0}^{\infty} R_k) \geq \Vol(P_{\Vec{a}})$ as we prescribed. 

 Then, 
  we bound the ratio 
  $\Vol(\bigcup_{k=0}^{\infty} Q_k)/\Vol(P_{\Vec{a}})$ by 
  $\Vol(\bigcup_{k=0}^{\infty} Q_k)/\Vol(\bigcup_{k=0}^{\infty} R_k)$. 
 For convenience, let 
\begin{eqnarray*}
 \hat{R}_k 
  &\defeq& 
  C((1+\Delta)(1-\beta^k)\Vec{a},(1+\Delta)\beta^k). 
\end{eqnarray*}
 Clearly, $\Vol(\hat{R}_k) = \Vol(R_k)$. 
 It is not difficult to observe that $\Vol (\hat{R}_k \cap \hat{R}_{k+1}) \leq \Vol (R_k \cap R_{k+1})$, 
  which implies that 
  $\Vol(\bigcup_{k=0}^{\infty} \hat{R}_k) \geq \Vol(\bigcup_{k=0}^{\infty} R_k)$. 
 Furthermore, 
\begin{eqnarray}
  \Vol\left(\bigcup_{k=0}^{\infty} \hat{R}_k\right) 
  = (1+\Delta)\Vol\left(\bigcup_{k=0}^{\infty} Q_k\right)
\label{tmp20160403a}
\end{eqnarray}
 holds since $\bigcup_{k=0}^{\infty} \hat{R}_k$ and $\bigcup_{k=0}^{\infty} Q_k$ are similar. 
 Consequently, 
\begin{align*}
 \frac{\Vol\left(\bigcup_{k=0}^{\infty} Q_k\right)}{\Vol(P_{\Vec{a}})}
 &\geq 
 \frac{\Vol\left(\bigcup_{k=0}^{\infty} Q_k\right)}{\Vol\left(\bigcup_{k=0}^{\infty} R_k\right)} 
  \geq 
 \frac{\Vol\left(\bigcup_{k=0}^{\infty} Q_k\right)}{\Vol\left(\bigcup_{k=0}^{\infty} \hat{R}_k\right)} \\
 &=  \frac{1}{(1+\Delta)^n} 
  && (\mbox{by \eqref{tmp20160403a}}) \\
 &=  \frac{1}{(1+(1-\beta)\|\Vec{a}\|_1)^n} 
  && (\mbox{by \eqref{eq:Delta}}) \\
 &\geq  \frac{1}{\left(1+\dfrac{c_1\epsilon}{n\|\Vec{a}\|_1}\|\Vec{a}\|_1\right)^n} 
  && \left(\mbox{since } 1-\beta \leq \frac{c_1\epsilon}{n\|\Vec{a}\|_1} \mbox{ (hypo.)}\right)
\\
 &=  \frac{1}{\left(1+\dfrac{\epsilon}{n}\right)^n} \\
 &\geq  \left(1-\dfrac{\epsilon}{n}\right)^n \\
 &\geq  1-\epsilon. 
\end{align*} 
We obtain the claim. 
\end{proof}

 Lemma~\ref{lem:geometric-series} follows Lemmas~\ref{lemma:subset} and \ref{lemma:upper}.

\subsection{Reduction to the intersection of two cross-polytopes}\label{sec:red-two-balls}
\subsubsection{The volume of $\bigcup_{k=0}^{\infty} Q_k$}\label{sec:red-two-ballsA}
 Section~\ref{sec:red-two-ballsA} claims the following. 
\begin{lemma}\label{lem:vol-union-gem-series}
\begin{eqnarray*}
 \Vol\left(\bigcup_{k=0}^{\infty} Q_k\right)
 = 
 \frac{1}{1-\beta^n} \left(\frac{2^n}{n!} -\Vol(Q_1 \cap Q_0)\right). 
\end{eqnarray*}
\end{lemma}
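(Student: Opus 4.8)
The plan is to exploit the self-similar (geometric) structure of the sequence $Q_0, Q_1, Q_2, \ldots$. The key observation is that the map $\Vec{x} \mapsto \beta \Vec{x} + (1-\beta)\Vec{a}$ is a similarity transformation with ratio $\beta$ that sends $Q_k$ onto $Q_{k+1}$, because $Q_k = C((1-\beta^k)\Vec{a}, \beta^k)$ and applying this map to the center and radius gives exactly $C((1-\beta^{k+1})\Vec{a}, \beta^{k+1}) = Q_{k+1}$. Consequently, it maps the tail union $\bigcup_{k \geq 0} Q_k$ onto the shifted tail union $\bigcup_{k \geq 1} Q_k$, so $\Vol\bigl(\bigcup_{k \geq 1} Q_k\bigr) = \beta^n \Vol\bigl(\bigcup_{k \geq 0} Q_k\bigr)$.

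Next I would set up an inclusion–exclusion on the first term. Write $U \defeq \bigcup_{k=0}^{\infty} Q_k$ and $U' \defeq \bigcup_{k=1}^{\infty} Q_k$, so $U = Q_0 \cup U'$ and $\Vol(U) = \Vol(Q_0) + \Vol(U') - \Vol(Q_0 \cap U')$. The crucial geometric claim here is that $Q_0 \cap U' = Q_0 \cap Q_1$, i.e.\ the only part of the tail $U'$ that reaches into $Q_0$ is already contained in $Q_1$; equivalently $Q_k \cap Q_0 \subseteq Q_1$ for all $k \geq 1$. This should follow from a nesting/monotonicity property: since the centers $(1-\beta^k)\Vec{a}$ march monotonically from $\Vec{0}$ toward $\Vec{a}$ while the radii $\beta^k$ shrink geometrically, the cross-polytopes $Q_k$ for $k \geq 1$ are progressively ``pulled away'' from $Q_0$, so their intersection with $Q_0$ only shrinks. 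I expect to verify this via the $L_1$-description \eqref{def:C(c,r)2}: a point in $Q_k$ with $k \geq 2$ that also lies in $Q_0$ satisfies, after a triangle-inequality manipulation like the one already used in the proof of Lemma~\ref{lemma:upper}, the bound $\|\Vec{x} - (1-\beta)\Vec{a}\|_1 \leq \beta$, placing it in $Q_1$.

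Combining these two facts with $\Vol(Q_0) = 2^n/n!$ from \eqref{eq:vol-cp} gives
\begin{eqnarray*}
 \Vol(U) = \frac{2^n}{n!} + \beta^n \Vol(U) - \Vol(Q_0 \cap Q_1),
\end{eqnarray*}
and solving for $\Vol(U)$ yields $(1-\beta^n)\Vol(U) = \frac{2^n}{n!} - \Vol(Q_1 \cap Q_0)$, which is exactly the claimed formula. A small point to check is that $\beta^n < 1$ so the division is legitimate, which holds since $0 < \beta < 1$.

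The main obstacle I anticipate is the geometric claim $Q_k \cap Q_0 \subseteq Q_1$ for all $k \geq 1$: the similarity-scaling identity for the tail union is essentially a one-line computation, and the inclusion–exclusion is routine, but proving that no later cross-polytope in the sequence ``pokes out'' of $Q_1$ back into $Q_0$ requires an honest $L_1$-norm estimate. One clean way is to show more strongly that $Q_{k+1} \cap Q_0 \subseteq Q_k \cap Q_0$ for every $k \geq 1$ (so the intersections are nested decreasing), which reduces to showing $Q_{k+1} \subseteq Q_k$ fails in general but $Q_{k+1} \cap Q_0 \subseteq Q_k$ holds — again an application of the triangle inequality $\|\Vec{x} - (1-\beta^k)\Vec{a}\|_1 \leq \|\Vec{x} - (1-\beta^{k+1})\Vec{a}\|_1 + (\beta^k - \beta^{k+1})\|\Vec{a}\|_1$ together with the constraint from $\Vec{x} \in Q_0$ to control the right-hand side. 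I would double-check the edge behavior at $k=1$ separately since that is where the bound is tightest.
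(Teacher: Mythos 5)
Your proposal is correct and uses a genuinely different decomposition from the paper. The paper writes $\bigcup_{k=0}^{\infty} Q_k$ as a disjoint union of ``annuli'' $Q_k \setminus Q_{k+1}$ plus the single point $Q_\infty$ (Lemma~\ref{lem:vol-union-gem-seriesA}), proves via Claim~\ref{claim} the telescoping identity $(\bigcup_{k\le m}Q_k)\cap Q_{m+1}=Q_m\cap Q_{m+1}$, shows each annulus scales by $\beta^{nk}$ (Lemma~\ref{lem:vol-union-gem-seriesB}), and then sums an explicit geometric series. You instead treat the tail $U'=\bigcup_{k\ge 1}Q_k$ as a single object, observe that the affine similarity $\Vec{x}\mapsto\beta\Vec{x}+(1-\beta)\Vec{a}$ maps $U$ onto $U'$ so $\Vol(U')=\beta^n\Vol(U)$, and close the argument with one inclusion--exclusion $\Vol(U)=\Vol(Q_0)+\Vol(U')-\Vol(Q_0\cap U')$ followed by solving a fixed-point equation. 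Both routes ultimately rest on the same underlying geometric fact --- for $j\le m\le\ell$ one has $Q_j\cap Q_\ell\subseteq Q_m$, an ``intermediate ball'' property provable either via the half-space inequalities (as in the paper's Claim~\ref{claim}) or via the $L_1$-norm convexity estimate you sketch --- but you only need the special case $Q_0\cap Q_k\subseteq Q_1$ for all $k\ge 1$, while the paper needs the full family of claims indexed by $m$. Your version buys a shorter argument (the similarity step replaces both Lemma~\ref{lem:vol-union-gem-seriesA} and Lemma~\ref{lem:vol-union-gem-seriesB} at once, and the geometric series is solved implicitly rather than summed term by term); the paper's disjoint-union decomposition buys a constructive picture of the union that is reused in the subsequent error analysis. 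To make your proof airtight, spell out the one inclusion you flag as the main obstacle: for $\Vec{x}\in Q_0\cap Q_k$ with $k\ge 2$, set $\lambda=\tfrac{\beta-\beta^k}{1-\beta^k}\in[0,1]$ so that $(1-\beta)\Vec{a}=\lambda\cdot\Vec{0}+(1-\lambda)(1-\beta^k)\Vec{a}$; then
\begin{eqnarray*}
\|\Vec{x}-(1-\beta)\Vec{a}\|_1
 \le \lambda\|\Vec{x}\|_1+(1-\lambda)\|\Vec{x}-(1-\beta^k)\Vec{a}\|_1
 \le \lambda+(1-\lambda)\beta^k
 = \beta,
\end{eqnarray*}
which places $\Vec{x}$ in $Q_1$ and completes the step $Q_0\cap U'=Q_0\cap Q_1$.
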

 The first step of the proof is the following recursive formula. 
\begin{lemma}\label{lem:vol-union-gem-seriesA}
\begin{eqnarray*}
 \bigcup_{k=0}^{m} Q_k 
  &=& \left(\dot\bigcup_{k=0}^{m-1} Q_k \setminus Q_{k+1}\right) \,\dot\cup\, Q_m
\end{eqnarray*}
 where $A \,\dot\cup\, B$ denotes the disjoint union of $A$ and $B$, 
  meaning that $A \,\dot\cup\, B = A \cup B$ and $A \cap B = \emptyset$. 
\end{lemma}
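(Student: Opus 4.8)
The plan is to split the claim into two independent parts: the \emph{set} identity $\bigcup_{k=0}^{m}Q_k=\bigl(\bigcup_{k=0}^{m-1}(Q_k\setminus Q_{k+1})\bigr)\cup Q_m$, which is purely formal, and the assertion that the union on the right is \emph{disjoint}, which is where the geometry of the $Q_k$ enters. For the set identity, $\supseteq$ is immediate since each $Q_k\setminus Q_{k+1}$ and $Q_m$ sit inside $\bigcup_{k=0}^{m}Q_k$; for $\subseteq$, given $\Vec{x}$ in the left-hand union I would take $k^{*}$ to be the \emph{largest} index in $\{0,\dots,m\}$ with $\Vec{x}\in Q_{k^{*}}$. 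If $k^{*}=m$ then $\Vec{x}$ belongs to the $Q_m$ block; otherwise $k^{*}\le m-1$ and, by maximality of $k^{*}$, $\Vec{x}\notin Q_{k^{*}+1}$, so $\Vec{x}\in Q_{k^{*}}\setminus Q_{k^{*}+1}$, which is one of the blocks on the right.

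The heart of the matter is a ``betweenness'' property of the sequence: for all indices $i\le k\le j$ one has $Q_i\cap Q_j\subseteq Q_k$. To see this, write $\Vec{c}_k=(1-\beta^k)\Vec{a}$ for the center of $Q_k=C(\Vec{c}_k,\beta^k)$ and note that all centers lie on the segment from $\Vec{0}$ to $\Vec{a}$. Assuming $i<j$ (the case $i=j$ being trivial), put $\lambda=(\beta^{k}-\beta^{j})/(\beta^{i}-\beta^{j})$; since $\beta^{j}\le\beta^{k}\le\beta^{i}$ one has $\lambda\in[0,1]$, and a short computation gives both $\Vec{c}_k=\lambda\Vec{c}_i+(1-\lambda)\Vec{c}_j$ and $\beta^{k}=\lambda\beta^{i}+(1-\lambda)\beta^{j}$. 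Hence, for $\Vec{x}\in Q_i\cap Q_j$, convexity (the triangle inequality) of $\|\cdot\|_1$ yields $\|\Vec{x}-\Vec{c}_k\|_1=\|\lambda(\Vec{x}-\Vec{c}_i)+(1-\lambda)(\Vec{x}-\Vec{c}_j)\|_1\le\lambda\beta^{i}+(1-\lambda)\beta^{j}=\beta^{k}$, i.e.\ $\Vec{x}\in Q_k$. The special case $k=i+1$ reads: $Q_i\cap Q_j\subseteq Q_{i+1}$ whenever $i<j$.

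With this in hand the disjointness follows quickly. If $\Vec{x}\in(Q_i\setminus Q_{i+1})\cap(Q_j\setminus Q_{j+1})$ for some $i<j$, then $\Vec{x}\in Q_i\cap Q_j\subseteq Q_{i+1}$, contradicting $\Vec{x}\notin Q_{i+1}$; and if $\Vec{x}\in(Q_i\setminus Q_{i+1})\cap Q_m$ with $0\le i\le m-1$, then $\Vec{x}\in Q_i\cap Q_m\subseteq Q_{i+1}$, the same contradiction. Hence the blocks $Q_0\setminus Q_1,\dots,Q_{m-1}\setminus Q_m,Q_m$ are pairwise disjoint, which is exactly the $\dot\cup$ assertion. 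The one point that needs care---and the step I expect to be the real obstacle---is that the $Q_k$ are \emph{not} nested: since $\|\Vec{a}\|_1$ is typically much larger than $1$, one has $Q_{k+1}\not\subseteq Q_k$ in general, so a naive telescoping argument is unavailable and the betweenness property of the previous paragraph is the correct substitute.
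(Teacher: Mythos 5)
Your proof is correct, and it takes a genuinely different route from the paper's. The paper argues by induction on $m$, with the inductive step resting on Claim~\ref{claim}, namely $\bigl(\bigcup_{k=0}^{m}Q_k\bigr)\cap Q_{m+1}=Q_m\cap Q_{m+1}$ (equivalently $Q_k\cap Q_{m+1}\subseteq Q_m$ for $k\leq m$), which it proves from the halfspace description~\eqref{def:C(c,r)3} of a cross-polytope: for each $\Vec{\sigma}\in\{-1,1\}^n$, $\langle\Vec{x}-\Vec{a},\Vec{\sigma}\rangle\leq\min\{\beta^{k}q,\beta^{m+1}q\}\leq\beta^{m}q$ with $q=1-\langle\Vec{a},\Vec{\sigma}\rangle$. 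You instead give a direct, non-inductive proof, separating the (purely formal) set identity---handled by choosing the largest index $k^*$ with $\Vec{x}\in Q_{k^*}$---from the disjointness, which you derive from the cleaner and slightly more general ``betweenness'' statement $Q_i\cap Q_j\subseteq Q_k$ for $i\leq k\leq j$. Your betweenness proof uses the $L_1$-metric description~\eqref{def:C(c,r)2} directly: the centers $\Vec{c}_k$ and radii $\beta^k$ are simultaneously affine in $\lambda=(\beta^k-\beta^j)/(\beta^i-\beta^j)$, so the triangle inequality does the rest. Both routes hinge on the same geometric fact; the paper's version is specialized to the indices $(k,m,m{+}1)$ and packaged inside an induction, whereas yours isolates the invariant in its natural generality. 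A small aesthetic bonus of your approach is that it makes explicit which part of the lemma is set-theoretic bookkeeping and which part actually uses that the $Q_k$ come from a geometric sequence of cross-polytopes; your closing remark that the $Q_k$ are not nested (so naive telescoping fails) correctly identifies why the betweenness property is the right substitute.
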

 Lemma~\ref{lem:vol-union-gem-seriesA} is seemingly trivial, where 
 the point is the following claim. 
 
\begin{claim}\label{claim}
\begin{eqnarray*}
 \left(\bigcup_{k=0}^{m} Q_k\right) \cap Q_{m+1} 
 =  Q_m \cap Q_{m+1}.
\end{eqnarray*}
\end{claim}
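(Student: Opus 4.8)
The plan is to prove the inclusion $\left(\bigcup_{k=0}^{m} Q_k\right) \cap Q_{m+1} \subseteq Q_m \cap Q_{m+1}$; the reverse inclusion is immediate since $Q_m \subseteq \bigcup_{k=0}^m Q_k$. Distributing the intersection over the union, it suffices to show that $Q_k \cap Q_{m+1} \subseteq Q_m$ for every $k \in \{0,1,\ldots,m\}$, because then each term $Q_k \cap Q_{m+1}$ is contained in $Q_m \cap Q_{m+1}$. Intuitively this says the cross-polytopes are nested ``along the sequence'' in the sense that once you are inside the small ball $Q_{m+1}$, being inside any earlier $Q_k$ forces you to already be inside the immediately preceding $Q_m$. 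The key structural fact is monotonicity: I would first establish the chain $Q_{k} \cap Q_{k+1} \subseteq Q_{k+1}$ together with the claim that $Q_j \cap Q_{m+1} \subseteq Q_{j+1} \cap Q_{m+1}$ whenever $j < m$, and then iterate.

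Concretely, fix $k \le m$ and take $\Vec{x} \in Q_k \cap Q_{m+1}$. I want to show $\Vec{x} \in Q_m$, i.e.\ $\|\Vec{x} - (1-\beta^m)\Vec{a}\|_1 \le \beta^m$. Write $\Vec{x} - (1-\beta^m)\Vec{a} = \bigl(\Vec{x} - (1-\beta^{m+1})\Vec{a}\bigr) + (\beta^{m+1}-\beta^m)\Vec{a}$ and similarly relate the center of $Q_k$ to that of $Q_m$. The membership $\Vec{x}\in Q_{m+1}$ gives $\|\Vec{x} - (1-\beta^{m+1})\Vec{a}\|_1 \le \beta^{m+1}$, while $\Vec{x}\in Q_k$ gives $\|\Vec{x} - (1-\beta^{k})\Vec{a}\|_1 \le \beta^{k}$. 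The geometric idea is that $\Vec{x}$ lies in the ``cone'' spanned by $\Vec{a}$ and the small ball $Q_{m+1}$, and the center $(1-\beta^m)\Vec{a}$ with radius $\beta^m$ is precisely the cross-section of that cone at parameter $\beta^m$; since $\Vec{x}$ sits between the cross-sections at $\beta^{m+1}$ and $\beta^k \ge \beta^m$ (for $k\le m$), it must land inside $Q_m$. Making this rigorous, I would use the maps $\eta_j$ from \eqref{eq:eta_k}: I would argue that $\eta_{m+1}(\Vec{x})$ and $\eta_k(\Vec{x})$ both lie in $Q_0 = C(\Vec{0},1)$, and that the segment from $\Vec{a}$ through $\Vec{x}$ meets $Q_0$ in a single point $\Vec{y}$ with $\eta_j^{-1}(\Vec{y})$ traversing that segment monotonically in $j$; since $\Vec{x}$ is between $\eta_{m+1}^{-1}(\Vec{y})$ and $\eta_{k}^{-1}(\Vec{y})$ and $k \le m \le m+1$, convexity of the ordering forces $\Vec{x} \in \conv\{\eta_m^{-1}(\Vec{y})\} = \{\eta_m^{-1}(\Vec{y})\}$... more carefully, $\Vec{x}$ lies on the subsegment between $\eta_{m}^{-1}(\Vec{y})$-side endpoints, hence $\Vec{x}\in Q_m$.

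Alternatively, and perhaps more cleanly, I would avoid the segment argument and do a direct $\ell_1$ computation: from $\Vec{x}\in Q_{m+1}$ and the triangle inequality,
\begin{align*}
\|\Vec{x} - (1-\beta^{m})\Vec{a}\|_1
&\le \|\Vec{x} - (1-\beta^{m+1})\Vec{a}\|_1 + (\beta^{m} - \beta^{m+1})\|\Vec{a}\|_1\\
&\le \beta^{m+1} + \beta^{m}(1-\beta)\|\Vec{a}\|_1,
\end{align*}
which is not obviously $\le \beta^m$ unless $(1-\beta)\|\Vec{a}\|_1$ is small — so this crude bound alone does not suffice, and the convexity/cone argument genuinely is needed. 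I expect the main obstacle to be exactly this point: the triangle inequality is too lossy, and one must exploit that $\Vec{x}$ belongs to \emph{both} $Q_k$ and $Q_{m+1}$ simultaneously (not just one of them) to pin it between the two cross-sections and then invoke convexity of the family $\{Q_j\}$ ordered by the parameter. I would therefore organize the proof as: (i) show $\conv(Q_k \cup Q_{k+1}) \cap Q_{k+1}' $-type nesting via the $\eta$-maps, reducing to the two-ball case $k = m$; (ii) for $k=m$, the statement $\left(\bigcup_{j=0}^m Q_j\right)\cap Q_{m+1} = Q_m\cap Q_{m+1}$ reduces to showing $Q_j \cap Q_{m+1}\subseteq Q_m$ for $j<m$, which follows by the same cone/segment reasoning since decreasing the index toward $m$ only shrinks the relevant cross-section; (iii) assemble. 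The $\eta_j$-based parametrization introduced in the proof of Lemma~\ref{lemma:subset} and used again in Lemma~\ref{lemma:supset} is the right tool, so I would lean on it rather than reprove the geometry from scratch.
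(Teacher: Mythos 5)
Your reduction to showing $Q_k \cap Q_{m+1} \subseteq Q_m$ for each $k \le m$ is the right first step, and you are correct to reject the naive triangle-inequality bound $\|\Vec{x}-(1-\beta^m)\Vec{a}\|_1 \le \beta^{m+1}+\beta^m(1-\beta)\|\Vec{a}\|_1$ as too lossy. But the cone/segment argument you fall back on has a genuine gap as written. The sentence ``convexity of the ordering forces $\Vec{x}\in\conv\{\eta_m^{-1}(\Vec{y})\}=\{\eta_m^{-1}(\Vec{y})\}$'' asserts that $\Vec{x}$ equals a single point, which is false; what you actually need is that the scalar parameter locating $\Vec{x}$ on the segment $\overline{\Vec{a},\Vec{y}}$ lies in the \emph{sub-interval} cut out by $Q_m$, and your proposal never pins this down. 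To make the segment argument rigorous you would have to (a) pick $\Vec{y}$ as the far boundary point of $Q_0$ on the ray from $\Vec{a}$ through $\Vec{x}$, (b) describe the intersection of $\overline{\Vec{a},\Vec{y}}$ with each $Q_j$ as an interval $[\beta^j s_{\min},\beta^j]$ in the segment parameter, and (c) observe that $\Vec{x}\in Q_k$ forces $t\ge\beta^k s_{\min}\ge\beta^m s_{\min}$ while $\Vec{x}\in Q_{m+1}$ forces $t\le\beta^{m+1}\le\beta^m$. None of this is carried out, and the vague ``cross-section at parameter $\beta^m$'' phrasing papers over exactly the place the work has to happen.

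The paper takes a different, more elementary route that sidesteps all of this geometry. It expands each $Q_j$ via the $2^n$ facet inequalities $\langle\Vec{x}-\Vec{a},\Vec{\sigma}\rangle \le \beta^j\bigl(1-\langle\Vec{a},\Vec{\sigma}\rangle\bigr)$ for $\Vec{\sigma}\in\{-1,1\}^n$ (the half-space form~\eqref{def:C(c,r)3}), and notices that for fixed $\Vec{\sigma}$ the right-hand side is a \emph{linear} function of $\beta^j$. Since $\Vec{x}$ satisfies this inequality at both $j=k$ and $j=m+1$, and since $\beta^m$ lies between $\beta^k$ and $\beta^{m+1}$, linearity gives $\min\bigl\{\beta^k(1-\langle\Vec{a},\Vec{\sigma}\rangle),\,\beta^{m+1}(1-\langle\Vec{a},\Vec{\sigma}\rangle)\bigr\}\le\beta^m(1-\langle\Vec{a},\Vec{\sigma}\rangle)$ regardless of the sign of $1-\langle\Vec{a},\Vec{\sigma}\rangle$, which is exactly the $Q_m$ constraint. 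This facet-by-facet argument replaces your ``pin $\Vec{x}$ between two cross-sections and invoke convexity of the family'' with a one-line observation about a univariate linear function, which is why the paper's proof is short and requires no case analysis about where the ray from $\Vec{a}$ enters or exits $Q_0$.
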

\begin{proof}[Proof of Claim~\ref{claim}]
 The inclusion ``$\supseteq$'' is clear. 
 We prove the other inclusion ``$\subseteq$.'' 
Suppose 
  for an arbitrary $k \in \{0,1,\ldots,m-1\}$  
  that $\Vec{x} \in Q_k \cap Q_{m+1}$ holds. 
Then $\Vec{x} \in Q_k$ implies 
 $\langle \Vec{x} - (1-\beta^k) \Vec{a}, \Vec{\sigma} \rangle 
 =\langle \Vec{x} - \Vec{a}, \Vec{\sigma} \rangle + \langle \beta^k \Vec{a}, \Vec{\sigma} \rangle \leq \beta^k$
 holds for any $\Vec{\sigma} \in \{-1,1\}^n$, and 
 $\Vec{x} \in Q_{m+1}$ implies 
 $\langle \Vec{x} - (1-\beta^{m+1}) \Vec{a}, \Vec{\sigma} \rangle 
 =\langle \Vec{x} - \Vec{a}, \Vec{\sigma} \rangle + \langle \beta^{m+1} \Vec{a}, \Vec{\sigma} \rangle \leq \beta^{m+1}$
 holds for any $\Vec{\sigma} \in \{-1,1\}^n$.  
 This means that if $\Vec{x} \in Q_k \cap Q_{m+1}$ then 
 $\langle \Vec{x} - \Vec{a}, \Vec{\sigma} \rangle 
  \leq \min \{\beta^{k}(1 - \langle \Vec{a}, \Vec{\sigma} \rangle), \beta^{m+1} (1- \langle \Vec{a}, \Vec{\sigma} \rangle) \} $. 
 It is not difficult to see that 
 $\min \{\beta^{k}(1 - \langle \Vec{a}, \Vec{\sigma} \rangle), \beta^{m+1} (1- \langle \Vec{a}, \Vec{\sigma} \rangle) \} 
  \leq \beta^{m}(1 - \langle \Vec{a}, \Vec{\sigma} \rangle)$. 
 Thus we obtain the claim.  
\end{proof}

\begin{proof}[Proof of Lemma~\ref{lem:vol-union-gem-seriesA}]
 The claim is trivial when $m=1$. 
 Inductively assuming that the claim holds in case of~$m$, 
  we obtain that 
\begin{align*}
 \bigcup_{k=0}^{m+1} Q_k 
 &=  
  \bigcup_{k=0}^{m} Q_k \,\cup\, Q_{m+1} \\
 &= 
  \left(\left(\dot\bigcup_{k=0}^{m-1} Q_k \setminus Q_{k+1}\right) \,\dot\cup\, Q_m\right) \,\cup\, Q_{m+1} 
  && (\mbox{Induction hypo.}) \\
 &= 
  \left(\dot\bigcup_{k=0}^{m-1} Q_k \setminus Q_{k+1}\right) \,\dot\cup\, (Q_m \,\cup\, Q_{m+1}) 
  && (\mbox{by Claim~\ref{claim}}) \\
 &= 
  \left(\dot\bigcup_{k=0}^{m-1} Q_k \setminus Q_{k+1}\right) \,\dot\cup\, 
  \left((Q_m \setminus Q_{m+1})  \,\dot\cup\, Q_{m+1} \right) \\
 &=
  \left(\dot\bigcup_{k=0}^m Q_k \setminus Q_{k+1}\right) \,\dot\cup\, Q_m 
\end{align*}
 which is the claim in case of $m+1$. 
\end{proof}

 The second step of the proof of Lemma~\ref{lem:vol-union-gem-seriesB} is the following lemma. 
\begin{lemma}\label{lem:vol-union-gem-seriesB}
\begin{eqnarray*}
 \Vol(Q_k \setminus Q_{k+1}) = \beta^{nk}\Vol(Q_0 \setminus Q_1). 
\end{eqnarray*}
\end{lemma}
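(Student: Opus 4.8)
The plan is to exhibit a single affine bijection of $\mathbb{R}^n$ that simultaneously carries $Q_k$ onto $Q_0$ and $Q_{k+1}$ onto $Q_1$, and then read off the volume identity from its Jacobian. The natural candidate is the map $\eta_k$ already introduced in \eqref{eq:eta_k}, namely $\eta_k(\Vec{x}) = (\Vec{x} - (1-\beta^k)\Vec{a})/\beta^k$, which (since $0<\beta<1$, so $\beta^k>0$) is a translation followed by a uniform scaling by the factor $1/\beta^k > 0$; its inverse is $\Vec{y} \mapsto \beta^k \Vec{y} + (1-\beta^k)\Vec{a}$.

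First I would record that $\eta_k$ sends $Q_k = C((1-\beta^k)\Vec{a},\beta^k)$ onto $Q_0 = C(\Vec{0},1)$: this is exactly the computation in the proof of Lemma~\ref{lemma:subset}, since $\|\Vec{x} - (1-\beta^k)\Vec{a}\|_1 \le \beta^k$ holds if and only if $\|\eta_k(\Vec{x})\|_1 \le 1$ (recall the description \eqref{def:C(c,r)2} of a cross-polytope as an $L_1$-ball and the definition \eqref{def:Q_k} of $Q_k$).

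Next I would check that $\eta_k$ sends $Q_{k+1}$ onto $Q_1$. Because a translation composed with a uniform scaling maps the $L_1$-ball $C(\Vec{c},r)$ onto $C(\eta_k(\Vec{c}),r/\beta^k)$, it suffices to track the center and radius of $Q_{k+1} = C((1-\beta^{k+1})\Vec{a},\beta^{k+1})$. The radius becomes $\beta^{k+1}/\beta^k = \beta$, and the center becomes $((1-\beta^{k+1})\Vec{a} - (1-\beta^k)\Vec{a})/\beta^k = (\beta^k-\beta^{k+1})\Vec{a}/\beta^k = (1-\beta)\Vec{a}$, using the identity $\beta^k-\beta^{k+1} = \beta^k(1-\beta)$. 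Hence $\eta_k(Q_{k+1}) = C((1-\beta)\Vec{a},\beta) = Q_1$. Combining this with the previous paragraph and using the injectivity of $\eta_k$, we get $\eta_k(Q_k \setminus Q_{k+1}) = \eta_k(Q_k) \setminus \eta_k(Q_{k+1}) = Q_0 \setminus Q_1$.

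Finally, since $\eta_k$ is an affine map whose linear part is $\beta^{-k}$ times the identity, its Jacobian determinant is $\beta^{-nk}$, so $\Vol(\eta_k(S)) = \beta^{-nk}\Vol(S)$ for every measurable $S \subseteq \mathbb{R}^n$. Taking $S = Q_k \setminus Q_{k+1}$ yields $\Vol(Q_0 \setminus Q_1) = \beta^{-nk}\Vol(Q_k \setminus Q_{k+1})$, which rearranges to the claimed $\Vol(Q_k \setminus Q_{k+1}) = \beta^{nk}\Vol(Q_0 \setminus Q_1)$. There is no serious obstacle here; the only points needing a moment of care are that a pure translation-plus-scaling really does preserve the family of $L_1$-balls (it would fail if a rotation were involved), and the trivial arithmetic $\beta^k-\beta^{k+1} = \beta^k(1-\beta)$ that aligns the center of $\eta_k(Q_{k+1})$ with that of $Q_1$.
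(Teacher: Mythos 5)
Your proof is correct and rests on the same key idea as the paper's: the similarity map $\eta_k$ from \eqref{eq:eta_k} carries the configuration at level $k$ to the one at level $0$, and volumes scale by $\beta^{nk}$ under its inverse. The paper packages this as three separate ratio computations ($\Vol(Q_k)/\Vol(Q_0)$, $\Vol(Q_{k+1})/\Vol(Q_1)$, and $\Vol(Q_k\cap Q_{k+1})/\Vol(Q_0\cap Q_1)$) combined via inclusion--exclusion, whereas you transport the set difference $Q_k\setminus Q_{k+1}$ directly through $\eta_k$ and invoke the Jacobian once; this is a cleaner presentation of essentially the same argument.
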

\begin{proof}
 It is easy to see that 
 $\Vol(Q_k)/\Vol(Q_0) = \Vol(C((1-\beta^k) \Vec{a},\beta^k))/\Vol(C(\Vec{0},1)) = \beta^{nk}$ holds, 
 $\Vol(Q_{k+1})/\Vol(Q_1) = \beta^{nk}$ as well. 
 Using the bijective map $\eta_k \colon Q_k \to Q_0$ defined by~\eqref{eq:eta_k} in Lemma~\ref{lemma:subset}, 
  it is also not difficult to see that 
  $\Vol(Q_k \cap Q_{k+1}) / \Vol(Q_0 \cap Q_1) = \beta^{nk}$. 
 Considering the inclusion-exclusion, we obtain the claim. 
\end{proof}
Now, we are ready to prove Lemma~\ref{lem:vol-union-gem-series}. 
\begin{proof}[Proof of Lemma~\ref{lem:vol-union-gem-series}]
\begin{align*}
 \Vol\left(\bigcup_{k=0}^{\infty} Q_k\right)
 &= 
  \Vol\left(\left(\dot\bigcup_{k=0}^{\infty} Q_k \setminus Q_{k+1}\right) \,\dot\cup\, Q_{\infty}\right) 
 &(\mbox{by Lemma~\ref{lem:vol-union-gem-seriesA}}) \\
 &= \sum_{k=0}^{\infty} \Vol (Q_k \setminus Q_{k+1}) + \Vol(Q_{\infty}) \\
 &= \sum_{k=0}^{\infty} \beta^{nk}\Vol(Q_0 \setminus Q_1) 
 &(\mbox{by Lemma~\ref{lem:vol-union-gem-seriesB}}) \\
 &= \frac{1}{1-\beta^n} \Vol(Q_0 \setminus Q_1) \\
 &= \frac{1}{1-\beta^n} \left(\frac{2^n}{n!} -\Vol(Q_1 \cap Q_0)\right)
\end{align*}
\end{proof}

\subsubsection{The volume of $\bigcup_{k=0}^{\infty} Q_k$}\label{sec:red-two-ballsB}
 A reader who are familiar with approximation 
  may worry about the subtraction 
   $\frac{2^n}{n!} -\Vol(Q_0 \cap Q_1)$ in Lemma~\ref{lem:vol-union-gem-series}. 
 Section~\ref{sec:red-two-ballsB} claims the following. 
\begin{lemma}\label{lem:intersection-small}
 When $1-\beta \geq \dfrac{c_2\epsilon}{n\|\Vec{a}\|_1}$ where $0 < c_2 \epsilon < 1$, 
\begin{eqnarray*}
 \Vol(Q_0 \cap Q_1)  \leq \frac{1}{ 1+\dfrac{c_2 \epsilon}{2n} } \frac{2^n}{n!}. 
\end{eqnarray*}
\end{lemma}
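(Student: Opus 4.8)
The plan is to pass to the complement. Since $\Vol(Q_0)=\frac{2^n}{n!}$ by \eqref{eq:vol-cp} and $1-x\le\frac1{1+x}$ for $x\ge0$, the asserted inequality follows once I show $\Vol(Q_0\setminus Q_1)\ge\frac{c_2\epsilon}{2n}\Vol(Q_0)$. Throughout write $\Vec{c}=(1-\beta)\Vec{a}\in\Real_{\ge0}^n$, so that $Q_1=C(\Vec{c},\beta)=\Vec{c}+\beta\,C(\Vec{0},1)=\Vec{c}+\beta Q_0$ and $\Delta\defeq\|\Vec{c}\|_1=(1-\beta)\|\Vec{a}\|_1\ge\frac{c_2\epsilon}{n}$ by hypothesis (we may assume $\Vec{a}\neq\Vec{0}$, hence $\|\Vec{a}\|_1\ge1$, since otherwise the hypothesis is vacuous). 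I would first dispose of the regime $\|\Vec{a}\|_1\le 2n$: there $Q_0\cap Q_1\subseteq Q_1$ gives $\Vol(Q_0\cap Q_1)\le\Vol(Q_1)=\beta^n\frac{2^n}{n!}$, while the hypothesis yields $1-\beta\ge\frac{c_2\epsilon}{n\|\Vec{a}\|_1}\ge\frac{c_2\epsilon}{2n^2}$, whence $\beta^n\le\e^{-n(1-\beta)}\le\e^{-c_2\epsilon/(2n)}\le\frac1{1+c_2\epsilon/(2n)}$, which already suffices.

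The main work is the regime $\|\Vec{a}\|_1>2n$. Here $\beta<1$ forces $Q_1=\Vec{c}+\beta Q_0\subseteq\Vec{c}+Q_0$, so it is enough to bound $\Vol(Q_0\setminus(\Vec{c}+Q_0))$ from below. I would decompose $\Real^n$ into its $2^n$ closed orthants: for $\Vec{\sigma}\in\{-1,1\}^n$ let $S_{\Vec{\sigma}}=\{\Vec{x}: \sigma_ix_i\ge0\ (\forall i),\ \langle\Vec{x},\Vec{\sigma}\rangle\le1\}$, which is precisely $Q_0$ intersected with the $\Vec{\sigma}$-orthant (since $\|\Vec{x}\|_1=\langle\Vec{x},\Vec{\sigma}\rangle$ there), a simplex of volume $\frac1{n!}$, and these tile $Q_0$. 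For each $\Vec{\sigma}$ with $\langle\Vec{c},\Vec{\sigma}\rangle<0$, the halfspace $\{\langle\Vec{x},\Vec{\sigma}\rangle\le1+\langle\Vec{c},\Vec{\sigma}\rangle\}$ contains $\Vec{c}+Q_0$ (because $\langle\Vec{y},\Vec{\sigma}\rangle\le\|\Vec{y}\|_1\le1$ for $\Vec{y}\in Q_0$), so the cap $\{\Vec{x}\in S_{\Vec{\sigma}}:\langle\Vec{x},\Vec{\sigma}\rangle>1+\langle\Vec{c},\Vec{\sigma}\rangle\}$ lies inside $Q_0\setminus(\Vec{c}+Q_0)$ and has volume $\frac{1-(1-|\langle\Vec{c},\Vec{\sigma}\rangle|)_+^n}{n!}$; caps for distinct $\Vec{\sigma}$ are interior-disjoint. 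Summing these, rewriting the sum over $\{\langle\Vec{c},\Vec{\sigma}\rangle<0\}$ as $\tfrac12\sum_{\Vec{\sigma}}(\cdot)$ via $\langle\Vec{c},-\Vec{\sigma}\rangle=-\langle\Vec{c},\Vec{\sigma}\rangle$, and using the elementary bound $1-(1-y)_+^n\ge\tfrac12\min(ny,1)$ for $y\ge0$, I get
\[
 \Vol\bigl(Q_0\setminus(\Vec{c}+Q_0)\bigr)\ \ge\ \tfrac14\,\Exp_{\Vec{\sigma}}\bigl[\min(n\,|\langle\Vec{c},\Vec{\sigma}\rangle|,\,1)\bigr]\,\Vol(Q_0),
\]
the expectation over $\Vec{\sigma}$ uniform on $\{-1,1\}^n$.

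To finish I would invoke a Khintchine-type anti-concentration inequality: $\Exp_{\Vec{\sigma}}|\langle\Vec{c},\Vec{\sigma}\rangle|\ge\|\Vec{c}\|_2/\sqrt3\ge\|\Vec{c}\|_1/\sqrt{3n}=\Delta/\sqrt{3n}$ (the first step from $\Exp|Y|^2\le(\Exp|Y|)^{2/3}(\Exp Y^4)^{1/3}$ together with $\Exp Y^4\le3(\Exp Y^2)^2$ for $Y=\langle\Vec{c},\Vec{\sigma}\rangle$). Since $|\langle\Vec{c},\Vec{\sigma}\rangle|\le\Delta$ always, a short case split on whether $n\Delta\le1$ gives $\Exp_{\Vec{\sigma}}[\min(n|\langle\Vec{c},\Vec{\sigma}\rangle|,1)]\ge\min(\sqrt{n/3}\,\Delta,\,1/\sqrt{3n})\ge c_2\epsilon/\sqrt{3n}$ (using $\Delta\ge c_2\epsilon/n$ and $c_2\epsilon<1$). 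Hence $\Vol(Q_0\setminus Q_1)\ge\frac{c_2\epsilon}{4\sqrt{3n}}\Vol(Q_0)$, which is $\ge\frac{c_2\epsilon}{2n}\Vol(Q_0)$ for all $n$ above an absolute constant; the finitely many remaining small $n$ fall back on the easy regime (when $\|\Vec{a}\|_1\le2n$) or on a direct low-dimensional estimate of $\Vol(Q_0\cap Q_1)$ (for instance, for $n=2$ one computes $\Vol(Q_0\cap Q_1)$ exactly).

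The delicate point—and the only real obstacle—is the constants. A naive version of the orthant argument, noting merely that each $S_{\Vec{\sigma}}$ with $\langle\Vec{c},\Vec{\sigma}\rangle\le0$ loses a cap of width at least $1-\beta$, yields only $\Vol(Q_0\cap Q_1)\le\frac{1+\beta^n}{2}\Vol(Q_0)$, which is worthless precisely when $\|\Vec{a}\|_1\gg n$, because then $1-\beta$ is tiny and $\beta^n\approx1$. The resolution is that, over the random sign vector $\Vec{\sigma}$, the cap widths $|\langle\Vec{c},\Vec{\sigma}\rangle|$ are on average a $\Theta(\sqrt n)$-factor larger than $\|\Vec{c}\|_\infty$—an anti-concentration (Khintchine) phenomenon—and this $\sqrt n$ gain, carried carefully through the clipping $(\,\cdot\,)_+$, is exactly what makes the inequality hold.
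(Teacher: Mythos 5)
Your argument is correct in its main thrust and takes a genuinely different route from the paper. The paper reduces to a one‑dimensional picture: Lemma~\ref{lem:intersection-smallA} (and Corollary~\ref{cor:intersection-smallA}) shows that replacing all but the largest coordinate of the center $\Vec{c}=(1-\beta)\Vec{a}$ by zero can only \emph{increase} the intersection volume, and Lemma~\ref{lem:intersection-smallB} then traps $C(\Vec{0},1)\cap C((c,0,\ldots,0),r)$ inside an explicit scaled cross-polytope, giving the closed-form bound $\Vol(Q_0\cap Q_1)\le\bigl(1-(1-\beta)\tfrac{a_1+1}{2}\bigr)^n\Vol(Q_0)$; finishing uses only $a_1\ge\|\Vec{a}\|_1/n$. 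You instead decompose $Q_0$ into the $2^n$ orthant simplices, observe that $Q_1\subseteq\Vec{c}+Q_0$ and that each simplex with $\langle\Vec{c},\Vec{\sigma}\rangle<0$ loses a cap of relative volume $1-(1-|\langle\Vec{c},\Vec{\sigma}\rangle|)_+^n$, and then lower-bound the average cap by a Khintchine anti-concentration estimate $\Exp|\langle\Vec{c},\Vec{\sigma}\rangle|\ge\|\Vec{c}\|_2/\sqrt3\ge\|\Vec{c}\|_1/\sqrt{3n}$. Conceptually this is nice and more portable (it never singles out the largest coordinate, and the orthant/cap picture would adapt to other symmetric bodies); the cost is a $\sqrt n$-versus-$n$ trade plus the Khintchine and clipping constants.

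That constant trade is exactly where a genuine gap remains. Your second-regime conclusion is $\Vol(Q_0\setminus Q_1)\ge\frac{c_2\epsilon}{4\sqrt{3n}}\Vol(Q_0)$, which dominates the target $\frac{c_2\epsilon}{2n}\Vol(Q_0)$ only once $2n\ge4\sqrt{3n}$, i.e.\ $n\ge12$ (or $n\ge11$ if one works against $\frac{c_2\epsilon/(2n)}{1+c_2\epsilon/(2n)}$ instead). Your first regime covers only $\|\Vec{a}\|_1\le2n$, and that threshold cannot be pushed higher with the slack $e^x\ge1+x$. So for $2\le n\le10$ and $\|\Vec{a}\|_1>2n$ (and these $n$ really are used by the paper) the proof as written is not complete: ``a direct low-dimensional estimate'' is asserted but not carried out, and it is not obvious how to do it uniformly in $\Vec{a}$ and $\beta$ without essentially re-deriving the paper's coordinate-zeroing lemma. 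Everything else I checked is sound — the containment $Q_1\subseteq\Vec{c}+Q_0$, the cap-volume formula $\tfrac{1-(1-t)_+^n}{n!}$, the symmetrization over $\pm\Vec{\sigma}$, the elementary bound $1-(1-y)_+^n\ge\tfrac12\min(ny,1)$, and the $L^1$-vs-$L^2$-vs-$L^4$ moment comparison — so the fix is purely the small-$n$ case. One clean way to close it would be to replace the Khintchine step by the cruder but dimension-free observation that some single $\Vec{\sigma}$ has $|\langle\Vec{c},\Vec{\sigma}\rangle|=\|\Vec{c}\|_1\ge c_2\epsilon/n$ and to argue about that one orthant directly, but as stated the proposal does not do this.
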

 Intuitively, Lemma~\ref{lem:intersection-small} implies that 
   $\frac{2^n}{n!} -\Vol(Q_0 \cap Q_1)$ is large enough, and 
  an approximation of $ \Vol(Q_0 \cap Q_1)$ provides 
  a good approximation of $\Vol (\bigcup_{k=0}^{\infty} Q_k)$, and hence $\Vol(P_{\Vec{a}})$. 
 A detailed argument on our FPTAS of $\Vol(P_{\Vec{a}})$ will be described in Section~\ref{sec:algorithm-P_a}.

 As a preliminary of a proof of Lemma~\ref{lem:intersection-small}, 
  we give Lemmas~\ref{lem:intersection-smallA} and \ref{lem:intersection-smallB}. 
\begin{lemma}\label{lem:intersection-smallA}
 Let $\Vec{c} \in \mathbb{R}^n_{\geq 0}$ and $\Vec{c}' \in \mathbb{R}^n_{\geq 0}$ be given by 
\begin{eqnarray*}
 \Vec{c} &=& (c_1,c_2, \ldots, c_{k-1}, c_k,0,\ldots,0) \\
 \Vec{c}' &=& (c_1,c_2, \ldots, c_{k-1},0,0,\ldots,0) 
\end{eqnarray*}
 for some $k \in \{2,3,\ldots,n\}$, 
 i.e., $\Vec{c}'$ is given by replacing the $k$-th component of $\Vec{c}$ by $0$. 
 Then, 
\begin{eqnarray}
 \Vol (C(\Vec{0},1) \cap C(\Vec{c},r)) 
\leq 
 \Vol (C(\Vec{0},1) \cap C(\Vec{c}',r)). 
\label{eq:intersection^smallA}
\end{eqnarray}
\end{lemma}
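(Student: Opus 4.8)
The plan is to prove \eqref{eq:intersection^smallA} slicewise, by Cavalieri's principle along the $k$-th coordinate axis. Write a point of $\Real^n$ as $(\Vec{y},x_k)$, where $\Vec{y}=(x_1,\dots,x_{k-1},x_{k+1},\dots,x_n)$ collects the $n-1$ coordinates other than the $k$-th. For a fixed $\Vec{y}$, the line $\{(\Vec{y},x_k):x_k\in\Real\}$ meets $C(\Vec{0},1)$ in the set of $x_k$ with $|x_k|\le \rho(\Vec{y})$, where $\rho(\Vec{y})\defeq 1-\sum_{j\ne k}|x_j|$ (an empty set when this is negative); this is an interval of half-width $\rho(\Vec{y})$ centered at $0$. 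Likewise the line meets $C(\Vec{c},r)$ in the set of $x_k$ with $|x_k-c_k|\le \tau(\Vec{y})$, where $\tau(\Vec{y})\defeq r-\sum_{j\ne k}|x_j-c_j|$; this is an interval of half-width $\tau(\Vec{y})$ centered at $c_k$. The key point is that neither $\rho(\Vec{y})$ nor $\tau(\Vec{y})$ involves $c_k$ at all: passing from $\Vec{c}$ to $\Vec{c}'$ only replaces the center $c_k$ of the second interval by $0$, while leaving the two half-widths, and the range of the parameter $\Vec{y}$, untouched.

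Next I would record the elementary one-dimensional fact that for $\rho,\tau\ge 0$ and $b\in\Real$, the length of the intersection of $[-\rho,\rho]$ with $[b-\tau,b+\tau]$ equals $\max\{0,\ \min\{2\rho,\,2\tau,\,\rho+\tau-|b|\}\}$, which is a non-increasing function of $|b|$: the first two terms of the inner minimum are constant in $b$ and the third is non-increasing, a minimum of non-increasing functions is non-increasing, and $\max\{0,\cdot\}$ preserves this; the degenerate cases in which one of the intervals is empty are trivial. Applying this fact with $b=c_k$ versus $b=0$ shows that, for every fixed $\Vec{y}$, the length of the $x_k$-slice of $C(\Vec{0},1)\cap C(\Vec{c},r)$ is at most the length of the $x_k$-slice of $C(\Vec{0},1)\cap C(\Vec{c}',r)$.

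Finally I would integrate this pointwise inequality over $\Vec{y}\in\Real^{n-1}$ and invoke Fubini's theorem --- both intersections are bounded convex bodies, hence measurable and of finite volume --- to conclude $\Vol(C(\Vec{0},1)\cap C(\Vec{c},r))\le\Vol(C(\Vec{0},1)\cap C(\Vec{c}',r))$, which is \eqref{eq:intersection^smallA}. I do not anticipate a genuine obstacle here; the only points needing care are (i) checking that the slice half-widths $\rho(\Vec{y})$ and $\tau(\Vec{y})$ are genuinely independent of the coordinate that is being zeroed out, so that the slicewise comparison is literally one between two intervals that differ by a translation of one of them, and (ii) treating the degenerate (empty) slices uniformly, where the inequality holds vacuously. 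Incidentally, the hypothesis $\Vec{c},\Vec{c}'\in\Real^n_{\ge 0}$ is not actually needed for this argument: all that is used is that $\Vec{c}'$ agrees with $\Vec{c}$ in every coordinate except the $k$-th, which it sets to $0$.
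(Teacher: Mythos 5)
Your slicing argument is correct, and it takes a genuinely different route from the paper's. You fix the $n-1$ coordinates other than the $k$-th and compare, for each such slice, the length of the intersection of two intervals: $[-\rho,\rho]$ (the slice of $C(\Vec{0},1)$) and $[b-\tau,b+\tau]$ (the slice of $C(\Vec{c},r)$, with $b=c_k$). You then observe that the slice half-widths $\rho,\tau$ are identical for $\Vec{c}$ and $\Vec{c}'$, and that the explicit length formula $\max\{0,\min\{2\rho,2\tau,\rho+\tau-|b|\}\}$ is non-increasing in $|b|$, so shrinking $|b|$ from $|c_k|$ to $0$ cannot decrease the slice length; Fubini then gives the volume inequality. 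The paper instead constructs the measure-preserving involution $h$ that reflects the $k$-th coordinate via $x_k\mapsto c_k-x_k$ (fixing all other coordinates), which bijects $C(\Vec{c},r)$ onto $C(\Vec{c}',r)$, and shows that $h$ maps the "surplus" region $C(\Vec{0},1)\cap C(\Vec{c},r)\setminus C(\Vec{c}',r)$ into $C(\Vec{0},1)\cap C(\Vec{c}',r)\setminus C(\Vec{c},r)$; the inequality then follows by measure preservation. The two arguments exploit the same one-dimensional structure, but the paper's avoids any explicit interval-length formula while yours is more elementary and makes it immediately transparent, as you note, that the sign hypothesis on $\Vec{c}$ is superfluous (a fact that is also true of, but less visible in, the paper's proof). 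Both are complete.
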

\begin{proof}
 For any $\Vec{x}= (x_1,\ldots,x_n) \in C(\Vec{c},r)$, 
  we define a map $h \colon C(\Vec{c},r) \to \mathbb{R}^n$ 
  such that $\Vec{x}' = h(\Vec{x})$ satisfies 
\begin{eqnarray*}
 x'_i = \begin{cases}
 c_i-x_i & (\mbox{for }i=k) \\
 x_i & (\mbox{otherwise}) 
 \end{cases}
\end{eqnarray*}
 (see Figure~\ref{fig:h(x)}). 
\begin{figure}
\begin{center}
\includegraphics[bb= 20 40 440 440, scale=0.5, clip]{fig1.eps}
\end{center}
\caption{$\Vec{x}$ and $h(\Vec{x})$. }\label{fig:h(x)}
\end{figure}
 Notice that 
  $h$ is a map from $C(\Vec{c},r)$ to $C(\Vec{c}',r)$ in fact, and 
  it is  bijective and measure preserving.  
 Now, suppose that $\Vec{x} \in C(\Vec{c},r)$ 
  satisfies both $\Vec{x}  \in  C(\Vec{0},1)$ and  $\Vec{x} \not\in C(\Vec{c}',r)$, 
 i.e., 
\begin{eqnarray}
 \textstyle \sum_{i=1}^k |x_i-c_i| + \sum_{i=k+1}^n |x_i| \leq r, && \label{20160403p11}\\
 \textstyle \sum_{i=1}^n |x_i| \leq 1, &&  \mbox{ and } \label{20160403p12} \\
 \textstyle \sum_{i=1}^{k-1} |x_i-c_i| + \sum_{i=k}^n |x_i| > r \label{20160403p13} && 
\end{eqnarray} 
 hold. 
 Then, we claim that 
  $\Vec{x}' = h(\Vec{x}) \in C(\Vec{0},1)$ and $\Vec{x}' \not\in C(\Vec{c},r)$. 
 This implies \eqref{eq:intersection^smallA} since $h$ is measure preserving. 
 Now we show the claim. 
 For convenience let 
\begin{eqnarray*}
 D := \textstyle \sum_{i=1}^{k-1} |x_i-c_i| + \sum_{i=k+1}^n |x_i| 
\end{eqnarray*} 
 then 
  \eqref{20160403p11} implies $D+|x_k-c_k| \leq r$ and 
  \eqref{20160403p13} implies $D+|x_k| > r$.  
 As a consequence, 
  we obtain that 
\begin{eqnarray}
 |x_k-c_k| < |x_k|. \label{20160403p14}
\end{eqnarray} 
 We also remark that $|x'_k| = |x_k-c_k|$ by the definition of $h$. 
 Then 
\begin{align*}
 \|\Vec{x}' \|_1 
 &= \textstyle \sum_{i=1}^n |x'_i| = \left(\sum_{i=1}^n |x_i|\right) -|x_k|+|x_k-c_k|  \\
 & < \textstyle \sum_{i=1}^n |x_i| 
 && (\mbox{by \eqref{20160403p14}}) \\
 & \leq 1 
 && (\mbox{by \eqref{20160403p12}}) 
\end{align*} 
  and $\Vec{x}' \in C(\Vec{0},1)$. 
 Similarly, 
\begin{align*}
 \|\Vec{x}' - \Vec{c} \|_1 
 &= 
  \textstyle \sum_{i=1}^{k-1} |x'_i-c_i| + |x'_k| + \sum_{i=k+1}^n |x'_i| \\
 &= 
  \textstyle \sum_{i=1}^{k-1} |x_i-c_i| + |x_k-c_k| + \sum_{i=k+1}^n |x_i| \\
 &> 
  \textstyle \sum_{i=1}^{k-1} |x_i-c_i| + |x_k| + \sum_{i=k+1}^n |x_i| 
 && (\mbox{by \eqref{20160403p14}}) \\
 &\geq   r  
 && (\mbox{by \eqref{20160403p13}}) 
\end{align*} 
 and $\Vec{x}' \not\in C(\Vec{c},r)$. 
 We obtain the claim. 
\end{proof}
 We remark that the volume of the intersection is not 
  monotone decreasing with respect to the $L_1$ distance between centers, in general. 
 Iteratively applying Lemma~\ref{lem:intersection-smallA}, 
  we see the following. 
\begin{corollary}\label{cor:intersection-smallA}
 Let $\Vec{c} \in \mathbb{R}^n_{\geq 0}$ and $\Vec{c}' \in \mathbb{R}^n_{\geq 0}$ be given by 
\begin{eqnarray*}
 \Vec{c} &=& (c_1,c_2,\ldots, c_n) \quad \mbox{and}\\
 \Vec{c}' &=& (c_1,0,\ldots,0), 
\end{eqnarray*}
  i.e., $\Vec{c}'$ is given by replacing each component, except for the first component, of $\Vec{c}$ by 0. 
 Then, 
\begin{eqnarray*}
 \Vol (C(\Vec{0},1) \cap C(\Vec{c},r)) 
\leq 
 \Vol (C(\Vec{0},1) \cap C(\Vec{c}',r)). 
\end{eqnarray*}
\end{corollary}

 Next, we show the following. 
\begin{lemma}\label{lem:intersection-smallB}
 Suppose that $r$ and $c$ satisfies $0<r<1$ and $0<c<1+r$. 
 Then, 
\begin{eqnarray}
 C(\Vec{0},1) \cap C((c,0,\ldots,0),r) 
 &\subseteq& C\left(\left(\frac{1+(c-r)}{2},0,\ldots,0\right),\frac{1-(c-r)}{2}\right) 
\label{eq:intersection-smallB}
\end{eqnarray} 
holds. 
\end{lemma}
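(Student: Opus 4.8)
The plan is to exploit that all three cross-polytopes appearing in \eqref{eq:intersection-smallB} have their centers on the $x_1$-axis, so membership decouples into a one-dimensional statement. Write $s \defeq \sum_{i=2}^n |x_i| \ge 0$ and $d \defeq c-r$ (so $d<1$ by the hypothesis $c<1+r$, hence the target radius $\frac{1-d}{2}$ is positive and the right-hand cross-polytope is nondegenerate; also $d=c-r>-1$ since $c>0$ and $r<1$). For $\Vec{x}=(x_1,\ldots,x_n)$ one has $\Vec{x}\in C(\Vec{0},1)$ iff $|x_1|+s\le 1$, one has $\Vec{x}\in C((c,0,\ldots,0),r)$ iff $|x_1-c|+s\le r$, and one has $\Vec{x}\in C\!\left(\left(\tfrac{1+d}{2},0,\ldots,0\right),\tfrac{1-d}{2}\right)$ iff $|x_1-\tfrac{1+d}{2}|+s\le\tfrac{1-d}{2}$.

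First I would take an arbitrary $\Vec{x}\in C(\Vec{0},1)\cap C((c,0,\ldots,0),r)$. From $|x_1|+s\le 1$ I read off the upper bound $x_1\le 1-s$, and from $|x_1-c|+s\le r$ I read off the lower bound $x_1\ge c-(r-s)=d+s$. Hence $d+s\le x_1\le 1-s$; since this interval contains $x_1$ it is nonempty, which forces $s\le\tfrac{1-d}{2}$.

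Next I would note that the interval $[\,d+s,\;1-s\,]$ has midpoint $\tfrac{1+d}{2}$ and half-length $\tfrac{1-d}{2}-s\ge 0$, so $x_1\in[\,d+s,\;1-s\,]$ gives $|x_1-\tfrac{1+d}{2}|\le\tfrac{1-d}{2}-s$. Adding $s$ to both sides yields $|x_1-\tfrac{1+d}{2}|+s\le\tfrac{1-d}{2}$, i.e.\ $\Vec{x}\in C\!\left(\left(\tfrac{1+d}{2},0,\ldots,0\right),\tfrac{1-d}{2}\right)$, which is exactly the claimed inclusion.

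There is essentially no serious obstacle: the only point that needs care is the observation that the transverse mass $s$ occurs identically in the $L_1$-norm expression for each of the three cross-polytopes, so it merely shrinks all three ``radii'' by the same amount, and the whole statement collapses to the elementary one-dimensional fact that the interval $[-(1-s),\,1-s]\cap[c-(r-s),\,c+(r-s)]$ is contained in $[\,d+s,\,1-s\,]$. Checking that the derived interval is nonempty (so the half-length is nonnegative) is immediate once we have a point in the intersection, and the hypotheses $0<r<1$, $0<c<1+r$ are used only to make the target cross-polytope well defined.
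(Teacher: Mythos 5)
Your proof is correct and is in essence the same one-dimensional reduction the paper uses: both extract from membership in $C(\Vec{0},1)$ and $C((c,0,\ldots,0),r)$ the constraints $x_1 \le 1-s$ and $x_1 \ge (c-r)+s$ (with $s=\sum_{i\ge 2}|x_i|$) and then verify $|x_1-\tfrac{1+(c-r)}{2}|+s\le\tfrac{1-(c-r)}{2}$. The paper phrases the final step as an explicit case split on whether $x_1\ge\tfrac{1+(c-r)}{2}$, while you package it more cleanly via the midpoint/half-length of the interval $[(c-r)+s,\,1-s]$; this is a stylistic rather than substantive difference.
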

 Remark that \eqref{eq:intersection-smallB} in fact holds by equality, but we here prove only $\subseteq$, which we will use. 
\begin{proof}
 Suppose that $\Vec{x} \in  C(\Vec{0},1) \cap C((c,0,\ldots,0),r)$, 
 i.e., 
\begin{eqnarray}
  \textstyle \sum_{i=1}^n |x_i| \leq 1 &&\mbox{ and }
\label{20160403c}\\
 \textstyle |x_1-c| + \sum_{i=2}^n |x_i| \leq r&&
\label{20160403d}
\end{eqnarray} 
 holds. 
We consider two cases. 

\noindent
(i) In case that $x_1 \geq \frac{1+(c-r)}{2}$, 
\begin{eqnarray*}
\left|x_1-\frac{1+(c-r)}{2}\right|+\sum_{i=2}^n |x_i| 
&=& x_1-\frac{1+(c-r)}{2}+\sum_{i=2}^n |x_i| \\
&=& \sum_{i=1}^n |x_i|-\frac{1+(c-r)}{2} \\
&\leq& 1-\frac{1+(c-r)}{2}  \hspace{3em}(\mbox{by \eqref{20160403c}})\\
&=& \frac{1-(c-r)}{2} 
\end{eqnarray*} 
and we see that $\Vec{x} \in C\left(\left(\frac{1+(c-r)}{2},0,\ldots,0\right),\frac{1-(c-r)}{2}\right) $. 

\noindent
(ii) In case that $x_1 < \frac{1+(c-r)}{2}$, 
\begin{eqnarray*}
\left|x_1-\frac{1+(c-r)}{2}\right|+\sum_{i=2}^n |x_i| 
&=& \frac{1+(c-r)}{2}-x_1+\sum_{i=2}^n |x_i| \\
&=& \frac{1+(c-r)}{2}-c+(c-x_1)+\sum_{i=2}^n |x_i| \\
&\leq& \frac{1+(c-r)}{2}-c+|x_1-c|+\sum_{i=2}^n |x_i| \\
&\leq& \frac{1+(c-r)}{2}-c+r \hspace{3em}(\mbox{by \eqref{20160403d}})\\
&=& \frac{1-(c-r)}{2} 
\end{eqnarray*} 
and we see that $\Vec{x} \in C\left(\left(\frac{1+(c-r)}{2},0,\ldots,0\right),\frac{1-(c-r)}{2}\right) $. 
\end{proof}

Now we prove Lemma~\ref{lem:intersection-small}.  
\begin{proof}[Proof of Lemma~\ref{lem:intersection-small}]
 Recall that $Q_1 = C((1-\beta)\Vec{a},\beta)$. 
 By Corollary~\ref{cor:intersection-smallA} and Lemma~\ref{lem:intersection-smallB}, 
\begin{eqnarray*}
 \Vol(Q_0 \cap Q_1)  
 &\leq& \Vol\left(C(\Vec{0},1) \cap C(((1-\beta)a_1,0,\ldots,0),\beta)\right) 
  \hspace{2em} (\mbox{by Corollary~\ref{cor:intersection-smallA}})\\
 &\leq& 
 \Vol\left(C\left(\left(\frac{1+((1-\beta)a_1-\beta)}{2},0,\ldots,0\right),\frac{1-((1-\beta)a_1-\beta)}{2}\right)\right) \\
 && \hspace{20em} (\mbox{by Lemma~\ref{lem:intersection-smallB}})\\
 &=& \left(\frac{1-((1-\beta)a_1-\beta)}{2}\right)^n \Vol(C(\Vec{0},1))\\
 &=& \left(\frac{2-(1-\beta)(a_1+1)}{2}\right)^n \Vol(C(\Vec{0},1))\\
 &=& \left(1-(1-\beta)\frac{a_1+1}{2}\right)^n \Vol(C(\Vec{0},1))\\
 &\leq& \left(1-\dfrac{c_2 \epsilon}{ n\|\Vec{a}\|_1}\frac{a_1+1}{2}\right)^n \Vol(C(\Vec{0},1))  
 \hspace{2em}\left(\mbox{since }1-\beta \geq \frac{c_2\epsilon}{n\|\Vec{a}\|_1} \mbox{ (hypo.)}\right)
\\
 &\leq& \left(1-\dfrac{c_2 \epsilon}{2n^2} \right)^n \Vol(C(\Vec{0},1))
 \hspace{6.3em}\left(\mbox{since }\frac{a_1+1}{2\|\Vec{a}\|_1}\geq \frac{a_1+1}{2na_1} \geq \frac{1}{2n}\right)
\\
 &\leq& \frac{1}{\left(1+\dfrac{c_2 \epsilon}{2n^2}\right)^n} \Vol(C(\Vec{0},1))\\
 &\leq& \frac{1}{1+\dfrac{c_2 \epsilon}{2n}} \Vol(C(\Vec{0},1))
\end{eqnarray*}
and we obtain the claim. 
\end{proof}

\subsection{Approximation algorithm and analysis}\label{sec:algorithm-P_a}
 Based on 
  Lemma~\ref{lem:geometric-series} in Section~\ref{sec:geometric-series} and 
  Lemma~\ref{lem:vol-union-gem-series} in Section~\ref{sec:red-two-balls}, 
  we give an FPTAS for $\Vol(P_{\Vec{a}})$ where we assume an algorithm to approximate $\Vol(Q_0 \cap Q_1)$. 
 For convenience of arguments, 
  we assume $ 0 < \epsilon \leq 1/2$, 
  but it is clearly not essential\footnote{
   For $\epsilon > 1/2$, use Algorithm~\ref{algorithm:GDKP} with $\epsilon = 1/2$. 
 }. 
\begin{algorithm}[ ($1 \pm \epsilon$)-approximation ($ 0 < \epsilon \leq 1/2$)]\label{algorithm:GDKP}{\normalfont \ \\
Input: $\Vec{a}\in \Integer_+^n$; \\
1. Set parameter $\beta:=1-\dfrac{\epsilon}{2n\|\Vec{a}\|_1}$; 
2. Approximate $I \defeq \Vol(C(\Vec{0},1) \cap C((1-\beta)\Vec{a},\beta))$ 
   by $Z$ such that 
\begin{eqnarray*}
  I \le Z \le \left(1+\dfrac{\epsilon^2}{4n}\right)I; 
\end{eqnarray*}
3. Output 
\begin{eqnarray*}
 \widehat{V} = \frac{1+\epsilon}{1-\beta^n}\left(\frac{2^n}{n!}-Z\right).
\end{eqnarray*}
}\end{algorithm}

\begin{lemma}\label{lem:GDKP}
 The output $\widehat{V}$ of Algorithm~\ref{algorithm:GDKP} satisfies 
\begin{eqnarray*}
 \left(1-\epsilon\right)  \Vol(P_{\Vec{a}}) 
 \leq \widehat{V} 
 \leq (1+\epsilon)\Vol(P_{\Vec{a}}). 
\end{eqnarray*}
\end{lemma}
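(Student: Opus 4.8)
The plan is to combine the two reductions already established with the controlled approximation error introduced in step 2 of Algorithm~\ref{algorithm:GDKP}, and to verify that the chosen $\beta = 1 - \frac{\epsilon}{2n\|\Vec{a}\|_1}$ simultaneously satisfies the hypotheses of Lemma~\ref{lem:geometric-series} (with some constant $c_1$) and Lemma~\ref{lem:intersection-small} (with some constant $c_2$). Concretely, $1-\beta = \frac{\epsilon}{2n\|\Vec{a}\|_1}$, so Lemma~\ref{lem:geometric-series} applies with $c_1 = 1/2$ (note $0 < c_1\epsilon < 1$ since $\epsilon \le 1/2$), giving $(1-\epsilon/2)\Vol(P_{\Vec{a}}) \le \Vol(\bigcup_k Q_k) \le \Vol(P_{\Vec{a}})$; and Lemma~\ref{lem:intersection-small} applies with $c_2 = 1/2$ as well (since $1-\beta \ge \frac{c_2\epsilon}{n\|\Vec{a}\|_1}$ holds with equality-slack: $\frac{\epsilon}{2n\|\Vec{a}\|_1} = \frac{(1/2)\epsilon}{n\|\Vec{a}\|_1}$), giving $\Vol(Q_0\cap Q_1) \le \frac{1}{1+\epsilon/(4n)}\cdot\frac{2^n}{n!}$, equivalently $\frac{2^n}{n!} - \Vol(Q_0\cap Q_1) \ge \frac{\epsilon/(4n)}{1+\epsilon/(4n)}\cdot\frac{2^n}{n!} \ge \frac{\epsilon}{4n}\Vol(Q_0\cap Q_1)$ after rearranging, which is the quantitative lower bound on the ``gap'' we need.

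Next I would write $\widehat V$ in terms of the exact quantity $V^* \defeq \Vol(\bigcup_k Q_k) = \frac{1}{1-\beta^n}\bigl(\frac{2^n}{n!} - I\bigr)$ from Lemma~\ref{lem:vol-union-gem-series}, where $I = \Vol(Q_0\cap Q_1)$. Since step 2 guarantees $I \le Z \le (1+\frac{\epsilon^2}{4n})I$, subtracting from the positive quantity $\frac{2^n}{n!}$ gives
\begin{eqnarray*}
 \frac{2^n}{n!} - \left(1+\frac{\epsilon^2}{4n}\right)I \le \frac{2^n}{n!} - Z \le \frac{2^n}{n!} - I.
\end{eqnarray*}
The upper bound immediately yields $\frac{1+\epsilon}{1-\beta^n}\bigl(\frac{2^n}{n!}-Z\bigr) \le (1+\epsilon)V^* \le (1+\epsilon)\Vol(P_{\Vec{a}})$, which is the desired upper estimate. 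For the lower bound I would bound the multiplicative loss caused by replacing $I$ with $(1+\frac{\epsilon^2}{4n})I$: using the gap bound $\frac{2^n}{n!} - I \ge \frac{\epsilon}{4n}\cdot I$ (from the previous paragraph, perhaps with a cleaner constant), the error term $\frac{\epsilon^2}{4n}I$ is at most $\frac{\epsilon^2}{4n}I \le \epsilon \cdot \bigl(\frac{2^n}{n!}-I\bigr)$ when $\frac{\epsilon}{4n}\cdot\frac{\epsilon^2}{4n} \le \epsilon\cdot\frac{\epsilon}{4n}$, i.e. roughly $\frac{\epsilon}{4n}\le 1$, which holds. Hence $\frac{2^n}{n!}-Z \ge (1-\epsilon)\bigl(\frac{2^n}{n!}-I\bigr)$, so $\frac{1}{1-\beta^n}\bigl(\frac{2^n}{n!}-Z\bigr) \ge (1-\epsilon)V^*$, and multiplying by $1+\epsilon$ and using $V^* \ge (1-\epsilon/2)\Vol(P_{\Vec{a}})$ gives $\widehat V \ge (1+\epsilon)(1-\epsilon)(1-\epsilon/2)\Vol(P_{\Vec{a}}) \ge (1-\epsilon)\Vol(P_{\Vec{a}})$, the last step using $(1+\epsilon)(1-\epsilon)(1-\epsilon/2)\ge 1-\epsilon$ for $0<\epsilon\le 1/2$ (a one-line polynomial inequality).

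The main obstacle, and the reason the factor $1+\epsilon$ is prepended to $\widehat V$ in step 3 and the factor $1+\frac{\epsilon^2}{4n}$ (rather than something coarser like $1+\epsilon$) is demanded in step 2, is precisely the subtraction $\frac{2^n}{n!}-Z$: a relative error $\delta$ in $Z\approx I$ becomes an \emph{absolute} error $\delta I$ in the difference, which is only a small relative error of the difference if the difference is not too much smaller than $I$ itself. That is exactly what Lemma~\ref{lem:intersection-small} buys us — it forces $\frac{2^n}{n!}-I$ to be at least on the order of $\frac{\epsilon}{n}$ times $\frac{2^n}{n!}$ — and the bookkeeping I must be careful about is that this ``$\epsilon/n$'' loss in the gap is compensated by demanding an ``$\epsilon^2/n$''-accurate $Z$ so that the product of the two is $O(\epsilon)$. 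I would close by noting the extra $1+\epsilon$ in $\widehat V$ absorbs the one-sided under-approximation $V^* \le \Vol(P_{\Vec{a}})$ coming from Lemma~\ref{lem:geometric-series}; everything else is elementary inequality chasing of the type sketched above, so I would present the two directions (upper, then lower) as two short displayed chains and invoke the final $(1+\epsilon)(1-\epsilon)(1-\epsilon/2)\ge 1-\epsilon$ inequality to conclude.
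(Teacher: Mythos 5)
Your argument is correct and takes essentially the same route as the paper, which packages it into Lemma~\ref{lem:GDKP1} (the subtraction-safety step driven by the gap bound from Lemma~\ref{lem:intersection-small}) followed by Corollary~\ref{cor:error}; with $c_1=c_2=1/2$ you recover exactly the paper's parameterization $\beta=1-\epsilon/(2n\|\Vec{a}\|_1)$ and the required $Z$-accuracy $1+\epsilon^2/(4n)$. (One cosmetic slip: the stated sufficient condition ``$\frac{\epsilon}{4n}\cdot\frac{\epsilon^2}{4n}\le\epsilon\cdot\frac{\epsilon}{4n}$'' should simply read ``$\frac{\epsilon^2}{4n}\le\epsilon\cdot\frac{\epsilon}{4n}$,'' an identity, so the gap bound $\frac{2^n}{n!}-I\ge\frac{\epsilon}{4n}I$ yields $\frac{\epsilon^2}{4n}I\le\epsilon\bigl(\frac{2^n}{n!}-I\bigr)$ with no extra condition to verify.)
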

 Before proving Lemma~\ref{lem:GDKP}, 
  we check the time complexity of Algorithm~\ref{algorithm:GDKP}. 
 In Section~\ref{sec:fptas-two-cp}, we will give an FPTAS for $\Vol(Q_0 \cap Q_1)$. 
 Theorem~\ref{th:CPI} appearing there 
  implies that the time complexity of Step 2 of Algorithm~\ref{algorithm:GDKP} 
  is 
  $\Order(n^7 (n/\epsilon^2)^3)=\Order(n^{10} \epsilon^{-6})$. 
 Thus, we obtain Theorem~\ref{th:main} by Lemma~\ref{lem:GDKP}. 

 As a preliminary of Lemma~\ref{lem:GDKP}, we show the following. 
\begin{lemma}\label{lem:GDKP1}
 Suppose that $1-\beta \geq \dfrac{c_2 \epsilon}{n\|\Vec{a}\|_1}$ holds where $0 < c_2 \epsilon < 1 $. 
 If we have an approximation $Z$ of $\Vol(Q_1 \cap Q_0)$ satisfying 
\begin{eqnarray}
 \Vol(Q_0 \cap Q_1) \le Z \le \left(1+\frac{c_2\epsilon^2}{2n}\right)\Vol(Q_0 \cap Q_1)
\label{eq:GDKP-assumption}
\end{eqnarray}
 then $\Vol(Q_0) - Z = \dfrac{2^n}{n!}-Z$ satisfies that 
\begin{eqnarray}
 (1-\epsilon)\!\!\left(\frac{2^n}{n!} -\Vol(Q_1 \cap Q_0)\right) 
 \leq \left(\frac{2^n}{n!} -Z\right)
 \leq \left(\frac{2^n}{n!} -\Vol(Q_1 \cap Q_0)\right). 
\label{eq:GDKP-main}
\end{eqnarray}
\end{lemma}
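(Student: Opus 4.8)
The plan is to split the claim~\eqref{eq:GDKP-main} into its two inequalities. Write $I := \Vol(Q_0 \cap Q_1)$ and recall $\Vol(Q_0) = \frac{2^n}{n!}$ by~\eqref{eq:vol-cp}. The right-hand inequality in~\eqref{eq:GDKP-main} is immediate: the lower bound $I \le Z$ in the hypothesis~\eqref{eq:GDKP-assumption} gives $\frac{2^n}{n!} - Z \le \frac{2^n}{n!} - I$ at once. So all the work is in the left-hand inequality.

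For the left-hand inequality, I would first use the upper bound $Z \le \bigl(1+\frac{c_2\epsilon^2}{2n}\bigr) I$ from~\eqref{eq:GDKP-assumption} to get
\begin{eqnarray*}
 \frac{2^n}{n!} - Z
 \;\ge\; \frac{2^n}{n!} - \left(1+\frac{c_2\epsilon^2}{2n}\right) I
 \;=\; \left(\frac{2^n}{n!} - I\right) - \frac{c_2\epsilon^2}{2n}\, I .
\end{eqnarray*}
Hence it suffices to show $\frac{c_2\epsilon^2}{2n}\, I \le \epsilon\bigl(\frac{2^n}{n!} - I\bigr)$, equivalently (after dividing by $\epsilon>0$) that $\frac{2^n}{n!} - I \ge \frac{c_2\epsilon}{2n}\, I$, i.e.\ $\frac{2^n}{n!} \ge \bigl(1+\frac{c_2\epsilon}{2n}\bigr) I$. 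But this is exactly the content of Lemma~\ref{lem:intersection-small}, whose hypothesis $1-\beta \ge \frac{c_2\epsilon}{n\|\Vec{a}\|_1}$ (with $0<c_2\epsilon<1$) is precisely the hypothesis we are assuming here. Rearranging the conclusion of Lemma~\ref{lem:intersection-small} gives the needed bound, and the chain
\begin{eqnarray*}
 \frac{2^n}{n!} - Z
 \;\ge\; \left(\frac{2^n}{n!} - I\right) - \frac{c_2\epsilon^2}{2n}\, I
 \;\ge\; \left(\frac{2^n}{n!} - I\right) - \epsilon\left(\frac{2^n}{n!} - I\right)
 \;=\; (1-\epsilon)\left(\frac{2^n}{n!} - I\right)
\end{eqnarray*}
closes the argument.

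There is essentially no hard step here: the proof is a short arithmetic manipulation whose only subtlety is checking that the multiplicative slack $1+\frac{c_2\epsilon^2}{2n}$ allowed in $Z$ combines with the relative gap $1+\frac{c_2\epsilon}{2n}$ supplied by Lemma~\ref{lem:intersection-small} to yield exactly the target factor $1-\epsilon$ — i.e.\ that one power of $\epsilon$ is "paid for" by the gap between $\frac{2^n}{n!}$ and $I$ and the other by the slack in the approximation. The point worth flagging is conceptual rather than technical: this lemma is where the subtraction $\frac{2^n}{n!}-\Vol(Q_0\cap Q_1)$ is shown to be stable under approximation, so the earlier worry about subtractions destroying the approximation ratio is resolved precisely because Lemma~\ref{lem:intersection-small} guarantees the minuend dominates the subtrahend by a $(1+\Omega(\epsilon/n))$ factor.
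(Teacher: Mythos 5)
Your proof is correct and follows essentially the same route as the paper: both split the claim, dispose of the right-hand inequality via $I \le Z$, then for the left-hand inequality combine the hypothesis $Z \le (1+\tfrac{c_2\epsilon^2}{2n})I$ with Lemma~\ref{lem:intersection-small} to absorb the $\tfrac{c_2\epsilon^2}{2n}I$ error into an $\epsilon$-fraction of the gap $\tfrac{2^n}{n!}-I$. The only difference is cosmetic: the paper factors the gap out and bounds the ratio $I/(\tfrac{2^n}{n!}-I)$, whereas you verify the equivalent additive inequality directly.
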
 
\begin{proof}
 The second inequality of \eqref{eq:GDKP-main} is easy from the assumption~\eqref{eq:GDKP-assumption}, 
  such that 
\begin{eqnarray*}
 \frac{2^n}{n!} - Z 
  &\leq& \frac{2^n}{n!} - \Vol(Q_1 \cap Q_0)  
\end{eqnarray*} 
  holds.  
 For the first inequality of \eqref{eq:GDKP-main}, \eqref{eq:GDKP-assumption}
\begin{align}
 \frac{2^n}{n!} - Z 
  &\geq 
   \frac{2^n}{n!} - \left(1+\frac{c_2\epsilon^2}{2n}\right)\Vol(Q_1 \cap Q_0) 
  && (\mbox{by \eqref{eq:GDKP-assumption}}) \nonumber\\
  &=  
   \left(\frac{2^n}{n!} - \Vol(Q_1 \cap Q_0)\right) - \frac{c_2\epsilon^2}{2n} \Vol(Q_1 \cap Q_0) \nonumber\\
  &=  
   \left(\frac{2^n}{n!} - \Vol(Q_1 \cap Q_0)\right)  
   \left(1 - \frac{c_2\epsilon^2}{2n}  \frac{\Vol(Q_1 \cap Q_0)}{\dfrac{2^n}{n!} - \Vol(Q_1 \cap Q_0)} \right) 
\label{eq:tmp160219a}
\end{align} 
  holds. 
 Since the hypothesis $1-\beta \geq \dfrac{c_2\epsilon}{n\|\Vec{a}\|_1}$, 
 Lemma~\ref{lem:intersection-small} implies that 
\begin{eqnarray*}
\Vol(Q_0 \cap Q_1)  \leq \frac{1}{1+\dfrac{c_2 \epsilon}{2n}} \frac{2^n}{n!} 
\end{eqnarray*}
 and hence
\begin{eqnarray*}
 \frac{\Vol(Q_0 \cap Q_1)}{\dfrac{2^n}{n!}-\Vol(Q_0 \cap Q_1)} 
 &=&
 \frac{1}{\dfrac{\frac{2^n}{n!}}{\Vol(Q_0 \cap Q_1)}-1} \\
 &\leq& 
  \frac{1}{1 + \dfrac{c_2 \epsilon}{2n} -1} \\
 &=& 
  \dfrac{2n}{c_2 \epsilon}
\end{eqnarray*}
  holds. 
 Thus,  
\begin{eqnarray*}
\eqref{eq:tmp160219a}
  & \geq & 
   \left(\frac{2^n}{n!} - \Vol(Q_1 \cap Q_0)\right)  
   \left(1 - \frac{c_2\epsilon^2}{2n}  \frac{2n}{c_2 \epsilon} \right) \\
  &=& 
   \left(\frac{2^n}{n!} - \Vol(Q_1 \cap Q_0)\right) 
   (1 - \epsilon) 
\end{eqnarray*} 
 holds, and we obtain the claim.  
\end{proof}

\begin{corollary}\label{cor:error}
 Let $1-\beta = \dfrac{c\epsilon}{n\|\Vec{a}\|_1}$ with $\dfrac{1}{4} \leq c \leq \dfrac{1}{2}$. 
 If we have an approximation $Z$ of $\Vol(Q_1 \cap Q_0)$ satisfying 
\begin{eqnarray}
 \Vol(Q_0 \cap Q_1) \le Z \le \left(1+\frac{c\epsilon^2}{2n}\right)\Vol(Q_0 \cap Q_1)
\label{cor1:hypo}
\end{eqnarray}
 then 
\begin{eqnarray*}
 \widehat{V}:=\frac{1+\epsilon}{1-\beta^n} \left(\frac{2^n}{n!} -Z\right)
\end{eqnarray*}
 satisfies 
\begin{eqnarray*}
 \left(1-\epsilon\right)  \Vol(P_{\Vec{a}}) 
 \leq \widehat{V} 
 \leq (1+\epsilon)\Vol(P_{\Vec{a}}). 
\end{eqnarray*}
\end{corollary}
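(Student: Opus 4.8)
The plan is to chain the three results already in hand. Write $V:=\Vol(P_{\Vec{a}})$ and $W:=\frac{2^n}{n!}-\Vol(Q_0\cap Q_1)$, so that Lemma~\ref{lem:vol-union-gem-series} reads $\Vol\left(\bigcup_{k=0}^{\infty}Q_k\right)=\frac{W}{1-\beta^n}$. The strategy is: (i) use Lemma~\ref{lem:geometric-series} to sandwich $\frac{W}{1-\beta^n}$ between $(1-c\epsilon)V$ and $V$; (ii) use Lemma~\ref{lem:GDKP1} to sandwich $\frac{2^n}{n!}-Z$ between $(1-\epsilon)W$ and $W$; (iii) multiply the bounds from (ii) by the positive number $\frac{1+\epsilon}{1-\beta^n}$, which turns $\frac{2^n}{n!}-Z$ into $\widehat{V}$, and feed in (i).

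For step (i) I would invoke Lemma~\ref{lem:geometric-series} with $c_1=c$: the hypothesis $1-\beta\le\frac{c_1\epsilon}{n\|\Vec{a}\|_1}$ holds with equality since $1-\beta=\frac{c\epsilon}{n\|\Vec{a}\|_1}$, and $0<c\epsilon<1$ because $c\le\frac12$ and $\epsilon\le\frac12$; this gives $(1-c\epsilon)V\le\frac{W}{1-\beta^n}\le V$. For step (ii) I would invoke Lemma~\ref{lem:GDKP1} with $c_2=c$: the hypothesis $1-\beta\ge\frac{c_2\epsilon}{n\|\Vec{a}\|_1}$ again holds with equality, and the assumed bound \eqref{cor1:hypo} on $Z$ is exactly \eqref{eq:GDKP-assumption} for $c_2=c$, so $(1-\epsilon)W\le\frac{2^n}{n!}-Z\le W$. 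Step (iii) then yields $(1-\epsilon^2)\frac{W}{1-\beta^n}\le\widehat{V}\le(1+\epsilon)\frac{W}{1-\beta^n}$, and combining with (i) gives the upper estimate $\widehat{V}\le(1+\epsilon)\frac{W}{1-\beta^n}\le(1+\epsilon)V$ and the lower estimate $\widehat{V}\ge(1-\epsilon^2)\frac{W}{1-\beta^n}\ge(1-\epsilon^2)(1-c\epsilon)V$.

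The only step that is not pure bookkeeping is checking that the outer factor $1+\epsilon$ overcompensates the accumulated shrinkage, i.e.\ that $(1-\epsilon^2)(1-c\epsilon)\ge1-\epsilon$. Using $c\le\frac12$ it suffices to verify $(1-\epsilon^2)\left(1-\frac{\epsilon}{2}\right)\ge1-\epsilon$, and expanding the difference gives the identity $(1-\epsilon^2)\left(1-\frac{\epsilon}{2}\right)-(1-\epsilon)=\frac{\epsilon}{2}(1-\epsilon)^2$, which is nonnegative for $\epsilon\in(0,1)$. This elementary inequality is the place where one must watch the sign flips caused by the subtraction $\frac{2^n}{n!}-Z$; the more delicate half of that worry, namely that $\frac{2^n}{n!}-\Vol(Q_0\cap Q_1)$ is not too small relative to $\frac{2^n}{n!}$, has already been discharged inside Lemma~\ref{lem:GDKP1} by way of Lemma~\ref{lem:intersection-small}. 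Finally, the constraint $c\ge\frac14$ in the statement plays no role in these estimates; it is retained only so that the choice $c=\frac12$ made in Algorithm~\ref{algorithm:GDKP} (where $\frac{c\epsilon^2}{2n}$ becomes the tolerance $\frac{\epsilon^2}{4n}$ of Step~2, and $\widehat{V}$ becomes the output of that algorithm) falls inside the asserted range, so that Lemma~\ref{lem:GDKP} is recovered as the special case $c=\frac12$.
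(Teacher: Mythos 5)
Your proposal is correct and follows essentially the same route as the paper: invoke Lemma~\ref{lem:vol-union-gem-series} to rewrite $\Vol\left(\bigcup_k Q_k\right)$, sandwich it via Lemma~\ref{lem:geometric-series} (the paper uses $c_1=\tfrac12$, you use $c_1=c$ and then relax $1-c\epsilon\ge 1-\epsilon/2$, which is the same thing), apply Lemma~\ref{lem:GDKP1} with $c_2=c$, and multiply by the outer factor $\frac{1+\epsilon}{1-\beta^n}$. Your closing observation that $(1-\epsilon^2)(1-\epsilon/2)-(1-\epsilon)=\frac{\epsilon}{2}(1-\epsilon)^2\ge 0$ even sharpens the paper's appeal to $\epsilon\le\tfrac12$ at that step, and your remark about the role of $c\ge\tfrac14$ is accurate.
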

\begin{proof}
 Recall Lemma~\ref{lem:vol-union-gem-series}, that is 
\begin{eqnarray*}
\Vol\left(\bigcup_{k=0}^\infty Q_k\right)
 = \frac{1}{1-\beta^n} \left(\frac{2^n}{n!} -\Vol(Q_1 \cap Q_0)\right). 
\end{eqnarray*}
 By Lemma~\ref{lem:geometric-series}, 
  the hypothesis $1-\beta \leq \dfrac{\frac{1}{2}\epsilon}{n\|\Vec{a}\|_1}$ implies that 
\begin{eqnarray*}
 \left(1-\frac{1}{2}\epsilon \right)\Vol(P_{\Vec{a}}) 
  \leq \Vol\left(\bigcup_{k=0}^\infty Q_k\right)
  \leq \Vol(P_{\Vec{a}}) 
\end{eqnarray*}
 holds. 
 Thus, \eqref{cor1:hypo} implies that 
\begin{eqnarray*}
 \widehat{V}
  &=& \frac{1+\epsilon}{1-\beta^n} \left(\frac{2^n}{n!} -Z\right) \\
  &\leq & \frac{1+\epsilon}{1-\beta^n} \left(\frac{2^n}{n!} -\Vol(Q_1 \cap Q_0)\right) \\
  &\leq & (1+\epsilon)\Vol(P_{\Vec{a}})
\end{eqnarray*}
 and we obtain the upper bound. 
 Similarly, 
\begin{eqnarray*}
 \widehat{V}
  &=& \frac{1+\epsilon}{1-\beta^n} \left(\frac{2^n}{n!} -Z\right) \\
  &\geq & \frac{1+\epsilon}{1-\beta^n} (1-\epsilon)\left(\frac{2^n}{n!} -\Vol(Q_1 \cap Q_0)\right) \\
  &\geq & (1-\epsilon^2)\frac{1}{1-\beta^n} \left(\frac{2^n}{n!} -\Vol(Q_1 \cap Q_0)\right) \\
  &\geq & (1-\epsilon^2)\left(1-\frac{\epsilon}{2}\right)\Vol(P_{\Vec{a}}) \\
  &\geq & (1-\epsilon)\Vol(P_{\Vec{a}})  \hspace{2em}\left(\mbox{by assumption } \epsilon \leq \frac{1}{2}\right)
\end{eqnarray*}
  and we obtain the claim. 
\end{proof}
 Now,  Lemma~\ref{lem:GDKP} is immediate from Corollary~\ref{cor:error}.

\section{The Volume of the Intersection of Two Cross-polytopes}\label{sec:fptas-two-cp}
 This section gives an FPTAS for the volume of 
the intersection of two cross-polytopes in the $n$-dimensional space. 
 Without loss of generality\footnote{ 
 Remark that 
$\Vol(C(\Vec{c},r)\cap C(\Vec{c}',r')) =
 r^n\Vol\left(C(\Vec{0},1) \cap C\!\left(\tfrac{(\Vec{c}-\Vec{c}')^+}{r}, \tfrac{r'}{r} \right)\right)$
 holds for any $\Vec{c}, \Vec{c}' \in \mathbb{R}^n$ and $r,r' \in \mathbb{R}_{>0}$, 
 where $(\Vec{c} - \Vec{c}')^+ = (|c_1-c_1'|,|c_2-c_2'|,\ldots,|c_n-c_n'| )$. }, 
  we are concerned with $\Vol(C(\Vec{0},1) \cap C(\Vec{c},r))$ for $\Vec{c} \geq \Vec{0}$ and $r$ $(0 < r \le 1)$. 
 This section establishes the following. 
\begin{theorem}\label{th:CPI}
 For any $\delta$ $(0<\delta<1)$, 
  there exists a {\em deterministic} algorithm 
  which outputs a value $Z$ satisfying 
    $\Vol(C(\Vec{0},1)\cap C(\Vec{c},r))\le Z \le (1+\delta)\Vol(C(\Vec{0},1)\cap C(\Vec{c},r))$
  for any input $\Vec{c}\geq\Vec{0}$ and $r$ $(0 < r \leq 1)$ satisfying $\|\Vec{c}\|_1 \leq r$, and  
  runs in $O(n^7 \delta^{-3})$ time. 
\end{theorem}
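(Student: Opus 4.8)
The plan is to compute $\Vol(C(\Vec 0,1)\cap C(\Vec c,r))$, i.e.\ the Lebesgue measure of $R\defeq\{\Vec x\in\mathbb R^n:\|\Vec x\|_1\le 1,\ \|\Vec x-\Vec c\|_1\le r\}$, by a two-dimensional \emph{approximate convolution}, extending the one-dimensional scheme of~\cite{AK2015}. Since $\|\Vec c\|_1\le r\le1$ confines every feasible coordinate to the bounded interval $I_i\defeq[\,c_i-r,\ \min\{c_i+r,1\}\,]$, I would integrate coordinates in one at a time: for $k=0,1,\dots,n$ set
\begin{equation*}
 F_k(s,t)\defeq\Vol\!\left(\left\{\Vec x\in\textstyle\prod_{i\le k}I_i:\ \sum_{i=1}^k|x_i|\le s,\ \sum_{i=1}^k|x_i-c_i|\le t\right\}\right),
\end{equation*}
so that $F_0\equiv\mathbf{1}[\,s\ge0,\,t\ge0\,]$, $\Vol(R)=F_n(1,r)$, and
\begin{equation*}
 F_k(s,t)=\int_{I_k}F_{k-1}\bigl(s-|x|,\ t-|x-c_k|\bigr)\,\dd x
\end{equation*}
(reading $F_{k-1}$ as $0$ when an argument is negative). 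At the level of the associated measures this is a genuine convolution $\mu_k=\mu_{k-1}*\nu_k$ on $\mathbb R^2_{\ge0}$, where $\nu_k$ is the push-forward of Lebesgue measure on $I_k$ under $x\mapsto(|x|,|x-c_k|)$ --- a uniform density carried by a piecewise-linear arc. Each $F_k$ is a degree-$k$ piecewise polynomial in two variables, but with exponentially many pieces, so exact evaluation is infeasible and one must approximate.

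The algorithm I would run maintains, in place of $F_k$, a ``simple'' surrogate $\widetilde F_k$ --- piecewise constant (or piecewise polynomial of bounded degree) on a grid with $\poly(n/\delta)$ cells --- that over-estimates $F_k$ multiplicatively: $F_k\le\widetilde F_k\le(1+\eta_k)F_k$ pointwise. Starting from $\widetilde F_0=F_0$, step $k$ first convolves $\widetilde F_k$ exactly with the elementary kernel $\nu_k$ (cheap, since $\nu_k$ is one-dimensional and piecewise linear in the $(s,t)$-plane), which preserves the one-sided bound because $\nu_k\ge0$ and convolution is monotone; it then \emph{re-discretizes}, replacing the result cell by cell by its rounded-up supremum over the cell. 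Multiplicative error is preserved for free under convolution --- from $g\le\widetilde g\le(1+\alpha)g$ and $h\le\widetilde h\le(1+\beta)h$ one gets $g*h\le\widetilde g*\widetilde h\le(1+\alpha)(1+\beta)\,g*h$ --- so the only new error per step is the re-discretization error, which is a factor in $[1,1+\eta_{\mathrm{step}}]$ provided the grid is fine enough that $F_k$ varies by at most $1+\eta_{\mathrm{step}}$ across each cell. Taking $\eta_{\mathrm{step}}=\Theta(\delta/n)$ and accumulating over $n$ steps gives $\eta_n=(1+\Theta(\delta/n))^n-1=\Theta(\delta)$, so $Z\defeq\widetilde F_n(1,r)$ satisfies $\Vol(R)\le Z\le(1+\delta)\Vol(R)$ after fixing constants. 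With a grid of side $\Theta(n^2/\delta)$ per axis (the per-step budget is $\Theta(\delta/n)$ and the logarithmic gradient of $F_k$ away from the boundary is $\Theta(n)$), $n$ convolution steps, and $\Order(\text{grid side})$ arithmetic per output cell per step --- the kernel meets only a linear number of cells --- the total cost is $\Order(n^7\delta^{-3})$, matching the claim.

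The main obstacle is precisely the hypothesis behind the re-discretization bound: that $F_k$ drops by no more than a factor $1+\eta_{\mathrm{step}}$ within a cell. This fails on a crude uniform grid, because $F_k$ ranges over an exponentially large scale --- near the low-budget boundary it behaves like a high power of the budget, so it changes by a constant factor over a budget-interval of relative width $\Theta(1/n)$ --- while the target $F_n(1,r)=\Vol(R)$ itself may be exponentially smaller than $\|F_n\|_\infty$ (the active region can be essentially a cross-polytope, which fills only a $1/n!$ fraction of its bounding box), so absolute-error guarantees are worthless and relative accuracy is needed wherever the contribution is non-negligible. The way through --- and where~\cite{AK2015} must be modified --- is to combine (i) a sufficiently fine, possibly geometrically refined, grid on which $F_k$ is resolved multiplicatively in the region where the densities stay within a $1/\poly(n/\delta)$ factor of their maximum (equivalently, exploiting that in a thin boundary strip $F_k$ agrees with an explicit low-degree polynomial and can be carried exactly), with (ii) discarding, at $\Order(\delta)$ relative loss, the cells on which the densities have already decayed below an $\Order(\delta/\poly(n))$ fraction of their maximum. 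Both steps rest on quantitative log-concavity / ratio-monotonicity estimates for the $F_k$, available because each $\mu_k$ is a convolution of uniform (hence log-concave) measures and so log-concave; propagating these estimates through the two simultaneous $L_1$ constraints, and checking that the resulting grid stays of polynomial size, is the bulk of the work in Section~\ref{sec:fptas-two-cp}.
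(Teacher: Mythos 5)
Your starting framework is the same as the paper's: express the volume as a two-dimensional convolution $F_k(s,t)=\int F_{k-1}(s-|x|,\,t-|x-c_k|)\,\dd x$ with $\Vol(R)=F_n(1,r)$, and replace each $F_k$ by a staircase surrogate obtained by rounding up on a grid of size $\poly(n/\delta)$. Where you diverge, however, is in the error analysis, and that divergence contains a genuine gap. You attempt to maintain a \emph{pointwise multiplicative} sandwich $F_k\le\widetilde F_k\le(1+\eta_k)F_k$ and argue that multiplicative error composes under convolution plus a per-step re-discretization loss. As you yourself observe, this invariant cannot hold on any fixed grid: near the boundary of the support, $F_k$ drops from positive to zero across a single cell, so $\widetilde F_k/F_k$ is unbounded there, and the "relative variation at most $1+\eta_{\mathrm{step}}$ per cell" hypothesis simply fails. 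The remedies you sketch --- geometrically refined grids, discarding low-density cells, and quantitative log-concavity estimates propagated through two simultaneous $L_1$ constraints --- are not carried out, and none of them is what the paper actually does.

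The paper sidesteps the pointwise-ratio difficulty entirely by proving a different sandwich. Because the staircase rounds \emph{arguments} up to grid points, one shows by a clean induction that
\begin{equation*}
\Psi_i(u,v)\ \le\ G_i(u,v)\ \le\ \Psi_i\!\left(u+\tfrac{i}{M},\,v+\tfrac{ir}{M}\right),
\end{equation*}
i.e.\ the accumulated error is a \emph{shift of the budget arguments}, not a pointwise multiplicative blow-up. No per-cell ratio control is needed anywhere. The only multiplicative comparison required is a single end-to-end bound
\begin{equation*}
\frac{\Psi_n\!\left(1+\tfrac{n}{M},\,r\bigl(1+\tfrac{n}{M}\bigr)\right)}{\Psi_n(1,r)}\ \le\ 1+\delta
\end{equation*}
for $M=\Theta(n^2/\delta)$, and this is exactly where the hypothesis $\|\Vec c\|_1\le r$ enters: it is used in a geometric cone/scaling argument (Lemma~\ref{lem:cone}) showing that $C(\Vec 0,1+\tfrac nM)\cap C(\Vec c,r)$ is contained in the dilation by $1+\tfrac nM$ (about $\Vec 0$, resp.\ $\Vec c$) of $C(\Vec 0,1)\cap C(\Vec c,r)$, whence the volume ratio is at most $(1+\tfrac nM)^{2n}$. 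Your proposal never isolates this lemma, never uses $\|\Vec c\|_1\le r$ for a scaling argument (you only use it to bound coordinate ranges), and instead bets on a log-concavity route that is neither established nor needed. So: right convolution setup, right discretization, wrong error-propagation invariant, and the missing ingredient is precisely the argument-shift sandwich plus the single cone-scaling lemma.
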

 
 The assumption that $\|\Vec{c}\|_1 \leq r$ implies 
   both centers $\Vec{0}$ and $\Vec{c}$ are contained in the intersection $C(\Vec{0},1) \cap C(\Vec{c},r)$. 
 Note that the assumption does not harm to our main goal Theorem~\ref{th:main} 
   (recall Algorithm~\ref{algorithm:GDKP} in Section~\ref{sec:algorithm-P_a}). 
 We show in Section~\ref{sec:hardness} that 
  $\Vol(C(\Vec{0},1) \cap C(\Vec{c},r))$ remains $\#P$-hard even on the assumption. 
 We will use the assumption in the proof of Lemma~\ref{lem:cone}.

\subsection{Preliminary: convolution for the volume}\label{sec:convol}
 As a preliminary step, 
 Section~\ref{sec:convol} gives a convolution which provides $\Vol(C(\Vec{0},1)\cap C(\Vec{c},r))$. 
 Let $\Psi_0 \colon \mathbb{R}^2 \to \mathbb{R}$ be given by 
  $\Psi_0(u,v)=1$ if $u \geq 0$ and $v \geq 0$, 
  otherwise $\Psi_0(u,v)=0$. 
 Inductively, we define $\Psi_i \colon \mathbb{R}^2 \to \mathbb{R}$ for $i=1,2,\ldots,n$ by 
\begin{eqnarray}
\Psi_i(u,v)
 &\defeq & 
 \int_{-1}^{1} \Psi_{i-1}(u-|s|, v-|s-c_i|) \dd s 
\label{eq:def-Psi}
\end{eqnarray}
  for $u,v \in \mathbb{R}$. 
 We remark that $\Psi_i(u,v) = 0$ holds if $u \leq 0$ or $v \leq 0$, for any $i=1,2,\ldots,n$ by the definition. 
\begin{lemma}\label{lem:Psi_n(1,r)}
\begin{eqnarray*}
 \Psi_n(1,r)
 =
 {\rm Vol}(C(\Vec{0},1)\cap C(\Vec{c}, r)). 
\end{eqnarray*}
\end{lemma}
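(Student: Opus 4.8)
The plan is to show that the iterated integral $\Psi_n(1,r)$ literally computes the volume of $C(\Vec{0},1)\cap C(\Vec{c},r)$ by unfolding the recursion and identifying each integration step with integrating out one coordinate. First I would prove by induction on $i$ the following more general statement: for all $u,v\in\mathbb{R}$,
\begin{eqnarray*}
 \Psi_i(u,v) = \Vol_i\!\left(\left\{(x_1,\ldots,x_i)\in\mathbb{R}^i \;\middle|\; \textstyle\sum_{j=1}^i |x_j| \le u,\ \sum_{j=1}^i |x_j - c_j| \le v\right\}\right),
\end{eqnarray*}
where $\Vol_i$ denotes $i$-dimensional Lebesgue measure (and the empty-dimensional case $i=0$ reads: the set is a single point, of measure $1$, present iff $u\ge 0$ and $v\ge 0$, which is exactly $\Psi_0(u,v)$). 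For the inductive step I would write the $(i)$-dimensional set as a union over the last coordinate $x_i=s$ of slices, note that for fixed $s$ the slice condition is $\sum_{j<i}|x_j|\le u-|s|$ and $\sum_{j<i}|x_j-c_j|\le v-|s-c_i|$, and that $|x_i|\le$ (something) forces $s$ to range effectively over $[-1,1]$ once $u\le 1$ — more carefully, the slice is empty unless $|s|\le u$, and since $\Psi_{i-1}$ vanishes when either argument is nonpositive, extending the integration range to $[-1,1]$ (the range in the definition \eqref{eq:def-Psi}) is harmless provided $u\le 1$; here one uses that at the top level we only evaluate at $u=1$ and, by induction back down the recursion, the first argument never exceeds $1$.

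The cleaner way to handle the range issue is to carry the hypothesis $u\le 1$ through the induction, or simply to observe that the integrand $\Psi_{i-1}(u-|s|,v-|s-c_i|)$ is supported on $|s|\le u \le 1$, so the $\int_{-1}^1$ in \eqref{eq:def-Psi} agrees with $\int_{-\infty}^{\infty}$ whenever $u\le 1$. By Tonelli's theorem (the integrand is nonnegative and measurable), $\int_{-\infty}^\infty \Vol_{i-1}(\text{slice at }s)\,\dd s$ equals $\Vol_i$ of the full $i$-dimensional set, completing the induction.

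Finally I would specialize to $i=n$, $u=1$, $v=r$: the inductive formula gives
\begin{eqnarray*}
 \Psi_n(1,r) = \Vol_n\!\left(\left\{\Vec{x}\in\mathbb{R}^n \;\middle|\; \|\Vec{x}\|_1 \le 1,\ \|\Vec{x}-\Vec{c}\|_1 \le r\right\}\right),
\end{eqnarray*}
and by the description \eqref{def:C(c,r)2} of a cross-polytope as an $L_1$-ball, the set on the right is exactly $C(\Vec{0},1)\cap C(\Vec{c},r)$, which is the claim.

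The main obstacle I anticipate is not conceptual but bookkeeping around the finite integration limits $[-1,1]$ in \eqref{eq:def-Psi}: one must justify that truncating to $[-1,1]$ does not lose mass, which relies on the first coordinate-budget $u$ staying $\le 1$ throughout the recursion (true because we start at $u=1$ and each step replaces $u$ by $u-|s|\le u$). Making this precise — either by strengthening the induction hypothesis to include $0\le u\le 1$, or by an explicit support argument — is the one place where care is needed; everything else is a routine application of Tonelli/Fubini and the slicing formula for Lebesgue measure.
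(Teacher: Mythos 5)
Your proposal is correct and is essentially the paper's proof, differing only in normalization: the paper establishes, by the same induction and Fubini-type splitting of the last coordinate, that $\tfrac{1}{2^i}\Psi_i(u,v)$ equals the probability that a uniform $\Vec{X}\in[-1,1]^n$ satisfies $\sum_{j\le i}|X_j|\le u$ and $\sum_{j\le i}|X_j-c_j|\le v$, which is exactly $2^{-i}$ times the $i$-dimensional volume you compute intersected with $[-1,1]^i$. The probabilistic phrasing has one small technical advantage over yours: since the underlying measure is already supported on $[-1,1]^i$, the inductive identity holds for \emph{all} $u,v$ with no side hypothesis, so the range-truncation issue you correctly flag (your unconstrained $\mathbb{R}^i$ version is false for $u>1$ and must carry $u\le 1$ down the recursion) simply never arises.
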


 To prove Lemma~\ref{lem:Psi_n(1,r)}, 
  it might be helpful to introduce a probability space. 
 Let $\Vec{X} = (X_1,\ldots,X_n)$ be a uniform random variable over $[-1,1]^n$, 
  i.e., $X_i$ ($i=1,\ldots,n$) are (mutually) independent. 
 Then, 
\begin{eqnarray}
 \Prob\left[ X \in C(\Vec{0},1)\cap C(\Vec{c},r) \right] 
  = \frac{\Vol(C(\Vec{0},1)\cap C(\Vec{c},r))}{\Vol([-1,1]^n)}
  = \frac{1}{2^n}\Vol(C(\Vec{0},1)\cap C(\Vec{c},r))
\label{eq:prob}
\end{eqnarray} 
 holds. 

\begin{lemma}\label{lem:Psi_n(1,r)-lem} 
For any $i=1,2,\ldots,n$, 
\begin{eqnarray*} 
\dfrac{1}{2^i} \Psi_i(u,v) 
 = 
  \Pr\left[\left(\sum_{j=1}^{i} |X_{j}| \le u \right) \wedge 
           \left(\sum_{j=1}^{i}|X_{j}-c_{j}|\le v \right)\right]
\end{eqnarray*}
for any $u,v \in \mathbb{R}$. 
\end{lemma}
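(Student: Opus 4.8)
The plan is to prove Lemma~\ref{lem:Psi_n(1,r)-lem} by induction on $i$, unwinding the recursive definition~\eqref{eq:def-Psi} of $\Psi_i$ and matching it against the obvious recursion on the probability side obtained by conditioning on $X_i$. First I would set, for brevity, $A_i(u,v)$ to be the event $\bigl(\sum_{j=1}^i |X_j| \le u\bigr) \wedge \bigl(\sum_{j=1}^i |X_j-c_j| \le v\bigr)$, so the claim reads $\frac{1}{2^i}\Psi_i(u,v) = \Pr[A_i(u,v)]$ for all $u,v \in \mathbb{R}$.

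For the base case $i=1$: since $X_1$ is uniform on $[-1,1]$, we have $\Pr[A_1(u,v)] = \frac{1}{2}\int_{-1}^1 \mathbf{1}[|s|\le u]\,\mathbf{1}[|s-c_1|\le v]\,\dd s = \frac{1}{2}\int_{-1}^1 \Psi_0(u-|s|, v-|s-c_1|)\,\dd s = \frac{1}{2}\Psi_1(u,v)$, where the middle equality is just the definition of $\Psi_0$ (the indicator that both arguments are $\ge 0$). For the inductive step, assume the identity for $i-1$. Condition on the value of $X_i = s$, which is uniform on $[-1,1]$ and independent of $X_1,\ldots,X_{i-1}$:
\begin{align*}
 \Pr[A_i(u,v)]
  &= \frac{1}{2}\int_{-1}^1 \Pr\!\left[\Bigl(\textstyle\sum_{j=1}^{i-1}|X_j| \le u - |s|\Bigr) \wedge \Bigl(\textstyle\sum_{j=1}^{i-1}|X_j - c_j| \le v - |s - c_i|\Bigr)\right]\dd s \\
  &= \frac{1}{2}\int_{-1}^1 \Pr[A_{i-1}(u-|s|,\, v-|s-c_i|)]\,\dd s \\
  &= \frac{1}{2}\int_{-1}^1 \frac{1}{2^{i-1}}\Psi_{i-1}(u-|s|,\, v-|s-c_i|)\,\dd s
   = \frac{1}{2^i}\Psi_i(u,v),
\end{align*}
using the induction hypothesis in the third line and the definition~\eqref{eq:def-Psi} in the last. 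This completes the induction.

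I do not expect any real obstacle here; the argument is a routine ``condition on the last coordinate'' induction. The one point worth stating carefully is that the recursion is valid for \emph{all} real $u,v$, including negative values, where both $\Psi_i(u,v)$ and $\Pr[A_i(u,v)]$ vanish (this is exactly the remark after~\eqref{eq:def-Psi} that $\Psi_i(u,v)=0$ when $u\le 0$ or $v\le 0$), so the integrand is well-defined and the identity propagates without any case split. Specializing to $i=n$, $u=1$, $v=r$ then gives $\frac{1}{2^n}\Psi_n(1,r) = \Pr[A_n(1,r)] = \Pr[\Vec{X} \in C(\Vec{0},1)\cap C(\Vec{c},r)]$, which combined with~\eqref{eq:prob} yields Lemma~\ref{lem:Psi_n(1,r)}.
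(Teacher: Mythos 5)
Your proof is correct and takes essentially the same approach as the paper: prove the base case $i=1$ by evaluating the integral against the indicator $\Psi_0$, then induct by conditioning on the last coordinate $X_i$ (the paper indexes the step as $i\to i+1$ rather than $i-1\to i$, but that is cosmetic). Your explicit remark that the identity holds for all real $u,v$ because both sides vanish when $u\le 0$ or $v\le 0$ is a nice touch that the paper leaves implicit.
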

\begin{proof}
 First, we prove the claim for $i=1$. Considering that $\Psi_0(u, v)$ is an indicator function, 
\begin{eqnarray*} 
 \Psi_1(u,v) 
 &=& 
 \int_{-1}^{1} \Psi_0(u-|s|, v-|s-c_i|) \dd s \\
 &=& 
 \left|\{s \in [-1,1] \mid (u-|s| \geq 0) \wedge (v-|s-c_i|\geq 0)\}\right| \\
 &=& 
 2\frac{\left|\{s \in [-1,1] \mid (u-|s| \geq 0) \wedge (v-|s-c_i|\geq 0)\}\right|}{|[-1,1]|} \\
 &=& 
  2\Pr\left[\left(0 \le u-|X_1| \right) \wedge 
           \left(0 \le v-|X_1-c_1| \right)\right] \\
 &=& 
  2\Pr\left[\left(|X_1| \le u \right) \wedge 
           \left(|X_1-c_1|\le v \right)\right] 
\end{eqnarray*}
 and we obtain the claim in the case. 

 Inductively assuming that the claim for $i$, 
  we show that the claim for $i+1$. 
 Let $f$ denote the uniform density over $[-1,1]$. Then, 
\begin{align*}
\lefteqn{ \textstyle
  \Pr\left[\left(\sum_{j=1}^{i+1} \left| X_{j} \right| \le u \right) \wedge 
           \left(\sum_{j=1}^{i+1} \left| X_{j}- c_{j} \right|\le v \right)\right] 
}\\
 &= 
 \int_{-\infty}^{\infty} 
 {\textstyle
  \Pr\left[\left(\sum_{j=1}^{i+1} \left| X_{j} \right| \le u \right) \wedge 
           \left(\sum_{j=1}^{i+1} \left| X_{j}- c_{j} \right|\le v \right) \mid X_{i+1}=s \right] f(s) {\rm d}s }
\\
 &= 
 \int_{-1}^{1} 
 {\textstyle
  \Pr\left[\left(\sum_{j=1}^{i+1} \left| X_{j} \right| \le u \right) \wedge 
           \left(\sum_{j=1}^{i+1} \left| X_{j}- c_{j} \right|\le v \right) \mid X_{i+1}=s \right] \dfrac{1}{2} {\rm d}s }
\\
 &= \dfrac{1}{2} 
 \int_{-1}^{1} 
 {\textstyle
  \Pr\left[\left(\sum_{j=1}^{i} \left| X_{j} \right| + |s|  \le u  \right) \wedge 
           \left(\sum_{j=1}^{i} \left| X_{j}- c_{j} \right| + \left| s- c_{i+1} \right| \le v \right) \right] {\rm d}s }
\\
 &= \dfrac{1}{2} 
 \int_{-1}^{1} 
 {\textstyle
  \Pr\left[\left(\sum_{j=1}^{i} \left| X_{j} \right|  \le u-|s|  \right) \wedge 
           \left(\sum_{j=1}^{i} \left| X_{j}- c_{j} \right| \le v - \left| s- c_{i+1} \right|  \right) \right] {\rm d}s }
\\
 &= \dfrac{1}{2} \int_{-1}^{1} \dfrac{1}{2^{i}} \Psi_{i}(u-|s|,v-|s- c_{i+1}|) {\rm d}s \\
 &= \dfrac{1}{2^{i+1}} \Psi_{i+1}(u,v)
\end{align*}
 and we obtain the claim. 
\end{proof}
 Now, Lemma~\ref{lem:Psi_n(1,r)} is easy from Lemma~\ref{lem:Psi_n(1,r)-lem} and \eqref{eq:prob}.

\subsection{Idea for approximation}\label{sec:G-idea}
 Our FPTAS is based on an approximation of $\Psi_i(u,v)$. 
 Let $G_0(u,v) = \Psi_0(u,v)$ for any $u,v \in \mathbb{R}$, 
  i.e., $G_0(u,v)=1$ if $u \geq 0$ and $v \geq 0$, 
  otherwise $G_0(u,v)=0$.
 Inductively assuming $G_{i-1}(u,v)$, we define 
\begin{align}
 \overline{G}_i(u,v) 
 &\defeq 
 \int_{-1}^1 G_{i-1}(u-|s|,v-|s-c_i|) \dd s 
 \label{form:Gbar1}
\end{align}
  for $u,v \in \mathbb{R}$, for convenience.  
 Then, let $G_i(u,v)$ be a staircase approximation of $\overline{G}_i(u,v)$, given by 
\begin{align}
 G_i(u,v)
  \defeq \begin{cases}
    \overline{G}_i \left(\tfrac{1}{M}k, \tfrac{r}{M}\ell \right) 
 & \left( \begin{array}{lll}\mbox{if}
   &\tfrac{1}{M}(k-1) < u \le \tfrac{1}{M}k       & (k =1,2,\ldots), \mbox{ and } \\ 
   &\tfrac{r}{M}(\ell-1) < v \le \tfrac{r}{M}\ell & (\ell = 1,2,\ldots). \end{array}\right) \\
    0 & (\text{otherwise})
    \end{cases} \label{form:G}
\end{align}
 for any $u,v \in \mathbb{R}$. 
 Thus, we remark that 
\begin{eqnarray}
 G_i(u,v) = G_i \!\left(\tfrac{1}{M}\lceil Mu \rceil, \tfrac{r}{M} \!\left\lceil \tfrac{M}{r} v \right\rceil \right) 
\label{eq:remark-G}
\end{eqnarray}
holds for any $u,v \in \mathbb{R}$, by the definition. 
 Section~\ref{sec:G-algo} will show that $G_i(u,v)$ approximates $\Psi_i(u,v)$ well. 

 In the rest of Section~\ref{sec:G-idea}, 
  we briefly comment on the computation of $G_i$. 
 First, remark that \eqref{form:Gbar1} implies that 
  $\overline{G}_i(u,v)$ is computed only from $G_{i-1}(u',v')$ for $u' \leq u$ and $v' \leq v$, 
  i.e., we do not need to know  $G_{i-1}(u',v')$ for $u' > u$ or $v' > v$.  
 Second, remark \eqref{eq:remark-G} implies that 
  $G_i(u,v)$ for $u \leq 1$ and $v \leq r$  
   takes (at most) $(M+1)^2$ different values. 
 Precisely, 
let 
\begin{eqnarray*}
 \Gamma&\defeq& \left\{ \frac{1}{M} (k, r\ell) \mid k=0, 1, 2,\ldots, M,\  \ell= 0, 1, 2,\ldots, M \right\} \\
\end{eqnarray*}
 then $G_i(u,v)$ for $(u,v) \in \Gamma$ provides 
  all possible values of $G_i(u,v)$ for $u \leq 1$ and $v \leq r$, since \eqref{eq:remark-G}. 

 Then, we explain how to compute $G_i(u,v)$ for $(u,v) \in \Gamma$ from $G_{i-1}$. 
 For an arbitrary $(u,v) \in \Gamma$, let 
\begin{eqnarray*}
 S(u) 
  &\defeq& \left\{ s \in [-1,1] \mid u-|s| =\tfrac{1}{M}k  \ (k=0,1,2,\ldots,M) \right\} \\
  &=& \left\{ s \in [-1,1] \mid s = \pm (u-\tfrac{1}{M}k)  \ (k=0,1,2,\ldots,M) \right\}, 
\end{eqnarray*}
 let
\begin{eqnarray*}
 S_i(v) 
  &\defeq& \left\{ s \in [-1,1] \mid v-|s-c_i| = \tfrac{r}{M}\ell  \ (\ell=0,1,2,\ldots,M) \right\} \\
  &=& \left\{ s \in [-1,1] \mid s = c_i \pm (v - \tfrac{r}{M}\ell)  \ (\ell=0,1,2,\ldots,M) \right\},  
\end{eqnarray*}
  and let 
\begin{eqnarray*}
 T_i(u,v) &\defeq& S(u) \cup S_i(v) \cup \{-1,0,c_i,1\}
\end{eqnarray*}
 Suppose $t_0,t_1,\ldots,t_m$ be an ordering of all elements of $T_i(u,v)$ 
  such that $t_i \leq t_{i+1}$ for any $i=0,1,\ldots,m$, where $m=|T_i(u,v)|$. 
Then, we can compute $G_i(u,v)$ for any $(u,v) \in \Gamma$ by 
\begin{eqnarray}
 G_i(u,v) 
 &=& \overline{G}_i(u,v) \nonumber\\
 &=& \int_{-1}^1 G_{i-1}(u-|s|, v-|s-c_i|) \dd s \nonumber \\
 &=& \sum_{j=0}^{m-1} \int_{t_j}^{t_{j+1}} G_{i-1}(u-|s|,v-|s-c_i|) \dd s \nonumber \\
 &=& \sum_{j=0}^{m-1} \int_{t_j}^{t_{j+1}} 
     G_{i-1}\left(\tfrac{1}{M} \lceil M(u-|t_{j+1}|) \rceil, \tfrac{r}{M} \left\lceil \tfrac{M}{r}(v-|t_{j+1}-c_i|)\right\rceil \right) \dd s
 \qquad(\mbox{by \eqref{eq:remark-G}}) 
 \nonumber\\
 &=& \sum_{j=0}^{m-1} (t_{j+1}-t_j)
     G_{i-1}\left(\tfrac{1}{M} \lceil M(u-|t_{j+1}|) \rceil, \tfrac{r}{M} \left\lceil \tfrac{M}{r}(v-|t_{j+1}-c_i|)\right\rceil \right) 
\label{form:Gbar}
\end{eqnarray}
 where we remark again that the terms of \eqref{form:Gbar} consist of $G_{i-1}(u,v)$ for $(u,v) \in \Gamma$.

\subsection{Algorithm and analysis}\label{sec:G-algo}
 Based on the arguments in Section~\ref{sec:G-idea}, 
 our algorithm is described as follows. 
\begin{algorithm}[for $(1+\delta)$-approximation ($0<\delta \leq 1$)] 
\label{algorithm:CPI}\ \\
  Input: $\Vec{c}\in \mathbb{Q}_{\ge 0}^n$, $r \in \mathbb{Q}$ ($0 \leq r \le 1$); \\
  1. Set $M:= \lceil 4n^2\delta^{-1} \rceil$; \\
  2. Set $G_0(u,v):=1$ for $(u,v) \in \Gamma$, otherwise $G_0(u,v):=0$; \\
  3. For $i:=1,\ldots,n$, \\
  4. \hspace*{3mm} For $(u,v) \in \Gamma$, \\
  5. \hspace*{6mm} Compute $G_i(u,v)$ from $G_{i-1}$ by \eqref{form:Gbar}; \\
  6. Output $G_n(1, r)$. 
\end{algorithm}
\begin{lemma}\label{lem:G-time}
  The running time of Algorithm~\ref{algorithm:CPI} is $\Order(n^7\delta^{-3})$. 
\end{lemma}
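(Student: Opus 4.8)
The plan is to bound the running time of Algorithm~\ref{algorithm:CPI} by counting, for each outer iteration $i$, the cost of computing all the values $G_i(u,v)$ for $(u,v) \in \Gamma$ via formula~\eqref{form:Gbar}. First I would record the size of $\Gamma$: since $M = \lceil 4n^2\delta^{-1} \rceil = \Order(n^2\delta^{-1})$, we have $|\Gamma| = (M+1)^2 = \Order(M^2) = \Order(n^4\delta^{-2})$. The algorithm has an outer loop over $i = 1,\ldots,n$ and, inside it, an inner loop over the $|\Gamma|$ pairs $(u,v)$, so the total number of calls to the evaluation step~\eqref{form:Gbar} is $\Order(n \cdot |\Gamma|) = \Order(n^5 \delta^{-2})$.

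Next I would estimate the cost of a single evaluation of~\eqref{form:Gbar} for one $(u,v) \in \Gamma$. The work there is dominated by forming the sorted list $t_0 \le t_1 \le \cdots \le t_m$ of the set $T_i(u,v) = S(u) \cup S_i(v) \cup \{-1,0,c_i,1\}$ and then summing $m$ terms, each of which is a single lookup into the table of $G_{i-1}$ values together with a constant-arithmetic computation of the two ceiling-rounded arguments. The key observation is that $|S(u)| \le 2(M+1)$ and $|S_i(v)| \le 2(M+1)$, so $m = |T_i(u,v)| = \Order(M)$. Sorting costs $\Order(M \log M)$, and the summation costs $\Order(M)$ lookups, so one evaluation of~\eqref{form:Gbar} takes $\Order(M \log M)$ time (or $\Order(M)$ if one exploits that the two sets $S(u)$, $S_i(v)$ are each already in arithmetic progression and can be merged in linear time — either way it is $\widetilde{\Order}(M)$, and for the stated bound $\Order(M)$ suffices since $\log M = \Order(M)$ trivially, or one just absorbs the log).

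Combining the two counts gives a total running time of $\Order(n^5\delta^{-2}) \cdot \Order(M) = \Order(n^5 \delta^{-2} \cdot n^2 \delta^{-1}) = \Order(n^7 \delta^{-3})$, which is the claimed bound; the initialization in Step~2 costs only $\Order(|\Gamma|) = \Order(n^4\delta^{-2})$, which is subsumed. I would also note that each arithmetic operation is on rationals of size polynomial in the input and in $\log M$, so treating arithmetic as unit cost (or paying only a $\poly\log$ overhead) does not affect the stated polynomial bound; if the paper wants the bound in the bit model one simply multiplies by a $\poly(\log(\cdot))$ factor, which I would mention in passing but not belabor.

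The main obstacle — really the only nontrivial point — is the bound $m = |T_i(u,v)| = \Order(M)$ and the claim that each of the $m$ summands in~\eqref{form:Gbar} is computable in constant time. The first is immediate from the explicit descriptions $S(u) = \{\pm(u - \frac{k}{M}) : k = 0,\ldots,M\}$ and $S_i(v) = \{c_i \pm (v - \frac{r\ell}{M}) : \ell = 0,\ldots,M\}$, each of cardinality at most $2(M+1)$. The second holds because, for $s$ in the open interval $(t_j, t_{j+1})$, the expressions $\frac{1}{M}\lceil M(u - |t_{j+1}|)\rceil$ and $\frac{r}{M}\lceil \frac{M}{r}(v - |t_{j+1} - c_i|)\rceil$ are constants landing on the grid defining $\Gamma$ (by the choice of $T_i(u,v)$, the floor of each argument is constant on each subinterval), so $G_{i-1}$ at that point is a single table entry; thus the inductive structure that made $G_i$ a staircase function with breakpoints at multiples of $1/M$ and $r/M$ is exactly what keeps the per-interval evaluation $\Order(1)$. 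Once these two facts are in hand the time bound is just the arithmetic above.
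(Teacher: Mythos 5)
Your argument matches the paper's proof essentially step for step: $|\Gamma| = \Order(M^2)$, a single evaluation of~\eqref{form:Gbar} costs $\Order(M)$ because $m = |T_i(u,v)| \leq 2M+4$ and each summand is a constant-time table lookup, and multiplying by the $n$ outer iterations gives $\Order(nM^3) = \Order(n^7\delta^{-3})$. The one thing you spell out more carefully than the paper is why the sorted ordering of $T_i(u,v)$ can be produced in $\Order(M)$ rather than $\Order(M\log M)$ time (merging two arithmetic progressions); the paper simply asserts $\Order(M)$, and your linear-merge observation is the right justification, so just drop the throwaway ``$\log M = \Order(M)$'' aside, which if taken literally would inflate the bound.
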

\begin{proof}
 First, we are concerned with the running time of line 5. 
 The equation \eqref{form:Gbar} is a sum consisting of $m$ terms, 
  where clearly $m \leq 2M+4 =\Order(M)$. 
 We specially note that the ordering $t_0,t_1,t_2,\ldots,t_m$ of $T_i(u,v)$ is 
  obtained in $\Order(M)$ time, and hence line 5 runs in $\Order(M)$ time. 
 Since $|\Gamma|=\Order(M^2)$, 
 it is easy to see that the running time of Algorithm~\ref{algorithm:CPI} is $\Order(nM^3)$. 
 Since $M=\Order(n^2\delta^{-1})$ by line 2, we obtain the claim. 
\end{proof}

 Theorem \ref{th:CPI} is immediate from Lemma~\ref{lem:G-time} and the following Lemma~\ref{lem:G-approx}. 
\begin{lemma}\label{lem:G-approx}
\begin{eqnarray*}
 \Psi_n(1,r) \leq G_n(1,r) \leq (1 + \delta) \Psi(1,r). 
\end{eqnarray*}
\end{lemma}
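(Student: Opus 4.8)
The plan is to prove the two inequalities of Lemma~\ref{lem:G-approx} by establishing, for every $i=0,1,\ldots,n$, a two-sided comparison between $G_i$ and $\Psi_i$ on the grid $\Gamma$, and then specializing to $i=n$ at the point $(1,r)$. Concretely, I would prove by induction on $i$ the pair of statements
\begin{eqnarray*}
 \Psi_i(u,v) \;\leq\; G_i(u,v)
 \quad\text{and}\quad
 G_i(u,v) \;\leq\; \left(1+\tfrac{2i}{M}\right)\Psi_i\!\left(u+\tfrac{i}{M},\,v+\tfrac{ri}{M}\right)
\end{eqnarray*}
(or some close variant of the upper bound with a comparable slack), valid for all $u,v$. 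The lower bound is the easy direction: since $G_i$ is a staircase that, by~\eqref{form:G}, takes on each cell the value $\overline{G}_i$ at the cell's \emph{upper-right corner}, and $\overline{G}_i$ is nondecreasing in both arguments (being an integral of the nondecreasing $G_{i-1}$ by~\eqref{form:Gbar1}), we get $G_i(u,v)\geq \overline{G}_i(u,v)$ pointwise; combined with the inductive hypothesis $G_{i-1}\geq \Psi_{i-1}$ plugged into~\eqref{form:Gbar1} versus~\eqref{eq:def-Psi}, monotonicity of the integral gives $\overline{G}_i\geq\Psi_i$, hence $G_i\geq\Psi_i$. Setting $i=n$, $(u,v)=(1,r)$ yields $G_n(1,r)\geq\Psi_n(1,r)$.

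For the upper bound I would track how much the rounding-up in~\eqref{form:G} inflates the argument at each of the $n$ convolution steps. Two error sources accumulate: (a) replacing $(u,v)$ by the cell corner $(\tfrac1M\lceil Mu\rceil,\tfrac rM\lceil\tfrac Mr v\rceil)$ shifts each coordinate up by at most $\tfrac1M$ (resp.\ $\tfrac rM$); (b) inside the integral~\eqref{form:Gbar}, on each subinterval $[t_j,t_{j+1}]$ the integrand $G_{i-1}(u-|s|,v-|s-c_i|)$ is replaced by its value at $s=t_{j+1}$, i.e.\ at the argument $(u-|t_{j+1}|,v-|t_{j+1}-c_i|)$ which is \emph{the pointwise smallest} on that subinterval — but $G_{i-1}$ is itself a staircase, so within the subinterval the true $G_{i-1}$ value can only be at most one grid step larger in each coordinate, adding another $\tfrac1M,\tfrac rM$. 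Feeding the inductive hypothesis through~\eqref{form:Gbar1}, each step therefore replaces $\Psi_{i-1}$ evaluated at a point by $\Psi_{i-1}$ at a point shifted up by $O(1/M)$ in each coordinate, and multiplies the $(1+\tfrac{2(i-1)}{M})$ prefactor by $(1+\tfrac2M)$-ish. After $n$ steps the argument of $\Psi_n$ is shifted to roughly $(1+\tfrac nM,\,r(1+\tfrac nM))$ with prefactor $(1+\tfrac{2n}{M})^{n}$ or so. Then I use the scaling identity $\Vol(C(\Vec0,1)\cap C(\Vec c,r))=r^n\Vol(C(\Vec0,1)\cap C(\tfrac{\Vec c}{r}\cdot\tfrac{r}{r},\ldots))$ — more directly, homogeneity of~\eqref{eq:def-Psi} gives $\Psi_n(\lambda u,\lambda v)$ comparable to $\lambda^n\Psi_n(u,v)$ along the relevant ray — to convert the argument shift into a multiplicative factor $(1+\tfrac nM)^{n}$. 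Collecting everything, $G_n(1,r)\le(1+\tfrac{cn^2}{M})\Psi_n(1,r)$ for an absolute constant $c$; with $M=\lceil 4n^2\delta^{-1}\rceil$ from line~1 of Algorithm~\ref{algorithm:CPI}, the factor is $1+O(\delta)\le 1+\delta$ (after checking the constant, possibly absorbing $O(\delta^2)$ terms using $\delta\le1$).

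The main obstacle I anticipate is bookkeeping the upper-bound error cleanly: the rounding goes \emph{up} in the argument (bad, it enlarges $\Psi$) while the subinterval Riemann step uses the \emph{smallest} value on the subinterval (good, it shrinks), and the net effect must be shown to be a controlled one-sided overestimate, not an uncontrolled mix. The delicate point is that~\eqref{eq:remark-G}/\eqref{form:Gbar} already commits to evaluating $G_{i-1}$ at the rounded-up corner $(\tfrac1M\lceil M(u-|t_{j+1}|)\rceil,\ldots)$ with $t_{j+1}$ the right endpoint, so one must argue that this particular discretization is monotone — i.e.\ that $G_i$ as computed is sandwiched, $\overline G_i(u,v)\le G_i(u,v)\le\overline G_i(u+\tfrac1M,v+\tfrac rM)$ or similar — and only then iterate. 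Converting the additive $O(n/M)$ coordinate shift into the multiplicative $\delta$-bound via homogeneity of the cross-polytope volume is the second thing to get right, but it is routine once the induction is set up. Everything else is monotonicity of integrals and the choice of $M$.
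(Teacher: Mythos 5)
Your lower bound direction is correct and matches the paper's (Lemma~\ref{lemma:lowG} plus Observation~\ref{obs:Psi-monotone}): rounding the argument up and using monotonicity gives $G_i\geq\Psi_i$. The accumulation-of-error strategy for the upper bound is also in the right spirit: the paper's Lemma~\ref{lem:upG} shows $G_i(u,v)\leq\Psi_i\bigl(u+\tfrac{i}{M},\,v+\tfrac{ri}{M}\bigr)$ purely by shifting the argument --- note there is no extra multiplicative prefactor $\bigl(1+\tfrac{2i}{M}\bigr)$; the Riemann sum in~\eqref{form:Gbar} is an exact evaluation of the integral of a piecewise-constant function, so your ``error source (b)'' does not exist. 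That extra slack would still be absorbable, so it is not itself fatal.

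The genuine gap is in the final conversion step, which you dismiss as ``routine once the induction is set up.'' The function $\Psi_n$ is \emph{not} homogeneous: $\Psi_n(\lambda u,\lambda v)\neq\lambda^n\Psi_n(u,v)$, because scaling the radii $u,v$ does not scale the fixed center $\Vec{c}$ of the second cross-polytope nor the ambient cube $[-1,1]^n$ that enters through Lemma~\ref{lem:Psi_n(1,r)-lem}. Converting the additive shift $\bigl(1+\tfrac{n}{M},\,r+\tfrac{rn}{M}\bigr)$ into a multiplicative $(1+\delta)$ bound is precisely the content of the paper's Lemma~\ref{lem:cone}, and it is the technical crux: one must build cones $K(q)$ and $K'(q)$ from each center over the shared facet of the intersection, show the inclusions $C(\Vec0,1+\tfrac{n}{M})\cap C(\Vec c,r)\subseteq K(1+\tfrac{n}{M})$ and $C(\Vec0,1+\tfrac{n}{M})\cap C(\Vec c,r(1+\tfrac{n}{M}))\subseteq K'(1+\tfrac{n}{M})$, and do this in two separate stages (scale the first radius, then the second). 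Both inclusions use the standing assumption $\|\Vec{c}\|_1\leq r$ --- each cross-polytope contains the other's center --- which your proposal never invokes; without it, the cone argument fails and the lemma is false in general. The section's whole reason for imposing $\|\Vec{c}\|_1\leq r$ in Theorem~\ref{th:CPI} is exactly this step, so a proof that treats it as routine has a hole at the load-bearing point.
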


 The rest of Section~\ref{sec:G-algo} proves Lemma~\ref{lem:G-approx}. 
 As a preliminary we remark the following observation from Lemma~\ref{lem:Psi_n(1,r)-lem}. 
\begin{observation}\label{obs:Psi-monotone}
 $\Psi_i(u,v)$ is monotone non-decreasing with respect to $u$, as well as $v$. 
\end{observation}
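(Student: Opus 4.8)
The plan is to read the monotonicity directly off the probabilistic reformulation in Lemma~\ref{lem:Psi_n(1,r)-lem}. That lemma gives, for every $i=0,1,\ldots,n$ and every $u,v \in \mathbb{R}$,
\begin{eqnarray*}
 \frac{1}{2^i}\Psi_i(u,v)
 = \Pr\left[\left(\textstyle\sum_{j=1}^{i} |X_{j}| \le u \right) \wedge
            \left(\textstyle\sum_{j=1}^{i}|X_{j}-c_{j}|\le v \right)\right].
\end{eqnarray*}
Write $S \defeq \sum_{j=1}^{i} |X_{j}|$ and $T \defeq \sum_{j=1}^{i}|X_{j}-c_{j}|$, so $\Psi_i(u,v) = 2^i\,\Pr[(S \le u) \wedge (T \le v)]$. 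Fix $v$ and take $u \le u'$: the event $\{S \le u\} \cap \{T \le v\}$ is contained in $\{S \le u'\} \cap \{T \le v\}$, so by monotonicity of the probability measure $\Pr[(S \le u) \wedge (T \le v)] \le \Pr[(S \le u') \wedge (T \le v)]$, and multiplying by the positive constant $2^i$ yields $\Psi_i(u,v) \le \Psi_i(u',v)$. The argument for the second coordinate is verbatim the same with the roles of $(S,u)$ and $(T,v)$ swapped. This establishes the observation on all of $\mathbb{R}^2$.

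As an alternative not using the probability space, one can induct on $i$ via the recursion~\eqref{eq:def-Psi}. For $i=0$, $\Psi_0$ is the indicator of the upward-closed set $\{u \ge 0,\ v \ge 0\}$ and hence non-decreasing in each coordinate. Assuming $\Psi_{i-1}$ is non-decreasing in each argument, if $u \le u'$ then $u-|s| \le u'-|s|$ for every $s \in [-1,1]$, so the induction hypothesis gives the pointwise bound $\Psi_{i-1}(u-|s|,v-|s-c_i|) \le \Psi_{i-1}(u'-|s|,v-|s-c_i|)$; integrating over $s \in [-1,1]$ preserves this, whence $\Psi_i(u,v) \le \Psi_i(u',v)$. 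Applying the same reasoning to the second slot of $\Psi_{i-1}$ handles monotonicity in $v$.

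There is essentially no obstacle here: the claim is a routine consequence of either the set-inclusion monotonicity of a probability measure or of the fact that integrating a pointwise inequality preserves it. The one point worth flagging is that Lemma~\ref{lem:Psi_n(1,r)-lem} is stated for all $u,v \in \mathbb{R}$, not merely for the grid $\Gamma$ or for $u \le 1$, $v \le r$, so the probabilistic route indeed delivers monotonicity everywhere on $\mathbb{R}^2$ — which is what the later uses of~\eqref{eq:remark-G} and the staircase estimate in Lemma~\ref{lem:G-approx} rely on.
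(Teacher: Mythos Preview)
Your proof is correct and follows essentially the same approach as the paper: both invoke Lemma~\ref{lem:Psi_n(1,r)-lem} and use the monotonicity of the probability measure under event inclusion. Your inductive alternative via~\eqref{eq:def-Psi} is a nice addition not present in the paper.
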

\begin{proof}
Suppose that $u\le u'$ and $v\le v'$ hold.
Lemma~\ref{lem:Psi_n(1,r)-lem} implies that 
\begin{align*}
 \Psi_i(u,v)
  &= 2^i \Pr\left[\left(\sum_{j=1}^{i} |X_{j}| \le u \right) \wedge 
           \left(\sum_{j=1}^{i}|X_{j}-c_{j}|\le v \right)\right] \\
  &\leq 2^i \Pr\left[\left(\sum_{j=1}^{i} |X_{j}| \le u' \right) \wedge 
           \left(\sum_{j=1}^{i}|X_{j}-c_{j}|\le v' \right)\right] 
  = \Psi_i(u',v').
\end{align*}
\end{proof}

First, we give a lower bound of $G_i(u,v)$. 
\begin{lemma}\label{lemma:lowG}
 $\Psi_i(u,v) \le G_i(u,v)$
for any $u,v \in \mathbb{R}$ and $i =1,2,\ldots,n$. 
\end{lemma}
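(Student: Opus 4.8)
The plan is to prove $\Psi_i(u,v) \le G_i(u,v)$ by induction on $i$, exploiting the monotonicity recorded in Observation~\ref{obs:Psi-monotone} together with the staircase structure of $G_i$. For the base case $i=0$ we have $G_0 = \Psi_0$ by definition, so equality (hence $\le$) holds trivially. For the inductive step I would compare $\Psi_i$, which satisfies the exact recursion \eqref{eq:def-Psi}, with $\overline{G}_i$ from \eqref{form:Gbar1}: since the two integrands differ only by replacing $\Psi_{i-1}$ with $G_{i-1}$, the induction hypothesis $\Psi_{i-1} \le G_{i-1}$ gives, pointwise under the integral sign,
\begin{align*}
 \Psi_i(u,v)
  = \int_{-1}^{1} \Psi_{i-1}(u-|s|,v-|s-c_i|)\,\dd s
  \le \int_{-1}^{1} G_{i-1}(u-|s|,v-|s-c_i|)\,\dd s
  = \overline{G}_i(u,v).
\end{align*}

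It then remains to show $\overline{G}_i(u,v) \le G_i(u,v)$, i.e.\ that the staircase rounding in \eqref{form:G} does not decrease the value. This is where I expect the only real (though mild) subtlety. By \eqref{form:G}, for $u>0$, $v>0$ the value $G_i(u,v)$ equals $\overline{G}_i(\tfrac{1}{M}\lceil Mu\rceil, \tfrac{r}{M}\lceil \tfrac{M}{r}v\rceil)$, which is $\overline{G}_i$ evaluated at a point whose coordinates are $\ge u$ and $\ge v$ respectively. So I need $\overline{G}_i$ to be monotone non-decreasing in each argument. This follows because $\overline{G}_i$ is an integral of $G_{i-1}$, and $G_{i-1}$ is itself monotone non-decreasing in each coordinate — a fact that should be extracted as part of the induction (it holds for $G_0$, and the rounding map $(u,v)\mapsto(\tfrac1M\lceil Mu\rceil,\tfrac{r}{M}\lceil\tfrac{M}{r}v\rceil)$ together with $\overline{G}_i$'s monotonicity propagates it to $G_i$). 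Raising $u$ (resp.\ $v$) only lowers $u-|s|$... wait: raising $u$ \emph{raises} $u-|s|$ pointwise in $s$, so $G_{i-1}(u-|s|,v-|s-c_i|)$ increases, hence $\overline{G}_i$ increases; good. When $u\le 0$ or $v\le 0$ we have $\Psi_i(u,v)=0$ and $G_i(u,v)=0$, so the inequality is trivial there, and the case $u>0,v>0$ is handled as above.

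The cleanest packaging is therefore to strengthen the induction hypothesis to the conjunction ``$\Psi_{i-1}\le G_{i-1}$ and $G_{i-1}$ is monotone non-decreasing in each variable,'' carry both through the step, and read off Lemma~\ref{lemma:lowG} as the first half. The main obstacle — such as it is — is purely bookkeeping: being careful that the ceiling-rounding moves each argument in the direction that makes $\overline{G}_i$ (and hence $G_i$) larger, not smaller, and that the degenerate region $u\le 0$ or $v\le 0$ is dispatched separately rather than swept into the staircase formula. No estimates on $M$ or $\delta$ are needed here; those enter only in the companion upper bound $G_n(1,r)\le(1+\delta)\Psi_n(1,r)$.
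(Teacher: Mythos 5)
Your proof is correct, and it reaches the same conclusion as the paper but by a genuinely different chain of inequalities. You decompose $\Psi_i \le \overline{G}_i \le G_i$: the first step is the pointwise comparison under the integral using the induction hypothesis, and the second step absorbs the ceiling-rounding by appealing to monotonicity of $\overline{G}_i$ — which in turn forces you to carry the extra invariant ``$G_{i-1}$ is monotone non-decreasing in each variable'' through the induction, exactly as you observe. The paper instead starts from the other end: it expands $G_{i+1}(u,v) = \overline{G}_{i+1}(\tfrac1M\lceil Mu\rceil, \tfrac{r}{M}\lceil\tfrac{M}{r}v\rceil)$ directly from the definitions (no monotonicity needed yet), applies the induction hypothesis under the integral to drop $G_i$ down to $\Psi_i$, recognizes the result as $\Psi_{i+1}$ evaluated at the rounded-up point, and only then invokes monotonicity — but of $\Psi_{i+1}$, via Observation~\ref{obs:Psi-monotone}, which is already established from the probabilistic interpretation. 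The net effect is that the paper never needs to prove $G$ or $\overline{G}$ is monotone and hence needs no strengthened induction hypothesis. Both proofs are sound; the paper's ordering is slightly more economical because it reuses a fact already in hand rather than propagating a new one. You handled the degenerate region $u\le 0$ or $v\le 0$ correctly and caught the potential sign confusion (raising $u$ raises $u-|s|$), so there are no gaps — just a modest increase in bookkeeping compared to the published argument.
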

\begin{proof}
 We give an inductive proof.  $\Psi_0(u,v) = G_0(u,v)$ by the definition. 
 Inductively assuming the claim for $i$, we show the claim for $i+1$ as follows:  
\begin{align*}
  G_{i+1}(u,v) 
  &= \overline{G}_{i+1}(\tfrac{1}{M}\lceil Mu \rceil, \tfrac{r}{M} \lceil \tfrac{M}{r}v \rceil) 
  && (\mbox{Recall \eqref{form:G} and \eqref{eq:remark-G}})\\
  &=\int_{-1}^1 G_i(\tfrac{1}{M}\lceil Mu \rceil -|s|, \tfrac{r}{M} \lceil \tfrac{M}{r}v \rceil -|s-c_i|){\rm d}s \\
  &\ge \int_{-1}^1 \Psi_i(\tfrac{1}{M}\lceil Mu \rceil -|s|, \tfrac{r}{M} \lceil \tfrac{M}{r}v \rceil -|s-c_i|) {\rm d}s && (\mbox{Induction hypo.})\\
  &=\Psi_{j+1}(\tfrac{1}{M}\lceil Mu \rceil, \tfrac{r}{M} \lceil \tfrac{M}{r}v \rceil)  \\
  &\geq \Psi_{j+1}(u,v)  && (\mbox{By Obs.~\ref{obs:Psi-monotone}})
\end{align*}
and we obtain the claim. 
\end{proof}

Next, we give an upper bound of $G_i(u,v)$.
\begin{lemma}
\label{lem:upG}
$G_i(u,v) \le \Psi_i(u+\frac{1}{M}i, v+\frac{r}{M}i)$ 
for any $u,v \in \mathbb{R}$ and $i =1,2,\ldots,n$. 
\end{lemma}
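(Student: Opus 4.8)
The plan is to mirror the inductive structure used for the lower bound in Lemma~\ref{lemma:lowG}, but now tracking how the staircase rounding in~\eqref{form:G} inflates the arguments. The base case $i=0$ is immediate since $G_0 = \Psi_0$ and $\Psi_0$ is monotone, so $G_0(u,v) = \Psi_0(u,v) \le \Psi_0(u+\tfrac{1}{M}\cdot 0, v + \tfrac{r}{M}\cdot 0)$ trivially. For the inductive step, assume the claim for $i$ and estimate $G_{i+1}(u,v)$. First I would use~\eqref{form:G} together with~\eqref{eq:remark-G} to write $G_{i+1}(u,v) = \overline{G}_{i+1}(\tfrac{1}{M}\lceil Mu\rceil, \tfrac{r}{M}\lceil\tfrac{M}{r}v\rceil)$, and then expand $\overline{G}_{i+1}$ via its defining convolution~\eqref{form:Gbar1} as an integral of $G_i(\tfrac{1}{M}\lceil Mu\rceil - |s|, \tfrac{r}{M}\lceil\tfrac{M}{r}v\rceil - |s-c_{i+1}|)\,\dd s$ over $s\in[-1,1]$.

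The key step is then to apply the induction hypothesis inside the integral to replace each $G_i(\cdot,\cdot)$ by $\Psi_i(\cdot + \tfrac{1}{M}i, \cdot + \tfrac{r}{M}i)$, giving an integrand $\Psi_i(\tfrac{1}{M}\lceil Mu\rceil - |s| + \tfrac{1}{M}i,\ \tfrac{r}{M}\lceil\tfrac{M}{r}v\rceil - |s-c_{i+1}| + \tfrac{r}{M}i)$. Next I would bound the rounding terms: $\tfrac{1}{M}\lceil Mu\rceil \le u + \tfrac{1}{M}$ and $\tfrac{r}{M}\lceil\tfrac{M}{r}v\rceil \le v + \tfrac{r}{M}$, so the integrand is at most $\Psi_i(u + \tfrac{1}{M}(i+1) - |s|,\ v + \tfrac{r}{M}(i+1) - |s - c_{i+1}|)$ by Observation~\ref{obs:Psi-monotone}. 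Integrating over $s\in[-1,1]$ and recognizing the result as precisely the defining convolution~\eqref{eq:def-Psi} of $\Psi_{i+1}$ evaluated at the shifted point, we get $\Psi_{i+1}(u + \tfrac{1}{M}(i+1), v + \tfrac{r}{M}(i+1))$, which completes the induction.

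I expect the main obstacle to be purely bookkeeping rather than conceptual: carefully confirming that the convolution recursion~\eqref{eq:def-Psi} for $\Psi$ and the convolution~\eqref{form:Gbar1} for $\overline{G}$ have \emph{identical} kernels (integration of a translate of the previous function over $s\in[-1,1]$ with displacements $|s|$ and $|s-c_i|$), so that after applying monotonicity the shifted integral collapses exactly into $\Psi_{i+1}$ at the shifted argument with no leftover error. One should also note that the claimed inequality is vacuously fine when $\lceil Mu\rceil$ or $\lceil\tfrac{M}{r}v\rceil$ is nonpositive, since then $G_{i+1}(u,v)=0$ by~\eqref{form:G} and $\Psi_{i+1}\ge 0$ always; this handles the ``otherwise'' branch of the definition of $G_{i+1}$ and should be dispatched as a separate trivial case before the main computation.
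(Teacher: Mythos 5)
Your proposal matches the paper's proof essentially step for step: induction on $i$ with $G_0=\Psi_0$ as the base case, expanding $G_{i+1}(u,v)=\overline{G}_{i+1}(\tfrac{1}{M}\lceil Mu\rceil,\tfrac{r}{M}\lceil\tfrac{M}{r}v\rceil)$ via the convolution, applying the induction hypothesis inside the integral, bounding the rounding by $\tfrac{1}{M}\lceil Mu\rceil\le u+\tfrac1M$ and $\tfrac{r}{M}\lceil\tfrac{M}{r}v\rceil\le v+\tfrac{r}{M}$ together with Observation~\ref{obs:Psi-monotone}, and folding the result back into $\Psi_{i+1}$. Your extra remark about the ``otherwise'' branch of \eqref{form:G} and your use of the (correct) index $c_{i+1}$ in the convolution kernel are minor improvements in care, not a different argument.
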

\begin{proof}
 The proof is an induction on $n$.
 By the definition that $G_0(u,v)=\Psi_0(u,v)$ for any $u,v$, 
   the claim is clear when $n=0$. 
 Inductively assuming the claim holds when $n=i$, 
   meaning that $G_i(u,v) \le \Psi_i(u+\frac{1}{M}j, v+\frac{r}{M}j)$ holds, 
  we show the claim when $n = i+1$. 
By the definition of $G_{i}(u,v)$ and $\overline{G}_{i+1}(u,v)$, we have
\begin{align*}
  G_{i+1}(u,v) 
  &= \overline{G}_{i+1}(\tfrac{1}{M}\lceil Mu \rceil, \tfrac{r}{M} \lceil \tfrac{M}{r}v \rceil) 
  && (\mbox{Recall \eqref{form:G} and \eqref{eq:remark-G}})\\
  &=\int_{-1}^1 G_i(\tfrac{1}{M}\lceil Mu \rceil -|s|, \tfrac{r}{M} \lceil \tfrac{M}{r}v \rceil -|s-c_i|){\rm d}s \\
  &\le \int_{-1}^1 \Psi_i(\tfrac{1}{M}\lceil Mu \rceil -|s|+\tfrac{1}{M}i, \tfrac{r}{M} \lceil \tfrac{M}{r}v \rceil -|s-c_i|+\tfrac{r}{M}i) {\rm d}s 
  && (\mbox{Induction hypo.})\\
  &\le \int_{-1}^1 \Psi_i(u+\tfrac{1}{M}-|s|+\tfrac{1}{M}i,v+\tfrac{r}{M}-|s-c_i|+\tfrac{r}{M}i) {\rm d}s  
  &&\left(\begin{array}{l}
  \mbox{By Obs.~\ref{obs:Psi-monotone}. Remark } \\ 
  \tfrac{1}{M}\lceil Mu \rceil \leq u+\tfrac{1}{M}, \\
  \tfrac{r}{M} \lceil \tfrac{M}{r}v \rceil \leq v+\tfrac{r}{M}. \end{array}\right)\\
  &= \int_{-1}^1 \Psi_i(u+\tfrac{1}{M}(i+1)-|s|,v+\tfrac{r}{M}(i+1)-|s-c_i|) {\rm d}s \\
  &=\Psi_{i+1}(u+\tfrac{1}{M}(i+1),v+\tfrac{r}{M}(i+1))  
\end{align*}
 and we obtain the claim. 
\end{proof}

\begin{lemma}\label{lem:cone}
 When $\|\Vec{c}\|_1 \leq r$, 
\begin{eqnarray*}
 \frac{\Psi(1,r)}{\Psi(1+\tfrac{n}{M},r(1+\tfrac{n}{M}))}
\geq 
 \left(\frac{M}{M+n}\right)^{2n}. 
\end{eqnarray*}
\end{lemma}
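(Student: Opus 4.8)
The plan is to strip the problem down to a pure statement about volumes of intersections of $L_1$-balls and then prove it by two successive dilations. Write $\lambda := 1+\tfrac{n}{M}$, so the claimed bound is $\Psi_n(1,r)/\Psi_n(\lambda,r\lambda)\ge\lambda^{-2n}$, and put $K := C(\Vec{0},1)\cap C(\Vec{c},r)$. Two reductions come for free: Lemma~\ref{lem:Psi_n(1,r)} gives $\Psi_n(1,r)=\Vol(K)$ outright, while in the denominator, since by Lemma~\ref{lem:Psi_n(1,r)-lem} $\Psi_n(u,v)$ is the volume of $\{\Vec{x}\in[-1,1]^n:\|\Vec{x}\|_1\le u,\ \|\Vec{x}-\Vec{c}\|_1\le v\}$, dropping the box only enlarges the region, so $\Psi_n(\lambda,r\lambda)\le\Vol\big(C(\Vec{0},\lambda)\cap C(\Vec{c},r\lambda)\big)$. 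Hence it suffices to prove
\[
 \Vol\big(C(\Vec{0},\lambda)\cap C(\Vec{c},r\lambda)\big)\ \le\ \lambda^{2n}\,\Vol(K).
\]

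The first dilation rescales the whole picture. Substituting $\Vec{x}=\lambda\Vec{z}$ turns $C(\Vec{0},\lambda)\cap C(\Vec{c},r\lambda)$ into $\lambda\big(C(\Vec{0},1)\cap C(\tfrac1\lambda\Vec{c},r)\big)$, so the left-hand side equals $\lambda^n\Vol\big(C(\Vec{0},1)\cap C(\tfrac1\lambda\Vec{c},r)\big)$. Then I would show $C(\Vec{0},1)\cap C(\tfrac1\lambda\Vec{c},r)\subseteq C(\Vec{0},1)\cap C(\Vec{c},r\lambda)$: for $\Vec{x}$ in the left set, the triangle inequality gives $\|\Vec{x}-\Vec{c}\|_1\le\|\Vec{x}-\tfrac1\lambda\Vec{c}\|_1+(1-\tfrac1\lambda)\|\Vec{c}\|_1\le r+(1-\tfrac1\lambda)r=(2-\tfrac1\lambda)r\le\lambda r$, where the hypothesis $\|\Vec{c}\|_1\le r$ enters, and $2-\tfrac1\lambda\le\lambda$ is just $(\lambda-1)^2\ge 0$. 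This yields $\Vol\big(C(\Vec{0},\lambda)\cap C(\Vec{c},r\lambda)\big)\le\lambda^n\Vol\big(C(\Vec{0},1)\cap C(\Vec{c},r\lambda)\big)$.

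The second dilation, centered at $\Vec{c}$, absorbs the inflated inner radius. I would show $C(\Vec{0},1)\cap C(\Vec{c},r\lambda)\subseteq\Vec{c}+\lambda\,(K-\Vec{c})$: a point $\Vec{x}$ lies in $\Vec{c}+\lambda(K-\Vec{c})$ exactly when $\Vec{c}+\tfrac1\lambda(\Vec{x}-\Vec{c})\in K$, i.e.\ when $\|\Vec{x}-\Vec{c}\|_1\le r\lambda$ and $\|\Vec{x}+(\lambda-1)\Vec{c}\|_1\le\lambda$; for $\Vec{x}\in C(\Vec{0},1)\cap C(\Vec{c},r\lambda)$ the first holds by definition and the second from $\|\Vec{x}+(\lambda-1)\Vec{c}\|_1\le\|\Vec{x}\|_1+(\lambda-1)\|\Vec{c}\|_1\le 1+(\lambda-1)r\le\lambda$, now using $r\le 1$. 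Hence $\Vol\big(C(\Vec{0},1)\cap C(\Vec{c},r\lambda)\big)\le\lambda^n\Vol(K)$, and chaining the two estimates gives the displayed inequality, hence the lemma.

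None of the individual steps is hard; the point that needs the most care is the structure itself. A single dilation of $K$ cannot contain $C(\Vec{0},\lambda)\cap C(\Vec{c},r\lambda)$ with a $\lambda^{2n}$ volume bound, because scaling about either $\Vec{0}$ or $\Vec{c}$ drives the two centers apart at rate $(\lambda-1)\|\Vec{c}\|_1$; the fix is to peel off a uniform rescaling (one factor $\lambda^n$) and a radius inflation (a second factor $\lambda^n$), steering the center drift through the triangle inequality with the two hypotheses $\|\Vec{c}\|_1\le r$ and $r\le1$, respectively. Choosing the two dilation centers ($\Vec{0}$ then $\Vec{c}$) and the intermediate body $C(\Vec{0},1)\cap C(\Vec{c},r\lambda)$ correctly is the crux, and explains why the exponent is $2n$ rather than $n$.
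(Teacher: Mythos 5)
Your proof is correct and takes essentially the same route as the paper: both factor the $\lambda^{2n}$ bound into two $\lambda^{n}$ pieces via successive dilations about $\Vec{0}$ and about $\Vec{c}$, invoking $\|\Vec{c}\|_1\le r$ for the first step and $r\le1$ for the second. The only cosmetic differences are that you express the dilations as direct set containments rather than via the paper's auxiliary cones $K(q)$, $K'(q)$, and your intermediate body inflates the second radius first ($C(\Vec{0},1)\cap C(\Vec{c},r\lambda)$) whereas the paper inflates the first radius first ($C(\Vec{0},1+\tfrac{n}{M})\cap C(\Vec{c},r)$).
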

\begin{proof}
 We prove the following two inequalities, 
\begin{eqnarray}
 \frac{\Psi(1,r)}{\Psi(1+\tfrac{n}{M},r)}
&\geq &
 \left(\frac{M}{M+n}\right)^n, 
\qquad\mbox{and} \label{eq:scale1} \\
 \frac{\Psi(1+\tfrac{n}{M},r)}{\Psi(1+\tfrac{n}{M},r(1+\tfrac{n}{M}))}
&\geq &
 \left(\frac{M}{M+n}\right)^n, 
\label{eq:scale2}
\end{eqnarray}
 respectively, 
 where the proofs of \eqref{eq:scale1} and \eqref{eq:scale2} are similar. 
 The claim is clear from \eqref{eq:scale1} and \eqref{eq:scale2}. 

First we prove \eqref{eq:scale1}. 
For convenience, let
\begin{align*}
  K(q)
  = \left\{\lambda \Vec{x} \in \mathbb{R}^n 
  \mid \Vec{x} \in C(\Vec{0},1)\cap C(\Vec{c},r), \  \lambda \in \mathbb{R} \mbox{ such that } 0 \leq \lambda \leq q \right\} 
\end{align*}
 for $q \geq 1$. 
 It is not difficult to see from the definition that 
\begin{align}
 \frac{\Vol(K(1))}{\Vol(K(1+\tfrac{n}{M}))}
 = \left(\frac{M}{M+n}\right)^n
\label{eq:ratioK1}
\end{align}
 holds, 
 where we remark that $\Vol(K(1))=\Psi(1,r)$ since $K(1)=C(\Vec{0},1)\cap C(\Vec{c},r)$ by the definition.
 To claim $\Psi\!\left(1+\tfrac{n}{M},r\right)  \leq \Vol(K(1+\tfrac{n}{M}))$, 
 we show that 
\begin{eqnarray}
 C\!\left(\Vec{0},1+\tfrac{n}{M}\right) \cap C(\Vec{c},r)
 \subseteq 
  K(1+\tfrac{n}{M})
\label{claim:C1}
\end{eqnarray}
 holds. 
 Suppose $\Vec{y} \in C\!\left(\Vec{0},1+\tfrac{n}{M}\right) \cap C(\Vec{c},r)$, and 
  we prove $\Vec{y} \in K(1+\tfrac{n}{M})$. 
 More precisely,  
 let $\Vec{w} = \lambda^{-1} \Vec{y}$ where $\lambda = 1+\tfrac{n}{M}$, and 
  we show $\Vec{w} \in K(1)$. 
 Since $\Vec{y} \in C\!\left(\Vec{0},1+\tfrac{n}{M}\right)$, 
  $\|\Vec{w}\|_1 = \lambda^{-1} \|\Vec{y}\|_1 \leq 1$ holds, meaning that $\Vec{w} \in C(\Vec{0},1)$. 
 Considering $\Vec{y} \in C(\Vec{c},r)$, and the assumption $\|\Vec{c}\|_1 \leq r$, we have 
\begin{eqnarray*}
 \|\Vec{w} - \Vec{c}\|_1 
 &=& \|\lambda^{-1} \Vec{y} -\Vec{c}\|_1 \\
 &=& \|\lambda^{-1} (\Vec{y} -\Vec{c}) - (1-\lambda^{-1})\Vec{c} \|_1 \\
 &\leq& \lambda^{-1} \|\Vec{y} -\Vec{c}\|_1 + (1-\lambda^{-1}) \|\Vec{c} \|_1 \\
 &\leq& \lambda^{-1} r + (1-\lambda^{-1}) r \\
 &=& r
\end{eqnarray*}
   and hence $\Vec{w} \in C(\Vec{c},r)$. We obtain \eqref{claim:C1}. 
 Carefully recalling Lemma~\ref{lem:Psi_n(1,r)-lem},  
  $\Psi(1+\tfrac{n}{M},r) = \Vol(C\!\left(\Vec{0},1+\tfrac{n}{M}\right) \cap C(\Vec{c},r)\cap [-1,1]^n)$ holds, 
 which implies $\Psi(1+\tfrac{n}{M},r) \leq \Vol(K(1+\tfrac{n}{M}))$ with \eqref{claim:C1}. 
 Now, \eqref{eq:scale1} is easy from~\eqref{eq:ratioK1}. 

 The proof of \eqref{eq:scale2} is similar. 
Let
\begin{align*}
  K'(q)
  = \left\{\lambda (\Vec{x}-\Vec{c}) \in \mathbb{R}^n 
  \mid \Vec{x} \in C\!\left(\Vec{0},1+\tfrac{n}{M}\right) \cap C(\Vec{c},r), \  \lambda \in \mathbb{R} \mbox{ such that } 0 \leq \lambda \leq q \right\} 
\end{align*}
 for $q \geq 1$. 
 Then, 
 It is not difficult to see from the definition that 
\begin{align}
 \frac{\Vol(K'(1))}{\Vol(K'(1+\tfrac{n}{M}))}
 = \left(\frac{M}{M+n}\right)^n
\label{eq:ratioK2}
\end{align}
 holds, 
 where we remark that $\Vol(K'(1))=\Psi(1+\tfrac{n}{M},r)$. 
 To claim $\Psi\!\left(1+\tfrac{n}{M},r(1+\tfrac{n}{M})\right)  \leq \Vol(K'(1+\tfrac{n}{M}))$, 
 we show that 
\begin{eqnarray}
 C\!\left(\Vec{0},1+\tfrac{n}{M}\right) \cap C(\Vec{c},r(1+\tfrac{n}{M}))
 \subseteq 
  K'(1+\tfrac{n}{M})
\label{claim:C2}
\end{eqnarray}
 holds. 
 Suppose $\Vec{y}' \in C\!\left(\Vec{0},1+\tfrac{n}{M}\right) \cap C(\Vec{c},r(1+\tfrac{n}{M}))$, and 
  we prove $\Vec{y}' \in K'(1+\tfrac{n}{M})$. 
 More precisely,  
 let $\Vec{w}' = \lambda^{-1} (\Vec{y}'-\Vec{c})+\Vec{c}$ where $\lambda = 1+\tfrac{n}{M}$, and 
  we show $\Vec{w}' \in K'(1)$. 
 Since $\Vec{y}' \in C\!\left(\Vec{c},r(1+\tfrac{n}{M})\right)$, 
  $\|\Vec{w}'-\Vec{x}\|_1 = \lambda^{-1} \|\Vec{y}'-\Vec{c}\|_1 \leq 1$ holds, meaning that $\Vec{w}' \in C(\Vec{0},1)$. 
 Considering $\Vec{y}' \in C(\Vec{0},1+\tfrac{n}{M})$, and the assumption $\|\Vec{c}\|_1 \leq r$, we have 
\begin{eqnarray*}
 \|\Vec{w}'\|_1 
 &=& \|\lambda^{-1} (\Vec{y}' -\Vec{c}) + \Vec{c}\|_1 \\
 &=& \|\lambda^{-1} \Vec{y}' + (1-\lambda^{-1})\Vec{c} \|_1 \\
 &\leq& \lambda^{-1} \|\Vec{y}\|_1 + (1-\lambda^{-1}) \|\Vec{c} \|_1 \\
 &\leq& \lambda^{-1} \left(1+\tfrac{n}{M} \right) + (1-\lambda^{-1}) r \\
 &\leq& \lambda^{-1} \left(1+\tfrac{n}{M} \right) + (1-\lambda^{-1}) \left(1+\tfrac{n}{M} \right) 
 \hspace{3em} (\mbox{since $r \leq 1$}) \\
 &=& \left(1+\tfrac{n}{M} \right)
\end{eqnarray*}
   and hence $\Vec{w}' \in C\!\left(\Vec{0},1+\tfrac{n}{M} \right)$. 
 We obtain \eqref{claim:C2}, and hence \eqref{eq:scale2} from \eqref{eq:ratioK2}. 
 Now, we obtain the claim. 
\end{proof}

 Now, we prove Lemma~\ref{lem:G-approx}. 
\begin{proof}[Proof of Lemma~\ref{lem:G-approx}]
 The first inequality is immediate from Lemma~\ref{lemma:lowG}. 
 Then, we show the latter inequality.  
Lemma~\ref{lem:cone} implies that 
\begin{eqnarray*}
\frac{\Psi_n(1,r)}{\Psi_n(1+\tfrac{n}{M},r(1+\tfrac{n}{M}))}
 &\geq& \left(\frac{M}{M+n} \right)^{2n} 
 = \left(\frac{1}{1+\frac{n}{M}} \right)^{2n} \\
 &\geq& \left(1-\frac{n}{M} \right)^{2n} 
  \hspace{4em} \left(\mbox{since $\left(1+\frac{n}{M} \right)^{2n}\left(1-\frac{n}{M} \right)^{2n} \leq 1$}\right) \\
 &\geq& \left(1-\frac{\delta}{4n} \right)^{2n} 
  \hspace{4em} \left(\mbox{since $M \geq 4n^2\delta^{-1}$}\right) \\
 &\geq& 1-2n\frac{\delta}{4n} 
  = 1-\frac{\delta}{2}
\end{eqnarray*}
 holds. Thus, 
\begin{eqnarray*}
\frac{\Psi_n(1+\tfrac{n}{M},r(1+\tfrac{n}{M}))}{\Psi_n(1,r)}
\leq \frac{1}{1-\frac{\delta}{2}}
\leq 1+\delta
\end{eqnarray*}
 for any $\delta \leq 1$, and we obtain the claim. 
\end{proof}

\section{Hardness of the Volume of the Intersection of Two Cross-polytopes}\label{sec:hardness}
 This section establishes the following. 
\begin{theorem}\label{thm:hardness}
 Given a vector $\Vec{c} \in\mathbb{Z}_{>0}^n$ and integers $r_1,r_2 \in \mathbb{Z}_{>0}$, 
  computing the volume of $C(\Vec{0},r_1) \cap C(\Vec{c},r_2)$ is {\#}P-hard, 
  even when each cross-polytopes contains the center of the other one, 
 i.e., 
  $\Vec{0} \in C(\Vec{c},r_2)$ and  
  $\Vec{c} \in C(\Vec{0},r_1)$.  
\end{theorem}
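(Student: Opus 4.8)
The plan is to reduce from Khachiyan's $\#$P-hard problem of computing $\Vol(P_{\Vec{a}})$ for $\Vec{a}\in\Integer_{\ge0}^n$. The starting point is the closed form coming from the beneath--beyond expansion of $P_{\Vec{a}}=\conv(C(\Vec{0},1)\cup\{\Vec{a}\})$ (see~\eqref{def:P_a2}): a facet $F_{\Vec{\tau}}=\conv\{\tau_i\Vec{e}_i\}$ of $C(\Vec{0},1)$ is visible from $\Vec{a}$ exactly when $\langle\Vec{a},\Vec{\tau}\rangle>1$, the pyramid $\conv(F_{\Vec{\tau}}\cup\{\Vec{a}\})$ over such a facet has $n$-volume $(\langle\Vec{a},\Vec{\tau}\rangle-1)/n!$, and these pyramids tile $P_{\Vec{a}}\setminus C(\Vec{0},1)$, so that
\[
 \Vol(P_{\Vec{a}})=\frac{1}{n!}\Bigl(2^{n}+\sum_{\Vec{\tau}\in\{-1,1\}^{n}}\max\{\langle\Vec{a},\Vec{\tau}\rangle-1,\,0\}\Bigr).
\]
Hence it suffices to recover $W\defeq\sum_{\langle\Vec{a},\Vec{\tau}\rangle>1}(\langle\Vec{a},\Vec{\tau}\rangle-1)$ from polynomially many calls to a volume oracle for intersections of two cross-polytopes, plus polynomial-time post-processing.

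Given $\Vec{a}$, take $\Vec{c}\defeq\Vec{a}\ge\Vec{0}$ and, for an integer $r_1\ge\|\Vec{a}\|_1$, put $r_2\defeq r_1+1$; then $\|\Vec{c}\|_1\le r_1$ and $\|\Vec{c}\|_1\le r_2$, so $\Vec{a}\in C(\Vec{0},r_1)$ and $\Vec{0}\in C(\Vec{a},r_2)$, i.e.\ every instance produced already meets the hypotheses of the theorem (exactly the regime of Algorithm~\ref{algorithm:GDKP}). Writing $\widetilde V(r_1)\defeq\Vol\bigl(C(\Vec{0},r_1)\cap C(\Vec{a},r_1+1)\bigr)$ and using the facet description~\eqref{def:C(c,r)3} of $C(\Vec{a},r_1+1)$,
\[
 \widetilde V(r_1)=\frac{2^{n}r_1^{n}}{n!}-\Vol\Bigl(\,\bigcup_{\Vec{\tau}\in\{-1,1\}^n}K_{\Vec{\tau}}(r_1)\Bigr),\quad
 K_{\Vec{\tau}}(r_1)\defeq\{\Vec{x}\in C(\Vec{0},r_1):\langle\Vec{x},\Vec{\tau}\rangle>r_1+1+\langle\Vec{a},\Vec{\tau}\rangle\}.
\]
Since $\max_{\Vec{x}\in C(\Vec{0},r_1)}\langle\Vec{x},\Vec{\tau}\rangle=r_1$ is attained exactly on the simplicial facet $F_{\Vec{\tau}}$ of $C(\Vec{0},r_1)$, the ``cap'' $K_{\Vec{\tau}}(r_1)$ is nonempty precisely when $\langle\Vec{a},\Vec{\tau}\rangle<-1$, and is then a simplicial slab against $F_{\Vec{\tau}}$ of depth $d_{\Vec{\tau}}\defeq-1-\langle\Vec{a},\Vec{\tau}\rangle$, a quantity independent of $r_1$. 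The involution $\Vec{\tau}\mapsto-\Vec{\tau}$ matches the nonempty caps with Khachiyan's visible facets and matches $d_{-\Vec{\tau}}=\langle\Vec{a},\Vec{\tau}\rangle-1$ with the pyramid heights above.

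The core claim is then that for all integers $r_1$ beyond an explicit polynomial threshold, $\widetilde V(r_1)$ coincides with a polynomial in $r_1$ of degree $n$, and that
\[
 \widetilde V(r_1)=\frac{2^{n}}{n!}\,r_1^{\,n}\;-\;\frac{1}{(n-1)!}\Bigl(\sum_{\langle\Vec{a},\Vec{\tau}\rangle<-1}d_{\Vec{\tau}}\Bigr)r_1^{\,n-1}\;+\;\Order(r_1^{\,n-2}).
\]
The reason is that each single cap has $\Vol(K_{\Vec{\tau}}(r_1))=d_{\Vec{\tau}}\,r_1^{\,n-1}/(n-1)!+\Order(r_1^{\,n-2})$ (a thin slab against the dilating $(n{-}1)$-simplex $F_{\Vec{\tau}}$), whereas every pairwise and higher intersection $K_{\Vec{\tau}}(r_1)\cap K_{\Vec{\tau}'}(r_1)$ is confined to a bounded neighbourhood of the lower-dimensional common face $F_{\Vec{\tau}}\cap F_{\Vec{\tau}'}$, hence has volume only $\Order(r_1^{\,n-2})$; so the inclusion--exclusion corrections never reach the $r_1^{\,n-1}$ coefficient. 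Thus, querying the oracle at $n+1$ large integer values of $r_1$ and interpolating yields that coefficient $v_{n-1}$, whence $W=-(n-1)!\,v_{n-1}$ and $\Vol(P_{\Vec{a}})=(2^{n}-(n-1)!\,v_{n-1})/n!$. As this runs in polynomial time and $\Vol(P_{\Vec{a}})$ is $\#$P-hard, computing the volume of an intersection of two cross-polytopes is $\#$P-hard, and it stays so under the center-containment restriction.

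The main obstacle is the core claim itself: proving that $\widetilde V$ is eventually a genuine degree-$n$ polynomial (which needs a handle on when the combinatorial type of the $\mathcal{H}$-polytope $C(\Vec{0},r_1)\cap C(\Vec{a},r_1+1)$ stabilizes, together with a polynomial bound on that threshold), and that the cap overlaps contribute only at order $r_1^{\,n-2}$. Both rest on the feature special to $L_1$-balls that each face of a cross-polytope is a scaled coordinate simplex, so the regions sliced off by another cross-polytope's facets are simplicial slabs whose mutual overlaps live only on their lower-dimensional shared faces; this is precisely what $L_\infty$- and $L_2$-balls lack, the intersection of two parallel $L_\infty$-balls being a box and of two $L_2$-balls a spherical lens, each with a single elementary volume formula, so that no subset-sum information can be encoded there. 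If one prefers to avoid proving eventual polynomiality, a fallback is to first dilate $\Vec{a}$ so that its signed subset sums are widely spaced and then to read the counts $\#\{\Vec{\tau}:\langle\Vec{a},\Vec{\tau}\rangle=k\}$ off the jumps of the (piecewise-polynomial) volume as $r_1-r_2$ sweeps past each breakpoint, at the price of a more careful local overlap analysis.
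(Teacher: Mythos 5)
Your reduction goes in a genuinely different direction from the paper's. The paper reduces from counting set partitions (its {\#}LARGE SET) by building a \emph{thin annular slab} $C(\delta\Vec{a},1)\cap C(\Vec{0},1+\epsilon)\setminus C(\Vec{0},1)$ with $\epsilon<\delta\ll 1/(n2^n\|\Vec{a}\|_1)$: the volume of each orthant piece is shown to be within a small additive error of $\epsilon/(n-1)!$ when $\langle\Vec{\sigma},\Vec{a}\rangle>0$ and exactly $0$ otherwise, so the desired count is recovered by two oracle calls, a division, and rounding. You instead reduce from Khachiyan's {\#}P-hardness of $\Vol(P_{\Vec{a}})$ itself, using the beneath--beyond pyramid formula $\Vol(P_{\Vec{a}})=(2^n+W)/n!$ with $W=\sum_{\langle\Vec{a},\Vec{\tau}\rangle>1}(\langle\Vec{a},\Vec{\tau}\rangle-1)$, and you observe that this $W$ is encoded (up to a sign and a factorial) in the degree-$(n-1)$ coefficient of $\widetilde V(r_1)=\Vol(C(\Vec{0},r_1)\cap C(\Vec{a},r_1+1))$, provided $\widetilde V$ is eventually a degree-$n$ polynomial. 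The underlying geometric intuition is the same --- $L_1$-facets encode signed subset sums of $\Vec{a}$, and the intersection peels away a thin simplicial slab per facet whose thickness records $\langle\Vec{a},\Vec{\tau}\rangle$ --- but the extraction mechanism (asymptotic interpolation versus a single rounding at a tiny scale) is different. Both constructions respect the center-containment restriction.

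That said, your proof has a real gap, which you yourself flag: the ``core claim'' that $\widetilde V(r_1)$ is eventually a genuine degree-$n$ polynomial, with a polynomially-encodable stabilization threshold, is asserted rather than proved. The leading-order and cap-overlap estimates you sketch are correct as asymptotics --- each cap does contribute $d_{\Vec{\tau}}r_1^{n-1}/(n-1)!+\Order(r_1^{n-2})$ and each pairwise overlap $K_{\Vec{\tau}}\cap K_{\Vec{\tau}'}$ is confined to a band where the coordinates with $\tau_i\neq\tau'_i$ are bounded, hence $\Order(r_1^{n-2})$ --- but asymptotics alone do not deliver exactness of a finite interpolation. To close the gap you would need to argue that the combinatorial type of the $\mathcal H$-polytope $\{\Vec{x}:\langle\Vec{x},\Vec{\sigma}\rangle\le r_1+\min(0,1+\langle\Vec{a},\Vec{\sigma}\rangle)\}$ stabilizes past some $R$, and that $R$ has bit size $\mathrm{poly}(n,\log\|\Vec{a}\|_1)$. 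This is plausible (candidate vertices are affine in $r_1$ via Cramer's rule on $\pm1$ matrices, breakpoints are roots of linear equations with Hadamard-bounded coefficients, so $R\le n^{\Order(n)}\|\Vec{a}\|_1$), but it is not a one-line observation and you do not supply it. A simpler patch, and one that lands much closer to the paper's method, is to abandon interpolation in favour of rounding: take a single $r_1$ so large that the $\Order(r_1^{n-2})$ corrections are provably less than $r_1^{n-1}/(2(n-1)!)$ in absolute value (this needs an \emph{explicit} constant in your $\Order$ bounds, not just an order of growth), so that $-(n-1)!\bigl(\widetilde V(r_1)-\tfrac{2^n}{n!}r_1^n\bigr)/r_1^{n-1}$ rounds to the integer $W$ exactly. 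As written, neither the stabilization argument nor the quantitative error bound is present, so the reduction is not yet complete.
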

 The proof of Theorem~\ref{thm:hardness} is a reduction of counting set partitions, which is a well-known {\#}P-hard problem. 
\subsection{Idea for the reduction}\label{sec:hard1}
 To be precise, 
  we reduce the following problem, which is a version of counting set partition. 
\begin{problem}[{\#}LARGE SET]\label{prob:LSet}
 Given an integer vector $\Vec{a} \in \mathbb{Z}_{>0}^n$ 
  such that $\|\Vec{a}\|_1$ is even, meaning that $\|\Vec{a}\|_1/2$ is an integer, 
 the problem is to compute 
\begin{eqnarray}
 \left| \{ \Vec{\sigma} \in \{-1,1\}^n \mid \langle \Vec{\sigma}, \Vec{a} \rangle > 0 \} \right|. 
\label{def:largeset1}
\end{eqnarray}
\end{problem}
 Note that 
\begin{eqnarray*}\textstyle
  \left| \{ \Vec{\sigma} \in \{-1,1\}^n \mid \langle \Vec{\sigma}, \Vec{a} \rangle = 0 \} \right|   
=
  \left| \left\{ S \subseteq \{1,\ldots,n\} \mid  \sum_{i \in S} a_i = \frac{\|\Vec{a}\|_1}{2} \right\} \right|   
\end{eqnarray*}
  holds:
  if $\Vec{\sigma} \in \{-1,1\}^n$ satisfies $\langle \Vec{\sigma}, \Vec{a} \rangle = 0$, 
  then let $S \subseteq \{1,\ldots,n\}$ be the set of indices of $\sigma_i=1$ 
 then 
  $\sum_{i \in S} a_i = \|\Vec{a}\|_1/2$ holds. 
 Using the following simple observation, 
  we see that Problem~\ref{prob:LSet} is equivalent to counting set partitions. 
\begin{observation}\label{obs:set-parition}
 For any $\Vec{\sigma} \in \{-1,1\}^n$,  
  $\langle \Vec{\sigma},\Vec{a} \rangle > 0$ if and only if $\langle -\Vec{\sigma}, \Vec{a} \rangle < 0$. 
\end{observation}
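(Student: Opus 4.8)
The plan is to exploit the linearity of the standard inner product in its first argument. Recall that $\langle \Vec{u},\Vec{v}\rangle = \sum_{i=1}^n u_i v_i$, so for every $\Vec{\sigma} \in \{-1,1\}^n$ we have the identity
\begin{eqnarray*}
 \langle -\Vec{\sigma},\Vec{a}\rangle
 = \sum_{i=1}^n (-\sigma_i) a_i
 = -\sum_{i=1}^n \sigma_i a_i
 = -\langle \Vec{\sigma},\Vec{a}\rangle.
\end{eqnarray*}
First I would record this identity explicitly, as it is the only ingredient. Then the claimed equivalence is immediate: $\langle \Vec{\sigma},\Vec{a}\rangle > 0$ holds if and only if $-\langle \Vec{\sigma},\Vec{a}\rangle < 0$, and by the identity above the latter is precisely $\langle -\Vec{\sigma},\Vec{a}\rangle < 0$. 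No case analysis is needed, and I would emphasise that neither the positivity nor the integrality of $\Vec{a}$, nor the parity hypothesis on $\|\Vec{a}\|_1$, is used here; those assumptions are only relevant for the companion counting identity relating the quantity in \eqref{def:largeset1} to the number of set partitions.

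The one point worth stating carefully is that $-\Vec{\sigma}$ again belongs to $\{-1,1\}^n$, so the observation really exhibits $\Vec{\sigma}\mapsto-\Vec{\sigma}$ as an involution on the domain of the objects being counted; this is exactly the property that makes the observation useful downstream, since it lets one pair each $\Vec{\sigma}$ with $\langle \Vec{\sigma},\Vec{a}\rangle>0$ against its negation with $\langle -\Vec{\sigma},\Vec{a}\rangle<0$. I do not anticipate any genuine obstacle: the statement is a one-line consequence of bilinearity, and I expect the final write-up to consist of the single displayed equation above followed by one concluding sentence.
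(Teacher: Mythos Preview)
Your proposal is correct; the argument via the linearity identity $\langle -\Vec{\sigma},\Vec{a}\rangle=-\langle \Vec{\sigma},\Vec{a}\rangle$ is exactly the intended one. The paper in fact gives no proof of this observation at all, treating it as self-evident, so your one-line justification is entirely in keeping with the paper's approach.
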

 By Observation~\ref{obs:set-parition}, we see that 
\begin{eqnarray}
 \left| \{ \Vec{\sigma} \in \{-1,1\}^n \mid \langle \Vec{\sigma}, \Vec{a} \rangle = 0 \} \right| 
 = 2^n -2\left| \{ \Vec{\sigma} \in \{-1,1\}^n \mid \langle \Vec{\sigma}, \Vec{a} \rangle > 0 \} \right|. 
\end{eqnarray}

\begin{figure}
\begin{center}
\includegraphics[bb= 10 20 610 420, scale=0.5, clip]{fig2.eps}
\end{center}
\caption{$C(\Vec{0},1+\epsilon) \cap C(\delta\Vec{a},1) \setminus C(\Vec{0},1)$. }\label{fig:intersection}
\end{figure}

 In the following, let 
  $\Vec{a} \in \mathbb{Z}_{> 0}^n$ be an instance of Problem~\ref{prob:LSet}. 
 Roughly speaking, 
  our proof of Theorem~\ref{thm:hardness} claims that 
\begin{eqnarray}
\Vol \left(  C(\delta \Vec{a},1) \cap C(\Vec{0},1+\epsilon) \setminus  C(\Vec{0},1) \right)
 \sim |\{ \Vec{\sigma} \in \{-1,1\} \mid \langle \Vec{\sigma},\Vec{a} \rangle > 0 \}|
\label{eq:hard-intuition}
\end{eqnarray}
 holds (see Figure~\ref{fig:intersection}), when $0 < \epsilon < \delta \ll 1 / \|\Vec{a}\|_1$. 
 For convenience, we define 
\begin{eqnarray}
 C_{\Vec{\sigma}}(\Vec{c},r) = \{ \Vec{x} \in C(\Vec{c},r) \mid \sigma_i (x_i-c_i) \geq  0\, (i = 1,\ldots,n)\}
 \label{def:sigma-cone}
\end{eqnarray}
 for any $\Vec{\sigma} \in \{-1,1\}^n$. 
 Note that 
\begin{eqnarray}
 C(\delta \Vec{a},1) \cap C(\Vec{0},1+\epsilon) \setminus  C(\Vec{0},1)
  =  
  \dot\bigcup_{\Vec{\sigma} \in \{-1,1\}^n}
 C(\delta \Vec{a},1) \cap C_{\Vec{\sigma}}(\Vec{0},1+\epsilon) \setminus  C(\Vec{0},1)
\end{eqnarray}
 holds. 
 In the following, we claim  for each $\Vec{\sigma} \in \{-1,1\}$ that 
\begin{eqnarray*}
 \Vol \left( C(\delta \Vec{a},1) \cap C_{\Vec{\sigma}}(\Vec{0},1+\epsilon) \setminus  C(\Vec{0},1) \right)
 \simeq \begin{cases}
   0 & (\mbox{if} \langle \Vec{\sigma},\Vec{a} \rangle \leq 0) \\
  \dfrac{\epsilon}{(n-1)!} & (\mbox{otherwise})
 \end{cases}
\end{eqnarray*}
 with appropriate $\epsilon$ and $\delta$. 

 First, we consider the case that $\Vec{\sigma} \in \{-1,1\}$ satisfies $\langle \Vec{\sigma},\Vec{a} \rangle \leq 0$. 
We define 
\begin{eqnarray}
 H^-_{\Vec{\sigma}} (\Vec{c},r) 
  &\defeq& \{\Vec{x} \in \mathbb{R}^n \mid \langle \Vec{\sigma},\Vec{x}-\Vec{c} \rangle \leq r \} 
\label{def:H-} \\
 H^+_{\Vec{\sigma}} (\Vec{c},r) 
  &\defeq& \{\Vec{x} \in \mathbb{R}^n \mid \langle \Vec{\sigma},\Vec{x}-\Vec{c} \rangle > r \} 
\label{def:H+} \\
 H_{\Vec{\sigma}} (\Vec{c},r) 
  &\defeq& \{\Vec{x} \in \mathbb{R}^n \mid \langle \Vec{\sigma},\Vec{x}-\Vec{c} \rangle = r \} 
\label{def:H}
\end{eqnarray}
 for convenience (see Figure~\ref{fig:H}). 
\begin{figure}
\begin{center}
 \includegraphics[bb= 10 20 610 480, scale=0.5, clip]{figH.eps}
\end{center}
\caption{$H^-_{\Vec{\sigma}} (\Vec{c},r) $, $H^+_{\Vec{\sigma}} (\Vec{c},r) $, $H_{\Vec{\sigma}} (\Vec{c},r) $. }\label{fig:H}
\end{figure}

\subsection{Facet for $\langle \Vec{\sigma},\Vec{a} \rangle \leq 0$}
\begin{proposition}\label{prop:hard-case1}
 If $\langle \Vec{\sigma},\Vec{a} \rangle \leq 0$, 
 then $C(\delta \Vec{a},1)$ is in the half-space  $H^-_{\Vec{\sigma}} (\Vec{0},1)$. 
\end{proposition}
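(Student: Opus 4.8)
The plan is to prove the containment $C(\delta\Vec{a},1) \subseteq H^-_{\Vec{\sigma}}(\Vec{0},1)$ pointwise, by unfolding the definitions and using the half-space description \eqref{def:C(c,r)3} of a cross-polytope. Recall that $H^-_{\Vec{\sigma}}(\Vec{0},1) = \{\Vec{x}\in\mathbb{R}^n \mid \langle \Vec{\sigma},\Vec{x}\rangle \leq 1\}$ by \eqref{def:H-}, so it suffices to show that every $\Vec{x} \in C(\delta\Vec{a},1)$ satisfies $\langle \Vec{\sigma},\Vec{x}\rangle \leq 1$.

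First I would fix an arbitrary $\Vec{x} \in C(\delta\Vec{a},1)$. By the description \eqref{def:C(c,r)3} of the cross-polytope $C(\delta\Vec{a},1)$ (taking center $\delta\Vec{a}$ and radius $1$), the point $\Vec{x}$ satisfies $\langle \Vec{x}-\delta\Vec{a},\Vec{\sigma}'\rangle \leq 1$ for every $\Vec{\sigma}' \in \{-1,1\}^n$; in particular this holds for the specific sign vector $\Vec{\sigma}$ in the statement. Then I would write the decomposition
\begin{eqnarray*}
 \langle \Vec{\sigma},\Vec{x}\rangle
  = \langle \Vec{\sigma}, \Vec{x}-\delta\Vec{a}\rangle + \delta\langle \Vec{\sigma},\Vec{a}\rangle
  \leq 1 + \delta\langle \Vec{\sigma},\Vec{a}\rangle,
\end{eqnarray*}
and finally invoke the hypothesis $\langle \Vec{\sigma},\Vec{a}\rangle \leq 0$ together with $\delta > 0$ to conclude $\delta\langle \Vec{\sigma},\Vec{a}\rangle \leq 0$, hence $\langle \Vec{\sigma},\Vec{x}\rangle \leq 1$, i.e., $\Vec{x} \in H^-_{\Vec{\sigma}}(\Vec{0},1)$. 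Since $\Vec{x}$ was arbitrary, the proposition follows.

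There is essentially no serious obstacle here: the only point requiring mild care is to apply the correct facet inequality of $C(\delta\Vec{a},1)$, namely the one indexed by the \emph{same} sign vector $\Vec{\sigma}$ appearing in the half-space, rather than taking a supremum over all sign vectors. Geometrically, this inequality says precisely that the facet of $C(\delta\Vec{a},1)$ with outer normal $\Vec{\sigma}$ has been translated (relative to the facet of $C(\Vec{0},1)$ with the same normal) by the amount $\delta\langle\Vec{\sigma},\Vec{a}\rangle$, which is non-positive exactly when $\langle\Vec{\sigma},\Vec{a}\rangle \leq 0$; this is the intuition that the later volume estimates build on.
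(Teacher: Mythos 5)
Your proof is correct and follows essentially the same route as the paper's: apply the facet inequality of $C(\delta\Vec{a},1)$ for the sign vector $\Vec{\sigma}$, decompose the inner product, and use $\langle\Vec{\sigma},\Vec{a}\rangle\le 0$. In fact your writeup is slightly more careful than the paper's, which at one point asserts $\langle\Vec{\sigma},\Vec{x}-\delta\Vec{a}\rangle=\langle\Vec{\sigma},\Vec{x}\rangle$ ``since the hypothesis that $\langle\Vec{a},\Vec{\sigma}\rangle=0$'' --- an equality that only covers the boundary case of the stated hypothesis $\langle\Vec{\sigma},\Vec{a}\rangle\le 0$; your inequality $\langle\Vec{\sigma},\Vec{x}\rangle\le 1+\delta\langle\Vec{\sigma},\Vec{a}\rangle\le 1$ handles the full range correctly.
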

\begin{proof}
 Notice that $\Vec{x} \in C(\delta \Vec{a},1)$ implies 
\begin{eqnarray}
 \langle \Vec{x}-\delta\Vec{a},\Vec{\sigma} \rangle \leq 1
\label{eq:20160404pp1}
\end{eqnarray}
 holds. 
 Since the hypothesis that $\langle \Vec{a},\Vec{\sigma} \rangle=0$, 
\begin{eqnarray*}
 \langle \Vec{\sigma}, \Vec{x}-\delta\Vec{a} \rangle 
 =
 \langle \Vec{\sigma}, \Vec{x} \rangle 
 -\delta \langle \Vec{\sigma}, \Vec{a} \rangle 
 =
 \langle \Vec{\sigma},\Vec{x}\rangle 
\end{eqnarray*}
 holds, which implies with \eqref{eq:20160404pp1} that 
\begin{eqnarray}
 \langle \Vec{\sigma},\Vec{x} \rangle \leq 1. 
\end{eqnarray}
 We obtain the claim. 
\end{proof}

 Proposition~\ref{prop:hard-case1} implies 
  $ C_{\Vec{\sigma}}(\Vec{0},1+\epsilon) \setminus C(\Vec{0},1) \subset H^+_{\Vec{\sigma}} (\Vec{0},1)$, 
  and we see the following (see also Figure~\ref{fig:sigma-a-0}). 
\begin{corollary}\label{cor:hard-case1}	
 If $\langle \Vec{\sigma},\Vec{a} \rangle \leq 0$, 
 then $\Vol \left( C(\delta \Vec{a},1) \cap C_{\Vec{\sigma}}(\Vec{0},1+\epsilon) \setminus  C(\Vec{0},1) \right) =0$. 
\end{corollary}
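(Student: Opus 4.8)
The plan is to show that the set whose volume appears in Corollary~\ref{cor:hard-case1} is in fact \emph{empty}, so that the claimed volume $0$ is immediate; no delicate estimate is needed, only a clean separation argument.

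First I would record the elementary observation that on a single ``$\Vec{\sigma}$-orthant'' piece of a cross-polytope centered at the origin the $\ell_1$-norm coincides with one linear functional: whenever $\sigma_i x_i \geq 0$ for all $i$, we have $\|\Vec{x}\|_1 = \sum_{i=1}^n |x_i| = \sum_{i=1}^n \sigma_i x_i = \langle \Vec{\sigma}, \Vec{x}\rangle$, since $\sigma_i \in \{-1,1\}$. Hence if $\Vec{x} \in C_{\Vec{\sigma}}(\Vec{0},1+\epsilon) \setminus C(\Vec{0},1)$ (recall \eqref{def:sigma-cone}), then $\sigma_i x_i \geq 0$ for all $i$ and $\|\Vec{x}\|_1 > 1$, so $\langle \Vec{\sigma}, \Vec{x}\rangle = \|\Vec{x}\|_1 > 1$; that is,
\begin{eqnarray*}
 C_{\Vec{\sigma}}(\Vec{0},1+\epsilon) \setminus C(\Vec{0},1) \ \subseteq\ H^+_{\Vec{\sigma}}(\Vec{0},1).
\end{eqnarray*}

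Next I would invoke Proposition~\ref{prop:hard-case1}: under the hypothesis $\langle \Vec{\sigma},\Vec{a}\rangle \leq 0$ the whole cross-polytope $C(\delta\Vec{a},1)$ lies in $H^-_{\Vec{\sigma}}(\Vec{0},1)$. Since $H^-_{\Vec{\sigma}}(\Vec{0},1) = \{\Vec{x} \mid \langle \Vec{\sigma},\Vec{x}\rangle \leq 1\}$ and $H^+_{\Vec{\sigma}}(\Vec{0},1) = \{\Vec{x} \mid \langle \Vec{\sigma},\Vec{x}\rangle > 1\}$ are disjoint, the two displayed inclusions give
\begin{eqnarray*}
 C(\delta\Vec{a},1) \cap \left( C_{\Vec{\sigma}}(\Vec{0},1+\epsilon) \setminus C(\Vec{0},1) \right) \ =\ \emptyset,
\end{eqnarray*}
whence its volume is $0$, which is exactly the claim.

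There is essentially no technical obstacle: the geometric content has already been isolated in Proposition~\ref{prop:hard-case1}. The only point that needs care is the bookkeeping that on the orthant $\{\sigma_i x_i \geq 0\}$ the condition $\Vec{x} \in C(\Vec{0},1)$ is equivalent to $\langle \Vec{\sigma},\Vec{x}\rangle \leq 1$ (rather than to $\|\Vec{x}\|_1 \le$ something unrelated), so that the leftover shell $C_{\Vec{\sigma}}(\Vec{0},1+\epsilon) \setminus C(\Vec{0},1)$ is cut off precisely by the hyperplane $H_{\Vec{\sigma}}(\Vec{0},1)$ that supports $C(\delta\Vec{a},1)$ from the outside.
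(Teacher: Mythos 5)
Your proposal is correct and follows essentially the same route as the paper: combine Proposition~\ref{prop:hard-case1} (placing $C(\delta\Vec{a},1)$ in $H^-_{\Vec{\sigma}}(\Vec{0},1)$) with the observation that the shell $C_{\Vec{\sigma}}(\Vec{0},1+\epsilon)\setminus C(\Vec{0},1)$ lies in $H^+_{\Vec{\sigma}}(\Vec{0},1)$, so the intersection is empty. You spell out the orthant identity $\|\Vec{x}\|_1=\langle\Vec{\sigma},\Vec{x}\rangle$ more carefully than the paper does, but the argument is the same.
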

\begin{figure}
\begin{center}
  \includegraphics[bb= -50 469 610 824, scale=0.5, clip]{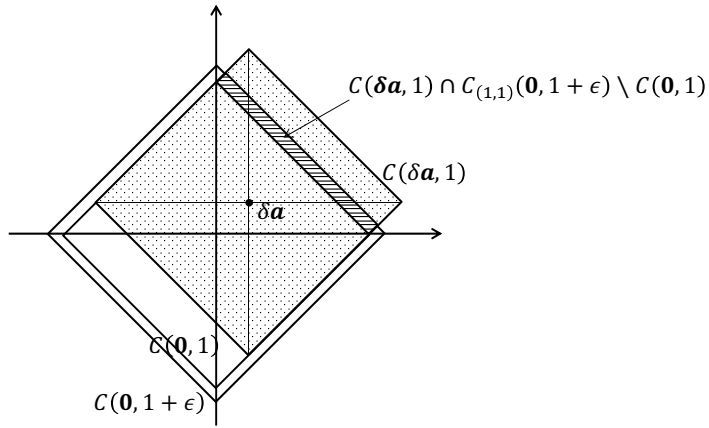}
\end{center}
\caption{$C(\Vec{0},1+\epsilon) \cap C(\delta\Vec{a},1) \setminus C(\Vec{0},1)$ 
  where $\langle \Vec{\sigma},\Vec{a} \rangle=0$ holds for $\Vec{\sigma}=(1,-1)$ ($(-1,1)$ as well).}\label{fig:sigma-a-0}
\end{figure}

\subsection{Facet for $\langle \Vec{\sigma},\Vec{a} \rangle > 0$}
Next, we are concerned with the case that $\Vec{\sigma}\in \{-1,1\}$ satisfies $\langle \Vec{\sigma},\Vec{a}\rangle > 0$. 
 Notice that 
  $ H_{\Vec{\sigma}} (\Vec{0},1+ \epsilon)$ and $ H_{\Vec{\sigma}} (\delta \Vec{a},1)$ are in parallel 
 since they have a common normal vector $\Vec{\sigma}$. 
\begin{proposition}\label{prop:komikado}
 Suppose that $\langle \Vec{\sigma},\Vec{a} \rangle > 0$ holds. 
 If $\epsilon < \delta $ then 
  $ H^-_{\Vec{\sigma}} (\Vec{0},1+ \epsilon) \subseteq H^-_{\Vec{\sigma}} (\delta \Vec{a},1)$. 
\end{proposition}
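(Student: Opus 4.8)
The plan is to unfold both half-space definitions and reduce the claimed inclusion to a one-dimensional comparison of the offsets measured along the common normal direction $\Vec{\sigma}$. First I would take an arbitrary $\Vec{x}\in H^-_{\Vec{\sigma}}(\Vec{0},1+\epsilon)$, which by \eqref{def:H-} means $\langle\Vec{\sigma},\Vec{x}\rangle\le 1+\epsilon$; the target is $\langle\Vec{\sigma},\Vec{x}-\delta\Vec{a}\rangle\le 1$, i.e.\ $\langle\Vec{\sigma},\Vec{x}\rangle\le 1+\delta\langle\Vec{\sigma},\Vec{a}\rangle$. Hence it suffices to verify $1+\epsilon\le 1+\delta\langle\Vec{\sigma},\Vec{a}\rangle$, that is $\epsilon\le\delta\langle\Vec{\sigma},\Vec{a}\rangle$.

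The key point, and the only place where the structure of the instance enters, is that $\langle\Vec{\sigma},\Vec{a}\rangle$ is an integer, since $\Vec{\sigma}\in\{-1,1\}^n$ and $\Vec{a}\in\mathbb{Z}_{>0}^n$; combined with the hypothesis $\langle\Vec{\sigma},\Vec{a}\rangle>0$ this forces $\langle\Vec{\sigma},\Vec{a}\rangle\ge 1$. Therefore $\delta\langle\Vec{\sigma},\Vec{a}\rangle\ge\delta>\epsilon$, and chaining the estimates gives $\langle\Vec{\sigma},\Vec{x}-\delta\Vec{a}\rangle=\langle\Vec{\sigma},\Vec{x}\rangle-\delta\langle\Vec{\sigma},\Vec{a}\rangle\le 1+\epsilon-\delta<1$, so $\Vec{x}\in H^-_{\Vec{\sigma}}(\delta\Vec{a},1)$, which proves the inclusion.

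There is essentially no obstacle here: geometrically the statement just says that the two parallel hyperplanes $H_{\Vec{\sigma}}(\Vec{0},1+\epsilon)$ and $H_{\Vec{\sigma}}(\delta\Vec{a},1)$ are correctly ordered in the direction of $\Vec{\sigma}$, with the latter on the far side, because translating the center from $\Vec{0}$ to $\delta\Vec{a}$ shifts the supporting value by $\delta\langle\Vec{\sigma},\Vec{a}\rangle\ge\delta$, which dominates the slack $\epsilon$ introduced by enlarging the radius from $1$ to $1+\epsilon$. The only subtlety worth flagging is that the argument genuinely uses $\langle\Vec{\sigma},\Vec{a}\rangle\ge 1$ and not merely $\langle\Vec{\sigma},\Vec{a}\rangle>0$, which is exactly why the integrality of $\Vec{a}$ already built into Problem~\ref{prob:LSet} is invoked; a real-valued $\Vec{a}$ with an arbitrarily small positive inner product with $\Vec{\sigma}$ would break the inclusion for fixed $\epsilon<\delta$.
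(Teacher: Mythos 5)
Your proof is correct and mathematically identical to the paper's argument: both reduce the inclusion to $\epsilon \le \delta\langle\Vec{\sigma},\Vec{a}\rangle$ and invoke the integrality of $\Vec{a}$ and $\Vec{\sigma}$ to upgrade $\langle\Vec{\sigma},\Vec{a}\rangle>0$ to $\langle\Vec{\sigma},\Vec{a}\rangle\ge 1$. The only cosmetic difference is that you argue directly while the paper phrases it as a proof by contradiction.
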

\begin{proof}
 Let $\Vec{x} \in H^-_{\Vec{\sigma}} (\Vec{0},1+ \epsilon)$, then 
\begin{eqnarray}
  \langle \Vec{\sigma},\Vec{x}\rangle \leq 1+ \epsilon. 
\label{20160404pp2}
\end{eqnarray}
 holds. 
 Suppose for a contradiction that  $\Vec{x} \not\in H^-_{\Vec{\sigma}} (\delta \Vec{a},1))$, 
 then 
\begin{eqnarray*}
  \langle \Vec{\sigma},\Vec{x}- \delta \Vec{a} \rangle 
 =\langle \Vec{\sigma},\Vec{x} \rangle 
   - \delta \langle \Vec{\sigma},\Vec{a} \rangle 
 > 1
\end{eqnarray*}
  holds. 
 It implies 
\begin{eqnarray}
 \langle \Vec{\sigma},\Vec{x} \rangle 
 > 1+ \delta \langle \Vec{\sigma},\Vec{a} \rangle  
 \geq 1+ \delta 
\label{20160404pp3}
\end{eqnarray}
 holds since $\langle \Vec{\sigma},\Vec{a} \rangle > 0$ means $\langle \Vec{\sigma},\Vec{a} \rangle \geq 1$. 
 Clearly, \eqref{20160404pp3} and \eqref{20160404pp2}
  contradict to the hypothesis that $\epsilon < \delta$. 
\end{proof}
 Proposition~\ref{prop:komikado} implies that 
  $C(\delta\Vec{a},1) \cap C_{\Vec{\sigma}}(\Vec{0},1+\epsilon) \setminus C(\Vec{0},1) \neq \emptyset$. 
 More precisely, 
  we observe the following, which we will use later.  
\begin{observation}\label{obs:takasa}
 The $L_2$ distance between 
  $ H_{\Vec{\sigma}} (\Vec{0},1+ \epsilon)$ and 
  $ H_{\Vec{\sigma}} (\Vec{0},1)$ is $\dfrac{\epsilon}{\sqrt{n}}$. 
\end{observation}
 The volume of $C(\delta\Vec{a},1) \cap C_{\Vec{\sigma}}(\Vec{0},1+\epsilon) \setminus C(\Vec{0},1)$
  is evaluated as follows, where we assume that $\epsilon$ is sufficiently small. 
\begin{proposition}\label{prop:hard-upper}
 Suppose that $\langle \Vec{\sigma},\Vec{a} \rangle >0$ holds. 
 If $\epsilon < \delta$ then 
\begin{eqnarray*}
 \Vol\left( C(\delta\Vec{a},1) \cap C_{\Vec{\sigma}}(\Vec{0},1+\epsilon) \setminus C(\Vec{0},1) \right)
  \leq  \frac{\epsilon}{(n-1)!} (1+\epsilon)^{n-1}. 
\end{eqnarray*} 
\end{proposition}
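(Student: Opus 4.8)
The plan is to throw away the constraint coming from $C(\delta\Vec{a},1)$ — which can only enlarge the set, hence is harmless for an upper bound — and to estimate the volume of the remaining ``shell'' $C_{\Vec{\sigma}}(\Vec{0},1+\epsilon)\setminus C(\Vec{0},1)$ directly. The first step is to describe this shell explicitly as an orthant slab. If $\Vec{x}$ lies in the orthant $\{\sigma_i x_i\ge 0\ (i=1,\ldots,n)\}$ then $\|\Vec{x}\|_1=\sum_{i=1}^n\sigma_i x_i=\langle\Vec{\sigma},\Vec{x}\rangle$, so by~\eqref{def:sigma-cone},
\begin{eqnarray*}
 C_{\Vec{\sigma}}(\Vec{0},1+\epsilon)\setminus C(\Vec{0},1)
 = \{\Vec{x}\in\Real^n \mid \sigma_i x_i\ge 0\ (\forall i),\ 1<\langle\Vec{\sigma},\Vec{x}\rangle\le 1+\epsilon\},
\end{eqnarray*}
and $\left( C(\delta\Vec{a},1)\cap C_{\Vec{\sigma}}(\Vec{0},1+\epsilon)\right)\setminus C(\Vec{0},1)$ is a subset of it, so it suffices to bound the volume of the shell on the right.

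Next I would compute the volume of the shell exactly. The $2^n$ sets $C_{\Vec{\sigma}}(\Vec{0},r)$, $\Vec{\sigma}\in\{-1,1\}^n$, tile $C(\Vec{0},r)$ up to measure zero, so by~\eqref{eq:vol-cp} each of them has volume $r^n/n!$; moreover, since $\|\Vec{x}\|_1=\langle\Vec{\sigma},\Vec{x}\rangle$ inside the orthant, $C(\Vec{0},1)\cap C_{\Vec{\sigma}}(\Vec{0},1+\epsilon)=C_{\Vec{\sigma}}(\Vec{0},1)\subseteq C_{\Vec{\sigma}}(\Vec{0},1+\epsilon)$. Hence
\begin{eqnarray*}
 \Vol\bigl(C_{\Vec{\sigma}}(\Vec{0},1+\epsilon)\setminus C(\Vec{0},1)\bigr)
 = \frac{(1+\epsilon)^n}{n!}-\frac{1}{n!}
 = \frac{(1+\epsilon)^n-1}{n!}.
\end{eqnarray*}

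Finally, I would bound $(1+\epsilon)^n-1$ using the elementary convexity inequality $(1+\epsilon)^n-1\le n\epsilon(1+\epsilon)^{n-1}$ — the secant slope of the convex function $t\mapsto(1+t)^n$ over $[0,\epsilon]$ is at most its derivative $n(1+\epsilon)^{n-1}$ at the right endpoint — which turns the previous display into $\tfrac{\epsilon}{(n-1)!}(1+\epsilon)^{n-1}$, as claimed. (Equivalently, one can read off the same bound geometrically: the shell lies between the parallel hyperplanes $H_{\Vec{\sigma}}(\Vec{0},1)$ and $H_{\Vec{\sigma}}(\Vec{0},1+\epsilon)$, whose $L_2$-distance is $\epsilon/\sqrt n$ by Observation~\ref{obs:takasa}, and is contained in the right prism over the outer facet $\{\sigma_i x_i\ge 0,\ \langle\Vec{\sigma},\Vec{x}\rangle=1+\epsilon\}$, an $(n-1)$-simplex of $(n-1)$-volume $(1+\epsilon)^{n-1}\sqrt n/(n-1)!$, so base times height gives the same value.) I do not expect a genuine obstacle: the only points needing care are the identification of the shell as an orthant slab so that~\eqref{eq:vol-cp} applies, and the elementary inequality above. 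I would also note that the hypotheses $\langle\Vec{\sigma},\Vec{a}\rangle>0$ and $\epsilon<\delta$ play no role in this upper bound; they enter only through Proposition~\ref{prop:komikado} when establishing a matching lower bound, which shows that the set in the statement is in fact essentially all of the shell.
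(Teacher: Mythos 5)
Your proof is correct and follows essentially the same route as the paper's: both drop the $C(\delta\Vec{a},1)$ constraint, compute the orthant-shell volume as $((1+\epsilon)^n-1)/n!$, and then bound $(1+\epsilon)^n-1$ by $n\epsilon(1+\epsilon)^{n-1}$. The paper's proof just states the first inequality without the justification you supply and uses the geometric-sum expansion $\epsilon\sum_{i=0}^{n-1}(1+\epsilon)^i$ rather than your convexity/secant argument for the final step, but these are the same bound.
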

\begin{proof}
\begin{eqnarray*}
 \Vol\left( C(\delta\Vec{a},1) \cap C_{\Vec{\sigma}}(\Vec{0},1+\epsilon) \setminus C(\Vec{0},1) \right)
 &\leq& \frac{1}{n!} \left((1+\epsilon)^n -1 \right) \\ 
 &=& \frac{ \epsilon }{n!} \sum_{i=0}^{n-1} (1+\epsilon)^i\\ 
 &\leq& \frac{\epsilon}{n!} n(1+\epsilon)^{n-1} \\ 
 &=& \frac{ \epsilon }{(n-1)!}(1+\epsilon)^{n-1}. 
\end{eqnarray*} 
\end{proof}

 Next, we give a lower bound of 
  $\Vol\left( C(\delta\Vec{a},1) \cap C_{\Vec{\sigma}}(\Vec{0},1+\epsilon) \setminus C(\Vec{0},1) \right)$. 
 To begin with, we observe the following. 
\begin{observation}\label{obs:2}
 For any vertex $\Vec{v} \in \{\pm \Vec{e}_1, \ldots, \pm \Vec{e}_n \}$ of $C(\Vec{0},1)$, 
  the nearest vertex of $C(\delta\Vec{a},1)$ is 
  in the $L_2$ distance $\delta \|\Vec{a}\|_2$.  
\end{observation}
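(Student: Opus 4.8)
The plan is to pin down the nearest vertex explicitly and then confirm minimality by a brief case check. Recall $C(\delta\Vec{a},1)=\conv\{\delta\Vec{a}\pm\Vec{e}_i : i=1,\ldots,n\}$, so the vertices of $C(\delta\Vec{a},1)$ are exactly the $2n$ points $\delta\Vec{a}+s\Vec{e}_i$ with $s\in\{-1,1\}$ and $i\in\{1,\ldots,n\}$. Fix a vertex $\Vec{v}=\sigma\Vec{e}_j$ of $C(\Vec{0},1)$, where $\sigma\in\{-1,1\}$ and $j\in\{1,\ldots,n\}$. The obvious candidate for the closest vertex of $C(\delta\Vec{a},1)$ is the ``matching'' one $\delta\Vec{a}+\Vec{v}=\delta\Vec{a}+\sigma\Vec{e}_j$, whose $L_2$ distance to $\Vec{v}$ is $\|\delta\Vec{a}\|_2=\delta\|\Vec{a}\|_2$. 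It therefore suffices to show that every vertex of $C(\delta\Vec{a},1)$ lies at $L_2$ distance at least $\delta\|\Vec{a}\|_2$ from $\Vec{v}$.

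For an arbitrary vertex $\Vec{w}=\delta\Vec{a}+s\Vec{e}_i$, write $\Vec{u}:=\sigma\Vec{e}_j-s\Vec{e}_i$ and expand
\[
 \|\Vec{v}-\Vec{w}\|_2^2 \;=\; \|\Vec{u}-\delta\Vec{a}\|_2^2 \;=\; \delta^2\|\Vec{a}\|_2^2 + \|\Vec{u}\|_2^2 - 2\delta\langle\Vec{a},\Vec{u}\rangle,
\]
so that the desired bound $\|\Vec{v}-\Vec{w}\|_2\ge\delta\|\Vec{a}\|_2$ is equivalent to $\|\Vec{u}\|_2^2\ge 2\delta\langle\Vec{a},\Vec{u}\rangle$. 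I would then distinguish three cases. If $(i,s)=(j,\sigma)$, then $\Vec{u}=\Vec{0}$ and both sides are $0$; this is the matching vertex and equality holds. If $i=j$ but $s=-\sigma$, then $\Vec{u}=2\sigma\Vec{e}_j$, so $\|\Vec{u}\|_2^2=4$ and $2\delta\langle\Vec{a},\Vec{u}\rangle=4\sigma\delta a_j\le 4\delta a_j\le 4\delta\|\Vec{a}\|_1\le 4$. If $i\ne j$, then $\|\Vec{u}\|_2^2=2$ and $2\delta\langle\Vec{a},\Vec{u}\rangle=2\delta(\sigma a_j-s a_i)\le 2\delta(a_i+a_j)\le 2\delta\|\Vec{a}\|_1\le 2$. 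In all cases $\|\Vec{u}\|_2^2\ge 2\delta\langle\Vec{a},\Vec{u}\rangle$, so $\delta\|\Vec{a}\|_2$ is the minimum distance, and it is attained at $\delta\Vec{a}+\Vec{v}$.

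There is no genuine obstacle; the observation is essentially a computation. The one point requiring attention is which bound on $\delta$ is invoked: the last two cases use $\delta\|\Vec{a}\|_1\le 1$, which is guaranteed by the standing assumption $0<\epsilon<\delta\ll 1/\|\Vec{a}\|_1$ of this section (indeed, with the strict inequality $\delta\|\Vec{a}\|_1<1$ the matching vertex is the unique nearest one). The case $i\ne j$ is the only slightly delicate one, since there $\Vec{u}$ has two nonzero unit coordinates and the cross term $2\delta\langle\Vec{a},\Vec{u}\rangle$ can be as large as roughly $2\delta\|\Vec{a}\|_1$ — which is precisely why the smallness of $\delta$ relative to $\|\Vec{a}\|_1$ enters.
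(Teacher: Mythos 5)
The paper states this Observation without proof, evidently regarding it as self-evident; your proof is a correct and complete verification of the intended fact by the natural method (identify the ``matching'' vertex $\delta\Vec{a}+\Vec{v}$, then rule out the other $2n-1$ vertices by expanding the squared $L_2$ distance). Your case analysis is right, and you have correctly isolated the one hidden assumption, $\delta\|\Vec{a}\|_1\le 1$, which the paper's surrounding hypotheses ($\delta<0.1/(n2^n\|\Vec{a}\|_1)$, stated in Lemma~\ref{lem:hard-core} and already suggested by the informal ``$0<\epsilon<\delta\ll 1/\|\Vec{a}\|_1$'' preceding \eqref{eq:hard-intuition}) comfortably guarantee; without some such bound on $\delta$ the claim would actually be false, since for $\delta\sigma a_j>1$ the opposite vertex $\delta\Vec{a}-\Vec{v}$ becomes strictly closer.
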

 Observation~\ref{obs:2} implies the following. 
\begin{proposition}\label{prop:obs2}
 For any vertex $\Vec{v} \in \{\pm \Vec{e}_1, \ldots, \pm \Vec{e}_n \}$ of $C(\Vec{0},1)$, 
  one of hyperplanes $ H_{\Vec{\sigma}} (\delta \Vec{a},1)$ and 
   $ H_{-\Vec{\sigma}} (\delta \Vec{a},1)$ 
   is in the $L_2$ distance $\delta \|\Vec{a}\|_2$ for any $\Vec{\sigma} \in \{-1,1\}^n$. 
\qed
\end{proposition}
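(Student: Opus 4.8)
The plan is to deduce the statement directly from Observation~\ref{obs:2} by identifying which facets of $C(\delta\Vec{a},1)$ meet at the vertex realising the minimum distance there. Fix a vertex $\Vec{v}$ of $C(\Vec{0},1)$ and write $\Vec{v}=\tau\Vec{e}_j$ with $\tau\in\{-1,1\}$ and $j\in\{1,\ldots,n\}$. First I would exhibit the nearest vertex of $C(\delta\Vec{a},1)$ promised by Observation~\ref{obs:2}: it is $\Vec{v}':=\delta\Vec{a}+\tau\Vec{e}_j$, which is a vertex of $C(\delta\Vec{a},1)$ by \eqref{def:C(c,r)}, and $\|\Vec{v}-\Vec{v}'\|_2=\|\delta\Vec{a}\|_2=\delta\|\Vec{a}\|_2$.

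Next I would record which facet-supporting hyperplanes of $C(\delta\Vec{a},1)$ pass through $\Vec{v}'$. By \eqref{def:C(c,r)3} these hyperplanes are exactly $H_{\Vec{\rho}}(\delta\Vec{a},1)=\{\Vec{x}\mid\langle\Vec{\rho},\Vec{x}-\delta\Vec{a}\rangle=1\}$ for $\Vec{\rho}\in\{-1,1\}^n$, and $\Vec{v}'\in H_{\Vec{\rho}}(\delta\Vec{a},1)$ precisely when $\langle\Vec{\rho},\tau\Vec{e}_j\rangle=\tau\rho_j=1$, i.e.\ when $\rho_j=\tau$; thus the $2^{n-1}$ hyperplanes $H_{\Vec{\rho}}(\delta\Vec{a},1)$ with $\rho_j=\tau$ all contain $\Vec{v}'$. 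Given an arbitrary $\Vec{\sigma}\in\{-1,1\}^n$, exactly one of $\Vec{\sigma}$ and $-\Vec{\sigma}$ has $j$-th coordinate equal to $\tau$; call it $\Vec{\rho}$, so that $\Vec{v}'\in H_{\Vec{\rho}}(\delta\Vec{a},1)$ and $H_{\Vec{\rho}}(\delta\Vec{a},1)$ is one of $H_{\Vec{\sigma}}(\delta\Vec{a},1)$ and $H_{-\Vec{\sigma}}(\delta\Vec{a},1)$.

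Finally, since the $L_2$ distance from a point to a hyperplane is at most the $L_2$ distance from that point to any point lying on the hyperplane, the $L_2$ distance from $\Vec{v}$ to $H_{\Vec{\rho}}(\delta\Vec{a},1)$ is at most $\|\Vec{v}-\Vec{v}'\|_2=\delta\|\Vec{a}\|_2$, which is the assertion. I do not anticipate a genuine obstacle: the argument is elementary, and the only points needing care are the standard fact that a vertex of the $n$-dimensional cross-polytope $C(\delta\Vec{a},1)$ lies on exactly $2^{n-1}$ of its $2^n$ facets (those indexed by the sign vectors agreeing with the vertex in the active coordinate) and the reading of ``in the $L_2$ distance $\delta\|\Vec{a}\|_2$'' as ``within $L_2$ distance $\delta\|\Vec{a}\|_2$'', which one can alternatively confirm directly, the exact distance being $\delta|\langle\Vec{\rho},\Vec{a}\rangle|/\sqrt{n}\le\delta\|\Vec{a}\|_2$ by Cauchy--Schwarz since $\|\Vec{\rho}\|_2=\sqrt{n}$.
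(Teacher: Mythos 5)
Your proof is correct and takes the route the paper implicitly intends: the paper states Proposition~\ref{prop:obs2} with a bare \qed after the remark that it follows from Observation~\ref{obs:2}, and your argument supplies exactly the missing details (identifying $\Vec{v}'=\delta\Vec{a}+\tau\Vec{e}_j$ as a vertex of $C(\delta\Vec{a},1)$ at $L_2$ distance $\delta\|\Vec{a}\|_2$ from $\Vec{v}$, noting that it lies on all facet-hyperplanes $H_{\Vec{\rho}}(\delta\Vec{a},1)$ with $\rho_j=\tau$, and that exactly one of $\Vec{\sigma},-\Vec{\sigma}$ satisfies this). Your closing direct computation $\delta|\langle\Vec{\rho},\Vec{a}\rangle|/\sqrt{n}\le\delta\|\Vec{a}\|_2$ also confirms that the correct reading of the statement is ``within distance,'' which is what the subsequent Proposition~\ref{prop:teihen} actually uses.
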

 Proposition~\ref{prop:obs2} implies that 
  when $\langle \Vec{\sigma},\Vec{a} \rangle > 0$, 
  i.e., $C(\delta\Vec{a},1) \cap C_{\Vec{\sigma}}(\Vec{0},1+\epsilon) \setminus C(\Vec{0},1) \neq \emptyset$ holds, 
  $H_{\Vec{\sigma}'}(\delta \Vec{a},1)$ shaves off only a few area of  
  $ H_{\Vec{\sigma}} (\Vec{0},1) \cap C(\delta\Vec{a},1) \cap C_{\Vec{\sigma}}(\Vec{0},1+\epsilon)$. 
\begin{figure}
\begin{center}
\includegraphics[bb= -10 10 610 430, scale=0.5, clip]{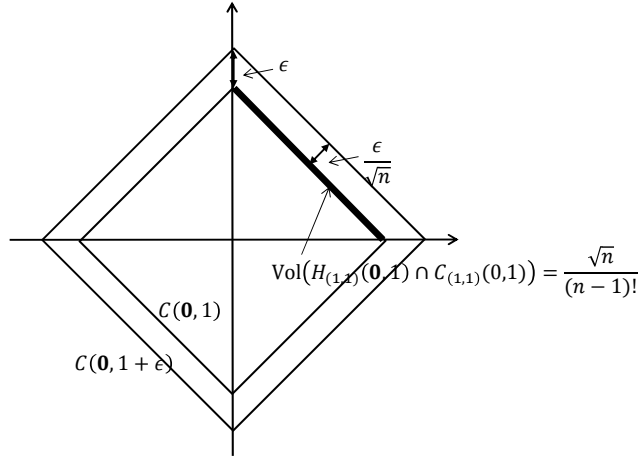}
\end{center}
\caption{For Observation~\ref{obs:takasa} and Proposition~\ref{prop:teihen}.}\label{fig:vol'}
\end{figure}
 It is formally described as follows. 
\begin{proposition}\label{prop:teihen}
 When $\langle \Vec{\sigma},\Vec{a} \rangle > 0$, 
\begin{eqnarray*}
 \Vol' (H_{\Vec{\sigma}} (\Vec{0},1) \cap C_{\Vec{\sigma}}(\Vec{0},1) \cap C(\delta\Vec{a},1))
  \geq  \frac{\sqrt{n}}{(n-1)!} (1-\delta\|\Vec{a}\|_1)^{n-1}
\end{eqnarray*}
 holds, 
 where $\Vol'(S)$ denotes the $n-1$ dimensional volume of $S \subseteq \mathbb{R}^{n-1}$. 
\end{proposition}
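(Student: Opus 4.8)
The plan is to identify the $(n-1)$-dimensional region $H_{\Vec{\sigma}}(\Vec{0},1) \cap C_{\Vec{\sigma}}(\Vec{0},1)$ explicitly as a simplex living in the hyperplane $\langle \Vec{\sigma}, \Vec{x}\rangle = 1$, compute its $(n-1)$-volume exactly, and then show that intersecting with $C(\delta\Vec{a},1)$ removes only a controlled fraction. First I would observe that $C_{\Vec{\sigma}}(\Vec{0},1)$, by definition \eqref{def:sigma-cone}, is the orthant-piece of the cross-polytope on which $\sigma_i x_i \geq 0$ for all $i$; on this piece $\|\Vec{x}\|_1 = \langle \Vec{\sigma},\Vec{x}\rangle$, so the facet $H_{\Vec{\sigma}}(\Vec{0},1) \cap C_{\Vec{\sigma}}(\Vec{0},1)$ is exactly the simplex with vertices $\{\sigma_1 \Vec{e}_1, \ldots, \sigma_n \Vec{e}_n\}$. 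This is a regular simplex with edge length $\sqrt{2}$ sitting at $L_2$-distance $1/\sqrt{n}$ from the origin, and a direct computation (e.g.\ via the Cayley--Menger determinant, or by noting that scaling the standard simplex $\{\Vec{x}\geq\Vec{0}, \sum x_i = 1\}$ — which has $(n-1)$-volume $\sqrt{n}/(n-1)!$ — by the sign pattern $\Vec{\sigma}$ is an isometry) gives $\Vol'$ of this facet equal to $\sqrt{n}/(n-1)!$.

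Next I would bound the loss incurred by intersecting with $C(\delta\Vec{a},1)$. Working inside the hyperplane $H_{\Vec{\sigma}}(\Vec{0},1)$, the set $C(\delta\Vec{a},1)$ cuts this plane along the halfspaces $\langle \Vec{\sigma}', \Vec{x} - \delta\Vec{a}\rangle \leq 1$ for $\Vec{\sigma}' \in \{-1,1\}^n$ (by \eqref{def:C(c,r)3}). For $\Vec{\sigma}' = \Vec{\sigma}$ this constraint reads $\langle \Vec{\sigma},\Vec{x}\rangle \leq 1 + \delta\langle\Vec{\sigma},\Vec{a}\rangle$, which is automatically satisfied on $H_{\Vec{\sigma}}(\Vec{0},1)$ since $\langle\Vec{\sigma},\Vec{a}\rangle > 0$; the binding constraints are the others. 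I would argue that each such hyperplane $H_{\Vec{\sigma}'}(\delta\Vec{a},1)$ with $\Vec{\sigma}'\neq\Vec{\sigma}$ stays within $L_2$-distance $\delta\|\Vec{a}\|_2 \leq \delta\|\Vec{a}\|_1$ of the corresponding facet hyperplane of $C(\Vec{0},1)$ by Proposition~\ref{prop:obs2}/Observation~\ref{obs:2}, so that the truncated region contains a scaled copy of the simplex shrunk by a factor $(1-\delta\|\Vec{a}\|_1)$ about an interior point (e.g.\ the centroid, or the vertex of $C(\delta\Vec{a},1)$ nearest to the facet). Shrinking an $(n-1)$-dimensional set by a linear factor $(1-\delta\|\Vec{a}\|_1)$ multiplies its $(n-1)$-volume by $(1-\delta\|\Vec{a}\|_1)^{n-1}$, yielding the stated lower bound $\frac{\sqrt{n}}{(n-1)!}(1-\delta\|\Vec{a}\|_1)^{n-1}$.

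The main obstacle I anticipate is making the shrinking argument rigorous: it is intuitively clear that clipping a simplex by hyperplanes that each lie within distance $\delta\|\Vec{a}\|_1$ of its facet-planes leaves at least a $(1-\delta\|\Vec{a}\|_1)$-homothet, but one must pick the homothety center carefully (a point at $L_2$-distance $\geq 1$ from every facet-plane of the simplex works, rescaled appropriately, but the inradius of this simplex is only $\Theta(1/n^{3/2})$, so a naive homothety about the incenter loses too much). The cleaner route is probably to use the explicit coordinates: parametrize $H_{\Vec{\sigma}}(\Vec{0},1) \cap C_{\Vec{\sigma}}(\Vec{0},1)$ as $\{\Vec{x} : \sigma_i x_i \geq 0,\ \sum_i \sigma_i x_i = 1\}$, check that the point $\Vec{q}$ with $\sigma_i q_i = 1/n$ satisfies $\langle\Vec{\sigma}', \Vec{q} - \delta\Vec{a}\rangle \leq 1$ with slack at least $\delta\|\Vec{a}\|_1$ in every binding constraint (since $\langle\Vec{\sigma}',\Vec{q}\rangle$ is uniformly bounded away from $1$ for $\Vec{\sigma}'\neq\Vec{\sigma}$ — in fact $\leq 1 - 2/n$), and then verify that the homothet of the simplex centered at $\Vec{q}$ with ratio $1-\delta\|\Vec{a}\|_1$ satisfies all the clipping constraints; the sign constraints $\sigma_i x_i \geq 0$ are preserved because $\Vec{q}$ is in the relative interior and homothety toward an interior point keeps all barycentric coordinates nonnegative. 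I expect the bookkeeping on which constraint is tightest — and confirming the slack is genuinely $\Omega(1)$ rather than $O(1/n)$ so that the $(n-1)$-st power does not blow up — to be where the care is needed.
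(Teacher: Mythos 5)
Your identification of the facet $H_{\Vec{\sigma}}(\Vec{0},1)\cap C_{\Vec{\sigma}}(\Vec{0},1)$ as the simplex with vertices $\sigma_1\Vec{e}_1,\ldots,\sigma_n\Vec{e}_n$, and the computation of its $(n-1)$-volume $\sqrt{n}/(n-1)!$ via the pyramid formula, are correct and are exactly what the paper does. The gap is in the shrinking step, and the doubt you express at the end is well-founded and in fact fatal: the homothet of this simplex about its centroid $\Vec{q}$ (with $\sigma_i q_i=1/n$) with ratio $1-\delta\|\Vec{a}\|_1$ does \emph{not} lie inside $C(\delta\Vec{a},1)$. Having slack at least $\delta\|\Vec{a}\|_1$ at $\Vec{q}$ in each constraint $\langle\Vec{\sigma}',\Vec{x}\rangle\leq 1+\delta\langle\Vec{\sigma}',\Vec{a}\rangle$ does not entitle you to homothety ratio $1-\delta\|\Vec{a}\|_1$: the admissible ratio is governed by the slack at $\Vec{q}$ divided by (overshoot at the far vertex plus that slack), and the slack at the centroid is only of order $2/n$, so you lose a factor of roughly $n$. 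A concrete failure: take $n=3$, $\Vec{\sigma}=(1,-1,-1)$, $\Vec{a}=(3,1,1)$, so $\langle\Vec{\sigma},\Vec{a}\rangle=1>0$ and $\|\Vec{a}\|_1=5$. The vertex of the $(1-5\delta)$-homothet about $\Vec{q}=(\tfrac13,-\tfrac13,-\tfrac13)$ pointing toward $-\Vec{e}_2$ is $\Vec{x}=(\tfrac{5\delta}{3},\,-1+\tfrac{10\delta}{3},\,-\tfrac{5\delta}{3})$, and one checks $\|\Vec{x}-\delta\Vec{a}\|_1=\tfrac{4\delta}{3}+\bigl(1-\tfrac{7\delta}{3}\bigr)+\tfrac{8\delta}{3}=1+\tfrac{5\delta}{3}>1$ for every $\delta>0$, so the homothet vertex is outside $C(\delta\Vec{a},1)$. (Also, the inradius is $\Theta(1/n)$, not $\Theta(1/n^{3/2})$, though that does not rescue the argument.)

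The fix is to drop the homothety about an interior point and use a sign-adapted translate-and-rescale. Set
\begin{eqnarray*}
 T &\defeq& \left\{\Vec{x}\in\mathbb{R}^n \ :\ \sigma_ix_i\geq\delta a_i\ \mbox{if}\ \sigma_i=1,\quad \sigma_ix_i\geq 0\ \mbox{if}\ \sigma_i=-1,\quad \sum_{i=1}^n\sigma_ix_i=1 \right\}.
\end{eqnarray*}
Clearly $T\subseteq H_{\Vec{\sigma}}(\Vec{0},1)\cap C_{\Vec{\sigma}}(\Vec{0},1)$. For $\Vec{x}\in T$: if $\sigma_i=1$ then $x_i\geq\delta a_i>0$ gives $|x_i-\delta a_i|=\sigma_ix_i-\delta a_i$, while if $\sigma_i=-1$ then $x_i\leq 0<\delta a_i$ gives $|x_i-\delta a_i|=\sigma_ix_i+\delta a_i$; summing yields $\|\Vec{x}-\delta\Vec{a}\|_1=\sum_i\sigma_ix_i-\delta\langle\Vec{\sigma},\Vec{a}\rangle=1-\delta\langle\Vec{\sigma},\Vec{a}\rangle<1$, so $T\subseteq C(\delta\Vec{a},1)$. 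Finally, the coordinate change $z_i:=\sigma_ix_i-\delta a_i$ for $\sigma_i=1$ and $z_i:=\sigma_ix_i$ for $\sigma_i=-1$ is an isometry carrying $T$ onto the dilated standard simplex $\left\{\Vec{z}\geq\Vec{0},\ \sum_iz_i=1-\delta\sum_{i:\sigma_i=1}a_i\right\}$, whence
\begin{eqnarray*}
 \Vol'(T) \ =\ \left(1-\delta\!\!\sum_{i:\sigma_i=1}\!\!a_i\right)^{n-1}\frac{\sqrt{n}}{(n-1)!}\ \geq\ (1-\delta\|\Vec{a}\|_1)^{n-1}\frac{\sqrt{n}}{(n-1)!},
\end{eqnarray*}
which gives the proposition (with a slightly stronger constant $\sum_{i:\sigma_i=1}a_i$ in place of $\|\Vec{a}\|_1$).
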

\begin{proof}
 Remark that 
\begin{eqnarray*}
 \Vol' (H_{\Vec{\sigma}} (\Vec{0},1) \cap C_{\Vec{\sigma}}(\Vec{0},1))
  =  \frac{\sqrt{n}}{(n-1)!} 
\end{eqnarray*}
 since 
$ \frac{1}{n}\frac{1}{\sqrt{n}}\Vol' (H_{\Vec{\sigma}} (\Vec{0},1) \cap C_{\Vec{\sigma}}(\Vec{0},1))
  = \Vol (C_{\Vec{\sigma}}(\Vec{0},1))
  =  \frac{1}{n!} $. Thus, 
\begin{eqnarray*}
 \Vol' (H_{\Vec{\sigma}} (\Vec{0},1) \cap C_{\Vec{\sigma}}(\Vec{0},1) \cap C(\delta\Vec{a},1))
 &\geq& \frac{\sqrt{n}}{(n-1)!}  (1-\delta\|\Vec{a}\|_2)^{n-1} \\
 &\geq& \frac{\sqrt{n}}{(n-1)!}  (1-\delta\|\Vec{a}\|_1)^{n-1}
\end{eqnarray*}
 where the last inequality follows the fact $\|\Vec{a}\|_2 \leq \|\Vec{a}\|_1$. 
\end{proof}

 Observation~\ref{obs:takasa} and Proposition~\ref{prop:teihen} implies the following. 
\begin{corollary}\label{kenske}
 Suppose that $\langle \Vec{\sigma},\Vec{a} \rangle > 0$ holds. 
 If $\epsilon < \delta$ then 
\begin{eqnarray*}
 \Vol\left( C(\delta\Vec{a},1) \cap C_{\Vec{\sigma}}(\Vec{0},1+\epsilon) \setminus C(\Vec{0},1) \right)
 \geq \frac{\epsilon}{(n-1)!} (1-\delta\|\Vec{a}\|_1)^{n-1}. 
\end{eqnarray*} 
\end{corollary}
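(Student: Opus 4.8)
The plan is a ``base times height'' estimate for
\[
 R\ \defeq\ C(\delta\Vec{a},1)\cap C_{\Vec{\sigma}}(\Vec{0},1+\epsilon)\setminus C(\Vec{0},1),
\]
with the height $\epsilon/\sqrt n$ supplied by Observation~\ref{obs:takasa} and the base by Proposition~\ref{prop:teihen}. First I would fix the slicing. On the orthant $\{\Vec{x}:\sigma_ix_i\ge 0\ \forall i\}$ one has $\|\Vec{x}\|_1=\langle\Vec{\sigma},\Vec{x}\rangle$, so $C_{\Vec{\sigma}}(\Vec{0},1+\epsilon)\setminus C(\Vec{0},1)$ is exactly the part of that orthant lying strictly between the parallel hyperplanes $H_{\Vec{\sigma}}(\Vec{0},1)$ and $H_{\Vec{\sigma}}(\Vec{0},1+\epsilon)$. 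Cutting $R$ by $H_{\Vec{\sigma}}(\Vec{0},t)$ for $t\in(1,1+\epsilon]$, the section is
\[
 S_t\ \defeq\ H_{\Vec{\sigma}}(\Vec{0},t)\cap\{\Vec{x}:\sigma_ix_i\ge 0\ \forall i\}\cap C(\delta\Vec{a},1),
\]
since on this orthant the conditions $\|\Vec{x}\|_1\le 1+\epsilon$ and $\Vec{x}\notin C(\Vec{0},1)$ become $1<t\le 1+\epsilon$, while the facet constraint $\langle\Vec{\sigma},\Vec{x}-\delta\Vec{a}\rangle\le 1$ of $C(\delta\Vec{a},1)$ is satisfied throughout the slab: by integrality $\langle\Vec{\sigma},\Vec{a}\rangle\ge 1$, so that facet sits at level $1+\delta\langle\Vec{\sigma},\Vec{a}\rangle\ge 1+\delta>1+\epsilon$ (this is Proposition~\ref{prop:komikado}). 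The co-area formula then gives $\Vol(R)=\frac{1}{\sqrt n}\int_1^{1+\epsilon}\Vol'(S_t)\,\dd t$, and $S_1=H_{\Vec{\sigma}}(\Vec{0},1)\cap C_{\Vec{\sigma}}(\Vec{0},1)\cap C(\delta\Vec{a},1)$ is precisely the set bounded in Proposition~\ref{prop:teihen}.

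Everything then reduces to the slice-monotonicity $\Vol'(S_t)\ge\Vol'(S_1)$ for all $t\in[1,1+\epsilon]$: granting it,
\[
 \Vol(R)\ \ge\ \frac{\epsilon}{\sqrt n}\Vol'(S_1)\ \ge\ \frac{\epsilon}{\sqrt n}\cdot\frac{\sqrt n}{(n-1)!}\bigl(1-\delta\|\Vec{a}\|_1\bigr)^{n-1}\ =\ \frac{\epsilon}{(n-1)!}\bigl(1-\delta\|\Vec{a}\|_1\bigr)^{n-1},
\]
which is Corollary~\ref{kenske}. To establish the monotonicity I would observe that $S_t$ is the section at level $t$, taken perpendicular to $\Vec{\sigma}$, of the \emph{convex} body $\mathcal C\defeq\{\Vec{x}:\sigma_ix_i\ge 0\ \forall i\}\cap C(\delta\Vec{a},1)$. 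By the Brunn--Minkowski inequality, $t\mapsto\Vol'(S_t)^{1/(n-1)}$ is concave on the interval where $S_t\ne\emptyset$. That interval begins at $t=0$, where the section is the single point $\Vec{0}$ (the apex of the defining orthant cone, which belongs to $C(\delta\Vec{a},1)$ because $\|\delta\Vec{a}\|_1<1$), and -- since $\max_{\Vec{x}\in\mathcal C}\langle\Vec{\sigma},\Vec{x}\rangle=1+\delta\langle\Vec{\sigma},\Vec{a}\rangle$, the maximizer lying on the orthant boundary as well as on the $\Vec{\sigma}$-facet of $C(\delta\Vec{a},1)$ -- it extends up to level $1+\delta\langle\Vec{\sigma},\Vec{a}\rangle\ge 1+\delta>1+\epsilon$. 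A concave function vanishing at $0$ is non-decreasing up to its maximum, so it suffices to check that this maximum is not attained below level $1+\epsilon$; for that I would use that near the origin $\mathcal C$ agrees with the orthant cone, whose sections grow like $t^{n-1}$, and that the truncation by $C(\delta\Vec{a},1)$ removes from the growing cone-section just enough to keep $\Vol'(S_t)$ from decreasing on $[1,1+\epsilon]$ -- a phenomenon one can verify outright in low dimension, where $\Vol'(S_t)$ is in fact constant on the whole range $[1,\,1+\delta\langle\Vec{\sigma},\Vec{a}\rangle]$.

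I expect this last step -- pinning down that the section profile is still non-decreasing across the slab -- to be the main obstacle, the delicate feature being that the center $\delta\Vec{a}$ of $C(\delta\Vec{a},1)$ lies \emph{below} the slab (at level $\delta\langle\Vec{\sigma},\Vec{a}\rangle\le\delta\|\Vec{a}\|_1<1$), so the sections of $C(\delta\Vec{a},1)$ alone are already shrinking there and one must see that the enlargement of the cone-sections compensates. Should the concavity argument prove awkward, a more hands-on alternative is to build, for each $t\in[1,1+\epsilon]$, an injective piecewise-affine map $S_1\to S_t$ whose Jacobian is at least $1$: translate by $\tfrac{t-1}{n}\Vec{\sigma}$ (this sends level $1$ to level $t$ and preserves $\sigma_ix_i\ge 0$) on the part of $S_1$ where the active facets of $C(\delta\Vec{a},1)$ still have room -- which, using $\epsilon<\delta$ and $\langle\Vec{\sigma},\Vec{a}\rangle\ge 1$, is everything outside a neighborhood of the proper faces of $S_1$ -- and near those faces compose the translation with a shear directed into the room created by the enlargement of the orthant-section as the level rises. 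The naive single-direction prism by itself (pushing $S_1$ in direction $+\Vec{\sigma}$, or pushing the top section $S_{1+\epsilon}$ in direction $-\Vec{\sigma}$) does not suffice, because it becomes thin near the facets of $C(\delta\Vec{a},1)$ indexed by $\Vec{\tau}$ with $\langle\Vec{\tau},\Vec{a}\rangle\le 0$; it is precisely there that $R$ is correspondingly wider, and reconciling the two is the heart of the matter.
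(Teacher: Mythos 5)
Your slicing set-up is the right way to make sense of the paper's terse justification: $\Vol(R)=\frac{1}{\sqrt n}\int_1^{1+\epsilon}\Vol'(S_t)\,\dd t$, with $S_1$ the set bounded in Proposition~\ref{prop:teihen} and the slab thickness $\epsilon/\sqrt n$ from Observation~\ref{obs:takasa}. You also correctly identify that what is actually needed is a lower bound on $\Vol'(S_t)$ across all of $[1,1+\epsilon]$, not merely at $t=1$ --- the paper never says this, and the naive prism argument does not give it. But your proposal does not supply that bound either.

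Your Brunn--Minkowski step shows $g(t):=\Vol'(S_t)^{1/(n-1)}$ is concave with $g(0)=0$, hence non-decreasing up to its maximizer $t^*$; however, you never establish $t^*\ge 1+\epsilon$. The linearity of $g$ on $[0,\,1-\delta\|\Vec{a}\|_1]$ together with concavity only forces $t^*\ge 1-\delta\|\Vec{a}\|_1<1$, which is not enough, and you stop at ``a phenomenon one can verify outright in low dimension.'' (That check is itself misstated: in two dimensions $\Vol'(S_t)$ is strictly increasing near $t=1$ and constant only on the tail $[1+\delta,\,1+\delta\langle\Vec{\sigma},\Vec{a}\rangle]$, and in any case a 2D picture proves nothing in dimension $n$.) Your fallback --- a piecewise-affine injection $S_1\to S_t$ with Jacobian $\ge 1$, patching the translation by a shear near the facets of $C(\delta\Vec{a},1)$ with $\langle\Vec{\tau},\Vec{a}\rangle\le 0$ --- is only a plan: the shear is never constructed, and neither injectivity nor the Jacobian lower bound is verified. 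So the argument has a genuine gap exactly at the step you yourself flag as the heart of the matter; naming the obstacle is not the same as surmounting it. One way to close it is to abandon monotonicity altogether and prove the uniform estimate $\Vol'(S_t)\ge\frac{\sqrt n}{(n-1)!}(1-\delta\|\Vec{a}\|_1)^{n-1}$ directly for each $t\in[1,1+\epsilon]$, by rerunning the shrinking argument behind Proposition~\ref{prop:teihen} at level $t$ (using $\epsilon<\delta\le\delta\langle\Vec{\sigma},\Vec{a}\rangle$ to guarantee the $\Vec{\sigma}$-facet of $C(\delta\Vec{a},1)$ lies beyond the slab); integrating that constant bound over the slab then yields the Corollary.
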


\subsection{Proof of Theorem~\ref{thm:hardness}}
\begin{proposition}\label{prop:hard-lower}
 Suppose that $\langle \Vec{\sigma},\Vec{a} \rangle >0$ holds. 
 If $\epsilon < \delta$ and $\delta < \dfrac{0.1}{n 2^n \|\Vec{a}\|_1}$, 
  then 
\begin{eqnarray*}
 \Vol\left( C(\delta\Vec{a},1) \cap C_{\Vec{\sigma}}(\Vec{0},1+\epsilon) \setminus C(\Vec{0},1) \right)
  \geq 
 \dfrac{\epsilon }{(n-1)!} \left(1-\dfrac{0.1}{2^n}\right). 
\end{eqnarray*}
\end{proposition}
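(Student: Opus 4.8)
The plan is to obtain Proposition~\ref{prop:hard-lower} directly from Corollary~\ref{kenske} by a single elementary estimate. The hypotheses $\langle \Vec{\sigma},\Vec{a}\rangle > 0$ and $\epsilon < \delta$ are exactly those of Corollary~\ref{kenske}, so we already have
\[
 \Vol\left( C(\delta\Vec{a},1) \cap C_{\Vec{\sigma}}(\Vec{0},1+\epsilon) \setminus C(\Vec{0},1) \right)
 \geq \frac{\epsilon}{(n-1)!} \,(1-\delta\|\Vec{a}\|_1)^{n-1},
\]
and it remains only to show that the extra hypothesis $\delta < \frac{0.1}{n 2^n \|\Vec{a}\|_1}$ forces $(1-\delta\|\Vec{a}\|_1)^{n-1} \geq 1-\frac{0.1}{2^n}$.

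First I would record that $\delta\|\Vec{a}\|_1 < \frac{0.1}{n 2^n} \le 1$, so $1-\delta\|\Vec{a}\|_1 \in (0,1)$ and Bernoulli's inequality $(1-x)^m \geq 1-mx$ (valid for $x \in [0,1]$, $m$ a nonnegative integer) applies with $x = \delta\|\Vec{a}\|_1$ and $m = n-1$. This yields
\begin{align*}
 (1-\delta\|\Vec{a}\|_1)^{n-1}
 &\geq 1 - (n-1)\,\delta\|\Vec{a}\|_1 \\
 &\geq 1 - n\,\delta\|\Vec{a}\|_1 \\
 &> 1 - n\cdot\frac{0.1}{n 2^n} = 1 - \frac{0.1}{2^n}.
\end{align*}
Substituting this lower bound into the inequality from Corollary~\ref{kenske} gives the claimed bound.

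Since the argument is just Bernoulli's inequality followed by substitution of the numerical bound on $\delta$, there is essentially no obstacle here; the only point to check is that $\delta\|\Vec{a}\|_1 \le 1$ so that Bernoulli applies, which is immediate from $\delta < \frac{0.1}{n 2^n\|\Vec{a}\|_1}$ together with $n \ge 1$ and $2^n \ge 1$. (Alternatively one could expand $(1-\delta\|\Vec{a}\|_1)^{n-1}$ term by term, but the one-line estimate above is cleaner, and the whole role of this proposition is to convert the $\delta$-dependent factor in Corollary~\ref{kenske} into a quantity uniformly close to $1$ across all $2^n$ orthants $\Vec{\sigma}$.)
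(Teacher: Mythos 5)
Your proof is correct and follows the paper's own argument essentially verbatim: invoke Corollary~\ref{kenske}, apply Bernoulli's inequality to $(1-\delta\|\Vec{a}\|_1)^{n-1}$, and substitute the bound $\delta < \frac{0.1}{n2^n\|\Vec{a}\|_1}$. The only cosmetic difference is that you insert the intermediate step $(n-1)\delta\|\Vec{a}\|_1 \le n\delta\|\Vec{a}\|_1$ before plugging in the numerical bound, which the paper leaves implicit.
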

\begin{proof}
By Corollary~\ref{kenske}, 
\begin{eqnarray*}
 \Vol\left( C(\delta\Vec{a},1) \cap C_{\Vec{\sigma}}(\Vec{0},1+\epsilon) \setminus C(\Vec{0},1) \right)
 & \geq &
 \dfrac{\epsilon}{(n-1)!} \left(1-\delta \|\Vec{a}\|_1 \right)^{n-1} \\
 & \geq &
 \dfrac{\epsilon}{(n-1)!} \left(1- (n-1)\delta \|\Vec{a}\|_1 \right) \\
 & \geq &
 \dfrac{\epsilon}{(n-1)!} \left(1-\dfrac{0.1}{2^n}\right). 
\end{eqnarray*}
\end{proof}

 Now, we revisit the upper bound. 
 When $\epsilon$ is small enough, 
 Proposition~\ref{prop:hard-upper} implies the following. 
\begin{proposition}\label{prop:hard-upper2}
 Suppose that $\langle \Vec{\sigma},\Vec{a} \rangle >0$ holds. 
 If $\epsilon < \delta$ and $\delta < \dfrac{0.1}{n 2^n \|\Vec{a}\|_1}$, 
  then 
\begin{eqnarray*}
 \Vol\left( C(\delta\Vec{a},1) \cap C_{\Vec{\sigma}}(\Vec{0},1+\epsilon) \setminus C(\Vec{0},1) \right)
  \leq  \frac{\epsilon}{(n-1)!} \left(1+\dfrac{0.1}{2^n}\right). 
\end{eqnarray*} 
\end{proposition}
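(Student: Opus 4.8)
The plan is to derive Proposition~\ref{prop:hard-upper2} directly from Proposition~\ref{prop:hard-upper} by verifying that, under the stronger bound on $\delta$ assumed here, the factor $(1+\epsilon)^{n-1}$ appearing in Proposition~\ref{prop:hard-upper} is at most $1+\tfrac{0.1}{2^n}$. Its two hypotheses, $\langle\Vec{\sigma},\Vec{a}\rangle>0$ and $\epsilon<\delta$, are both among the hypotheses here, so Proposition~\ref{prop:hard-upper} applies and yields
\[
 \Vol\left( C(\delta\Vec{a},1) \cap C_{\Vec{\sigma}}(\Vec{0},1+\epsilon) \setminus C(\Vec{0},1) \right)
 \le \frac{\epsilon}{(n-1)!}\,(1+\epsilon)^{n-1}.
\]
It remains to bound $(1+\epsilon)^{n-1}$.

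First I would record a size estimate for $\epsilon$. Since $\Vec{a}\in\mathbb{Z}_{>0}^n$ has positive integer entries, $\|\Vec{a}\|_1\ge n$; combining this with $\epsilon<\delta<\tfrac{0.1}{n2^n\|\Vec{a}\|_1}$ gives $\epsilon<\tfrac{0.1}{n^2 2^n}$, hence $(n-1)\epsilon<\tfrac{0.1}{n2^n}\le 1$. (The case $n=1$ is trivial, since then $(1+\epsilon)^{n-1}=1$.) Next I would apply the elementary inequality $(1+x)^m\le e^{mx}\le 1+2mx$, valid for $0\le mx\le1$, with $m=n-1$ and $x=\epsilon$, to get
\[
 (1+\epsilon)^{n-1}\le 1+2(n-1)\epsilon<1+\frac{0.2}{n2^n}\le 1+\frac{0.1}{2^n}
\]
for every $n\ge 2$. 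Substituting this into the bound furnished by Proposition~\ref{prop:hard-upper} gives the proposition.

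There is essentially no obstacle; the one place that requires a little care is the use of $\|\Vec{a}\|_1\ge n$, which supplies the extra factor $1/n$ needed to turn the naive estimate $1+\tfrac{0.2}{2^n}$ into $1+\tfrac{0.2}{n2^n}\le 1+\tfrac{0.1}{2^n}$. This is dual to the computation in the proof of Proposition~\ref{prop:hard-lower}, where the analogous factor $(1-\delta\|\Vec{a}\|_1)^{n-1}$ is bounded below by $1-\tfrac{0.1}{2^n}$ via the linearization $(1-x)^{n-1}\ge 1-(n-1)x$.
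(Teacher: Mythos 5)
Your proof is correct and follows essentially the same route as the paper: invoke Proposition~\ref{prop:hard-upper}, note $\|\Vec{a}\|_1 \geq n$ to get $\epsilon < \tfrac{0.1}{n^2 2^n}$, and then bound $(1+\epsilon)^{n-1}$ by $1 + \tfrac{0.1}{2^n}$. The only difference is that you spell out the linearization $(1+x)^m \le 1+2mx$ for $0 \le mx \le 1$ explicitly, which the paper leaves implicit.
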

\begin{proof}
 Recall Proposition~\ref{prop:hard-upper}, which implies 
\begin{eqnarray*}
 \Vol\left( C(\delta\Vec{a},1) \cap C_{\Vec{\sigma}}(\Vec{0},1+\epsilon) \setminus C(\Vec{0},1) \right)
  \leq  \frac{\epsilon}{(n-1)!}  (1+\epsilon)^{n-1}
\end{eqnarray*} 
 under the hypothesis. 
 Remark that $\|\Vec{a}\|_1 \geq n$ since $\Vec{a} \in \mathbb{Z}_{>0}^n$. 
 Thus $\epsilon < \dfrac{0.1}{n^2 2^n}$ is assumed 
  by the hypothesis, and hence 
\begin{eqnarray*}
 \frac{\epsilon (1+\epsilon)^{n-1}}{(n-1)!} 
 & \leq &
 \dfrac{\epsilon}{(n-1)!} \left(1 + \dfrac{0.1}{n^2 2^n} \right)^n \\
 & \leq &
 \dfrac{\epsilon}{(n-1)!} \left(1 +  \dfrac{0.1}{2^n}\right). 
\end{eqnarray*}
\end{proof}

Corollary~\ref{cor:hard-case1}, 
Propositions~\ref{prop:hard-lower} and~\ref{prop:hard-upper2} imply the following. 
\begin{lemma}\label{lem:hard-core}
 Suppose that $\epsilon < \delta$ and $\delta < \dfrac{0.1}{n 2^n \|\Vec{a}\|_1}$ hold. 
 Let 
\begin{eqnarray*}
 Z := 
  \dfrac{\Vol \left(  C(\delta \Vec{a},1) \cap C(\Vec{0},1+\epsilon) \setminus  C(\Vec{0},1) \right)}{\frac{\epsilon}{(n-1)!}}
\end{eqnarray*}
  then 
\begin{eqnarray}
 Z-0.1
\leq
 |\{ \Vec{\sigma} \in \{-1,1\} \mid \langle \Vec{\sigma},\Vec{a} \rangle > 0 \}|
\leq
Z+0.1
\label{eq:hard-z}
\end{eqnarray}
\end{lemma}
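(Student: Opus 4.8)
The plan is a direct bookkeeping argument built on the sign-cone decomposition already recorded above, namely
\[
 C(\delta \Vec{a},1) \cap C(\Vec{0},1+\epsilon) \setminus C(\Vec{0},1)
 = \dot\bigcup_{\Vec{\sigma} \in \{-1,1\}^n}
   \left( C(\delta \Vec{a},1) \cap C_{\Vec{\sigma}}(\Vec{0},1+\epsilon) \setminus C(\Vec{0},1) \right),
\]
which is a disjoint union up to a measure-zero overlap on the coordinate hyperplanes, so the $n$-volumes add. First I would partition $\{-1,1\}^n$ into $\Sigma^{\leq} \defeq \{\Vec{\sigma} \mid \langle \Vec{\sigma},\Vec{a}\rangle \leq 0\}$ and $\Sigma^{>} \defeq \{\Vec{\sigma} \mid \langle \Vec{\sigma},\Vec{a}\rangle > 0\}$, and write $N \defeq |\Sigma^{>}|$, which is exactly the quantity $|\{\Vec{\sigma} \in \{-1,1\}^n \mid \langle \Vec{\sigma},\Vec{a}\rangle > 0\}|$ appearing in the statement.

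For $\Vec{\sigma} \in \Sigma^{\leq}$, Corollary~\ref{cor:hard-case1} shows the corresponding summand vanishes. For $\Vec{\sigma} \in \Sigma^{>}$, Propositions~\ref{prop:hard-lower} and~\ref{prop:hard-upper2}---applicable since the hypotheses $\epsilon < \delta$ and $\delta < \tfrac{0.1}{n 2^n \|\Vec{a}\|_1}$ are assumed---sandwich each summand between $\tfrac{\epsilon}{(n-1)!}\!\left(1-\tfrac{0.1}{2^n}\right)$ and $\tfrac{\epsilon}{(n-1)!}\!\left(1+\tfrac{0.1}{2^n}\right)$. Summing over all $\Vec{\sigma}$ and dividing by $\tfrac{\epsilon}{(n-1)!}$ then yields
\[
 N\left(1-\frac{0.1}{2^n}\right) \leq Z \leq N\left(1+\frac{0.1}{2^n}\right).
\]

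Finally I would close the gap using the trivial bound $N \leq 2^n$: the upper inequality gives $Z \leq N + N\cdot\tfrac{0.1}{2^n} \leq N + 0.1$, hence $N \geq Z-0.1$, and the lower inequality gives $Z \geq N - N\cdot\tfrac{0.1}{2^n} \geq N - 0.1$, hence $N \leq Z+0.1$; together these are exactly~\eqref{eq:hard-z}. There is no genuinely hard step left here---the real content sits in Sections~\ref{sec:hard1} through~\ref{sec:hardness}---and the only point requiring care is that the per-facet two-sided estimate holds \emph{uniformly} over $\Vec{\sigma} \in \Sigma^{>}$ and that its relative error $\tfrac{0.1}{2^n}$ per facet is small enough that, after multiplying by the at most $2^n$ non-vanishing facets, the accumulated \emph{absolute} error stays below $0.1$; this is precisely why the hypothesis on $\delta$ carries the factor $2^n$.
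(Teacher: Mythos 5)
Your proof is correct and follows essentially the same route as the paper: decompose into the $2^n$ sign-cones, kill the $\langle\Vec{\sigma},\Vec{a}\rangle\le 0$ terms via Corollary~\ref{cor:hard-case1}, sandwich each of the $N$ surviving terms via Propositions~\ref{prop:hard-lower} and~\ref{prop:hard-upper2}, and use $N\le 2^n$ to convert the per-facet relative error $0.1/2^n$ into an absolute error $\le 0.1$. The paper's write-up is just more terse, proving one direction explicitly and declaring the other ``similar,'' whereas you spell out both; the content is the same.
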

\begin{proof}
 Proposition~\ref{prop:hard-lower} implies that 
\begin{eqnarray*}
 \frac{\Vol\left( C(\delta\Vec{a},1) \cap C_{\Vec{\sigma}}(\Vec{0},1+\epsilon) \setminus C(\Vec{0},1) \right)}{\frac{\epsilon}{(n-1)!}}
  \geq 
 1-\dfrac{0.1}{2^n} 
\end{eqnarray*}
 holds. Clearly, 
  $|\{ \Vec{\sigma} \in \{-1,1\} \mid \langle \Vec{\sigma},\Vec{a} \rangle > 0 \}| \leq 2^n$, 
 we obtain the lower bound of~\eqref{eq:hard-z}. 
 The upper bound is similar. 
\end{proof}
\begin{corollary}\label{cor:hard-premain}
 Suppose that $\epsilon < \delta$ and $\delta < \dfrac{0.1}{n 2^n \|\Vec{a}\|_1}$ hold. 
 Then, 
\begin{eqnarray*}
\left[ \frac{(n-1)!}{\epsilon} \Vol \left(  C(\delta \Vec{a},1) \cap C(\Vec{0},1+\epsilon) \setminus  C(\Vec{0},1) \right) 
 \right] = 
 |\{ \Vec{\sigma} \in \{-1,1\} \mid \langle \Vec{\sigma},\Vec{a} \rangle > 0 \}| 
\end{eqnarray*}
 where $[x]$ for $x \in \mathbb{R}$ denotes the integer $z$ minimizing $|z-x|$. 
\end{corollary}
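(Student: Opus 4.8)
Corollary~\ref{cor:hard-premain} is an immediate consequence of Lemma~\ref{lem:hard-core}. The idea is that the left-hand side is exactly the quantity $Z$ defined in Lemma~\ref{lem:hard-core}, and Lemma~\ref{lem:hard-core} says $Z$ lies within distance $0.1$ of the integer $N := |\{\Vec{\sigma} \in \{-1,1\}^n \mid \langle \Vec{\sigma},\Vec{a}\rangle > 0\}|$. So rounding $Z$ to the nearest integer recovers $N$ exactly.

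\begin{proof}[Proof of Corollary~\ref{cor:hard-premain}]
 Let
\begin{eqnarray*}
 Z := \frac{(n-1)!}{\epsilon}\, \Vol \left(  C(\delta \Vec{a},1) \cap C(\Vec{0},1+\epsilon) \setminus  C(\Vec{0},1) \right)
 = \dfrac{\Vol \left(  C(\delta \Vec{a},1) \cap C(\Vec{0},1+\epsilon) \setminus  C(\Vec{0},1) \right)}{\frac{\epsilon}{(n-1)!}},
\end{eqnarray*}
 which matches the quantity $Z$ in Lemma~\ref{lem:hard-core}.
 Write $N := |\{ \Vec{\sigma} \in \{-1,1\}^n \mid \langle \Vec{\sigma},\Vec{a} \rangle > 0 \}|$, which is a non-negative integer.
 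Under the hypotheses $\epsilon < \delta$ and $\delta < \tfrac{0.1}{n 2^n \|\Vec{a}\|_1}$,
  Lemma~\ref{lem:hard-core} gives
\begin{eqnarray*}
 Z - 0.1 \leq N \leq Z + 0.1,
\end{eqnarray*}
 equivalently $|Z - N| \leq 0.1 < \tfrac{1}{2}$.
 Hence $N$ is the unique integer within distance $\tfrac{1}{2}$ of $Z$:
  for any integer $z \neq N$ we have $|z - Z| \geq |z - N| - |N - Z| \geq 1 - 0.1 = 0.9 > |N - Z|$.
 Therefore $[Z] = N$, which is the claim.
\end{proof}

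**Where the work is.** There is essentially no obstacle here: the entire content has already been established in Lemma~\ref{lem:hard-core}, which in turn rests on Corollary~\ref{cor:hard-case1} (facets with $\langle\Vec{\sigma},\Vec{a}\rangle\le 0$ contribute nothing), Proposition~\ref{prop:hard-lower} and Proposition~\ref{prop:hard-upper2} (each facet with $\langle\Vec{\sigma},\Vec{a}\rangle>0$ contributes $\tfrac{\epsilon}{(n-1)!}$ up to a multiplicative error of $1\pm\tfrac{0.1}{2^n}$), together with the decomposition of the shell $C(\delta\Vec{a},1)\cap C(\Vec{0},1+\epsilon)\setminus C(\Vec{0},1)$ into the $2^n$ orthant pieces $C_{\Vec{\sigma}}(\Vec{0},1+\epsilon)$. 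The only thing to check — and it is purely arithmetic — is that the accumulated error over all $2^n$ positive facets is bounded: each contributes error at most $\tfrac{\epsilon}{(n-1)!}\cdot\tfrac{0.1}{2^n}$, and there are at most $2^n$ of them, so the total deviation of $Z$ from $N$ is at most $0.1$, comfortably below $\tfrac12$. That sub-$\tfrac12$ gap is exactly what makes the rounding operator $[\cdot]$ recover $N$ exactly, completing the Cook reduction from {\#}LARGE SET (equivalently, counting set partitions) to the volume of the intersection of two cross-polytopes.
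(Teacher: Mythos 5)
Your proof is correct and follows exactly the route the paper intends: the corollary is an immediate consequence of Lemma~\ref{lem:hard-core}, since $|Z - N| \le 0.1 < \tfrac12$ forces the nearest-integer operator to return $N$. The paper states the corollary without a separate proof, and your argument is precisely the one it relies on.
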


 To make values integer, set $\delta = 1/r$ and $\epsilon = \delta/2$, then we obtain the following. 
\begin{lemma}\label{lem:hard-main1}
 Let $r=10n 2^n \|\Vec{a}\|_1$, then 
\begin{eqnarray*}
\left[ (n-1)!  \Vol \left(  C(2\Vec{a},2r) \cap C(\Vec{0},2r+1) \setminus  C(\Vec{0},2r) \right) 
 \right] = 
 |\{ \Vec{\sigma} \in \{-1,1\} \mid \langle \Vec{\sigma},\Vec{a} \rangle > 0 \}|. 
\end{eqnarray*}
\end{lemma}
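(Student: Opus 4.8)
\medskip
\noindent\emph{Proof idea.}
All the geometric work has already been done: Corollary~\ref{cor:hard-premain} expresses the {\#}P-hard quantity $|\{\Vec{\sigma}\in\{-1,1\}^n\mid\langle\Vec{\sigma},\Vec{a}\rangle>0\}|$ as the integer nearest to $\frac{(n-1)!}{\epsilon}\Vol(C(\delta\Vec{a},1)\cap C(\Vec{0},1+\epsilon)\setminus C(\Vec{0},1))$ for every admissible pair $(\delta,\epsilon)$. What Lemma~\ref{lem:hard-main1} adds is cosmetic: it rewrites the same body using cross-polytopes whose centres and radii are \emph{integers}, which is what the reduction behind Theorem~\ref{thm:hardness} ultimately needs. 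So the plan is just a single change of scale applied to Corollary~\ref{cor:hard-premain}.

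Concretely, I would proceed in three steps. (1) Instantiate Corollary~\ref{cor:hard-premain} with $\delta:=1/r$ and $\epsilon:=\delta/2=1/(2r)$, where $r=10n2^n\|\Vec{a}\|_1\in\mathbb{Z}_{>0}$ (integral because $\Vec{a}\in\mathbb{Z}_{>0}^n$); one checks $\epsilon<\delta$ and that $\delta=\frac{0.1}{n2^n\|\Vec{a}\|_1}$ lies inside the regime for which that corollary was proved, since the estimates behind Propositions~\ref{prop:hard-lower} and~\ref{prop:hard-upper2} only require $(n-1)\delta\|\Vec{a}\|_1\le 0.1/2^n$ and $\epsilon\le 0.1/(n^2 2^n)$, both of which hold with room to spare (using $\|\Vec{a}\|_1\ge n$). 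This yields
\[
\left[\frac{(n-1)!}{\epsilon}\,\Vol\!\bigl(C(\delta\Vec{a},1)\cap C(\Vec{0},1+\epsilon)\setminus C(\Vec{0},1)\bigr)\right]=|\{\Vec{\sigma}\in\{-1,1\}^n\mid\langle\Vec{\sigma},\Vec{a}\rangle>0\}|.
\]
(2) Apply the dilation $\phi\colon\mathbb{R}^n\to\mathbb{R}^n$, $\phi(\Vec{x})=2r\Vec{x}$. Using $\phi(C(\Vec{c},\rho))=C(2r\Vec{c},2r\rho)$ (the homogeneity of cross-polytopes recorded in the footnote opening Section~\ref{sec:fptas-two-cp}) together with $2r\delta=2$ and $2r\epsilon=1$, the map $\phi$ sends $C(\delta\Vec{a},1),\,C(\Vec{0},1+\epsilon),\,C(\Vec{0},1)$ to $C(2\Vec{a},2r),\,C(\Vec{0},2r+1),\,C(\Vec{0},2r)$ respectively, all with integer data; moreover $\phi$ is a bijection, commutes with $\cap$ and $\setminus$, and multiplies $n$-dimensional volume by $(2r)^n$, so $\Vol(C(2\Vec{a},2r)\cap C(\Vec{0},2r+1)\setminus C(\Vec{0},2r))=(2r)^n\Vol(C(\delta\Vec{a},1)\cap C(\Vec{0},1+\epsilon)\setminus C(\Vec{0},1))$. (3) Substitute this identity and $1/\epsilon=2r$ into the displayed equation and collect the scalar factors produced by the dilation; this turns $\frac{(n-1)!}{\epsilon}\Vol(\cdot)$ into the rounded expression of Lemma~\ref{lem:hard-main1}.

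There is essentially no hard step here: the entire content sits in Corollary~\ref{cor:hard-premain}, and the rest is bookkeeping in a linear change of variables. The two points that merit a moment's care are (i) confirming that the concrete choice $\delta=1/r$, $\epsilon=1/(2r)$ really falls inside the admissible range for Corollary~\ref{cor:hard-premain} — the boundary computation flagged in step (1) — and (ii) checking that $\phi$ maps the three cross-polytopes to exactly the integer-data polytopes appearing in the statement and that the centring hypotheses of Theorem~\ref{thm:hardness} hold, namely $\Vec{0}\in C(2\Vec{a},2r)$ and $2\Vec{a}\in C(\Vec{0},2r)$, both immediate from $2\|\Vec{a}\|_1\le 2r$. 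Once Lemma~\ref{lem:hard-main1} is in hand, Theorem~\ref{thm:hardness} follows, because $C(2\Vec{a},2r)\cap C(\Vec{0},2r+1)\setminus C(\Vec{0},2r)$ is the set difference of the two nested intersections $C(2\Vec{a},2r)\cap C(\Vec{0},2r+1)$ and $C(2\Vec{a},2r)\cap C(\Vec{0},2r)$, so an exact polynomial-time algorithm for the volume of an intersection of two cross-polytopes would compute $|\{\Vec{\sigma}\in\{-1,1\}^n\mid\langle\Vec{\sigma},\Vec{a}\rangle>0\}|$ and hence, by Observation~\ref{obs:set-parition}, count set partitions.
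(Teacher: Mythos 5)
Your overall plan — instantiate Corollary~\ref{cor:hard-premain} with $\delta=1/r$, $\epsilon=\delta/2$ and then dilate by $2r$ — is exactly the route the paper takes; the paper's own ``proof'' is only the one-sentence remark immediately before the lemma. Steps (1) and (2) are fine: $\delta=\tfrac{0.1}{n2^n\|\Vec{a}\|_1}$ and $\epsilon=\delta/2$ do satisfy the estimates behind Propositions~\ref{prop:hard-lower} and~\ref{prop:hard-upper2}, and the dilation $\phi(\Vec{x})=2r\Vec{x}$ does carry $C(\delta\Vec{a},1)$, $C(\Vec{0},1+\epsilon)$, $C(\Vec{0},1)$ to $C(2\Vec{a},2r)$, $C(\Vec{0},2r+1)$, $C(\Vec{0},2r)$.

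The gap is in step (3), which you dismiss as bookkeeping but which in fact does not produce the displayed lemma. The dilation scales $n$-dimensional volume by $(2r)^n$, while $1/\epsilon$ contributes only a single factor of $2r$. Writing $V_1:=\Vol\bigl(C(\delta\Vec{a},1)\cap C(\Vec{0},1+\epsilon)\setminus C(\Vec{0},1)\bigr)$ and $V_2:=\Vol\bigl(C(2\Vec{a},2r)\cap C(\Vec{0},2r+1)\setminus C(\Vec{0},2r)\bigr)$, you have $V_2=(2r)^nV_1$, hence
\begin{align*}
\frac{(n-1)!}{\epsilon}\,V_1
 \;=\; 2r\,(n-1)!\,\frac{V_2}{(2r)^n}
 \;=\; \frac{(n-1)!}{(2r)^{n-1}}\,V_2,
\end{align*}
not $(n-1)!\,V_2$. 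So the quantity that Corollary~\ref{cor:hard-premain} rounds to the desired count is $\tfrac{(n-1)!}{(2r)^{n-1}}\,V_2$, which differs from the expression in Lemma~\ref{lem:hard-main1} by the large factor $(2r)^{n-1}$. The lemma as printed appears to drop this factor — a slip that is easy to make precisely because one ``collects scalar factors'' without writing them down, which is what you did. The intended conclusion (and the one needed for the reduction) should read $\bigl[\tfrac{(n-1)!}{(2r)^{n-1}}\,V_2\bigr]=|\{\Vec{\sigma}\in\{-1,1\}^n\mid\langle\Vec{\sigma},\Vec{a}\rangle>0\}|$; since $(2r)^{n-1}$ is a polynomial-time computable rational, Theorem~\ref{thm:hardness} still follows, and your closing remarks about the reduction to counting set partitions remain valid once the coefficient is corrected. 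But as written, your step (3) asserts an identity that the arithmetic does not support, so you must carry the $(2r)^{n-1}$ explicitly.
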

 Finally, we remark that 
  $\lg(r) = \Order(n \log n + \log \|\Vec{a}\|_1)$, 
  meaning that the reduction is in time polynomial in $n$ and $\log \|\Vec{a}\|_1$, which is the input size of Problem~\ref{prob:LSet}. 
 Now we obtain Theorem~\ref{thm:hardness}.

\section{Intersection of a Constant Number of Cross-polytopes}\label{sec:extension}
This section extends the algorithm in Section \ref{sec:fptas-two-cp} 
to the intersection of 
$k$ cross-polytopes for any {\em constant} $k\in \Integer_+$. 
Let $\Vec{p}_i\in \Real^n$, $r_i\in \Real_{\ge 0}$ and 
$C(\Vec{p}_i,r_i)$ for $i=1,\dots,k$, 
where $C(\Vec{p},r)$ is a cross-polytope ($L_1$-ball) 
with center $\Vec{p}\in \Real^n$ and radius $r\in \Real_{\ge 0}$.
Then, we are to compute the following polytope given by
\begin{align}
  S(\Pi,\Vec{r})= \bigcap_{i=1}^k C(\Vec{p}_i,r_i),
\end{align}
where $\Pi$ is 
an $n\times k$ matrix $\Pi=(\Vec{p}_1,\dots,\Vec{p}_k)$ and
$\Vec{r}=(r_1,\dots,r_k)$.
For the analysis,
we assume that $\Vec{p}_1,\dots,\Vec{p}_k$ are internal points of 
$S(\Pi,\Vec{r})$.
Without loss of generality, we assume that $\Vec{p}_1=\Vec{0}$ and 
$\Vec{0}\le \Vec{r} \le \Vec{1}$.

We prove the following theorem.
\begin{theorem}
  \label{th:CPI_k}
  There is an algorithm that outputs an approximation $Z$ of ${\rm Vol}(S(\Pi,\Vec{r}))$ in 
  $O(k^{k+2}n^{2k+3}/\delta^{k+1})$ time satisfying ${\rm Vol}(S(\Pi,\Vec{r}))\le Z \le (1+\delta){\rm Vol}(S(\Pi,\Vec{r}))$.
\end{theorem}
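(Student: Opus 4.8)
The plan is to transplant the whole argument of Section~\ref{sec:fptas-two-cp} from two cross-polytopes to $k$, replacing the two-coordinate convolution of Section~\ref{sec:convol} by a $k$-coordinate one, with one coordinate tracking $\|\Vec{x}-\Vec{p}_j\|_1$ for each center $j$. Since $\Vec{p}_1=\Vec{0}$ and $r_1\le 1$ we still have $S(\Pi,\Vec{r})\subseteq C(\Vec{0},r_1)\subseteq[-1,1]^n$, so $[-1,1]^n$ remains a valid enclosing box and the probabilistic reading of Section~\ref{sec:convol} carries over: with $\Vec{X}$ uniform on $[-1,1]^n$, $\Prob\!\left[\bigwedge_{j=1}^{k}\sum_{\ell=1}^{n}|X_\ell-p_{j,\ell}|\le r_j\right]=2^{-n}\Vol(S(\Pi,\Vec{r}))$. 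I would define $\Psi_0^{(k)}(\Vec{v})=1$ for $\Vec{v}\in\Real_{\ge 0}^{k}$ and $0$ otherwise, and inductively $\Psi_i^{(k)}(\Vec{v})\defeq\int_{-1}^{1}\Psi_{i-1}^{(k)}(v_1-|s-p_{1,i}|,\dots,v_k-|s-p_{k,i}|)\,\dd s$. The induction proving Lemma~\ref{lem:Psi_n(1,r)-lem} then gives $2^{-i}\Psi_i^{(k)}(\Vec{v})=\Prob\!\left[\bigwedge_{j}\sum_{\ell=1}^{i}|X_\ell-p_{j,\ell}|\le v_j\right]$, whence $\Psi_n^{(k)}(\Vec{r})=\Vol(S(\Pi,\Vec{r}))$, and (as in Observation~\ref{obs:Psi-monotone}) $\Psi_i^{(k)}$ is nondecreasing in each coordinate.

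Next I would copy the staircase construction of Section~\ref{sec:G-idea} onto the grid $\Gamma^{(k)}\defeq\prod_{j=1}^{k}\{0,\tfrac{r_j}{M},\tfrac{2r_j}{M},\dots,r_j\}$ of $(M+1)^k$ points: set $G_0^{(k)}=\Psi_0^{(k)}$, put $\overline{G}_i^{(k)}(\Vec{v})\defeq\int_{-1}^{1}G_{i-1}^{(k)}(v_1-|s-p_{1,i}|,\dots,v_k-|s-p_{k,i}|)\,\dd s$, and let $G_i^{(k)}(\Vec{v})\defeq\overline{G}_i^{(k)}\!\left(\tfrac{r_1}{M}\lceil\tfrac{M}{r_1}v_1\rceil,\dots,\tfrac{r_k}{M}\lceil\tfrac{M}{r_k}v_k\rceil\right)$ for $\Vec{v}\ge\Vec{0}$ and $0$ otherwise. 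For a fixed $\Vec{v}\in\Gamma^{(k)}$ the integrand $s\mapsto G_{i-1}^{(k)}(v_1-|s-p_{1,i}|,\dots)$ is piecewise constant, with jumps only where some $v_j-|s-p_{j,i}|$ meets $\tfrac{r_j}{M}\Integer$, i.e.\ at $s=p_{j,i}\pm(v_j-\tfrac{r_j}{M}\ell)$ with $0\le\ell\le M$; these $\Order(kM)$ breakpoints form $\Order(k)$ arithmetic progressions, so a constant-way merge orders them in $\Order(kM)$ time, and $G_i^{(k)}(\Vec{v})$ is then a sum of $\Order(kM)$ terms of the form (interval length)$\times$(stored value $G_{i-1}^{(k)}(\Vec{w})$, $\Vec{w}\in\Gamma^{(k)}$), exactly as in \eqref{form:Gbar}. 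The two sandwich bounds go through verbatim: the induction of Lemma~\ref{lemma:lowG} (using $\tfrac{r_j}{M}\lceil\tfrac{M}{r_j}v_j\rceil\ge v_j$ and monotonicity) gives $\Psi_n^{(k)}(\Vec{r})\le G_n^{(k)}(\Vec{r})$, and the induction of Lemma~\ref{lem:upG} (using $\tfrac{r_j}{M}\lceil\tfrac{M}{r_j}v_j\rceil\le v_j+\tfrac{r_j}{M}$) gives $G_n^{(k)}(\Vec{r})\le\Psi_n^{(k)}\!\big((1+\tfrac{n}{M})\Vec{r}\big)$.

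The step needing real care --- and the main obstacle --- is the generalization of the ``cone lemma'' (Lemma~\ref{lem:cone}), i.e.\ bounding $\Psi_n^{(k)}(\Vec{r})/\Psi_n^{(k)}\!\big((1+\tfrac{n}{M})\Vec{r}\big)$ from below. Here I would enlarge the radii one center at a time. Write $\lambda=1+\tfrac{n}{M}$ and, for $j=0,\dots,k$, let $T_j\defeq\bigcap_{\ell\le j}C(\Vec{p}_\ell,\lambda r_\ell)\cap\bigcap_{\ell>j}C(\Vec{p}_\ell,r_\ell)\cap[-1,1]^n$, so $\Vol(T_0)=\Vol(S(\Pi,\Vec{r}))=\Psi_n^{(k)}(\Vec{r})$ and $\Vol(T_k)=\Psi_n^{(k)}(\lambda\Vec{r})$. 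For the passage from $T_{j-1}$ to $T_j$, form the homothety $K_j(q)\defeq\{\Vec{p}_j+\mu(\Vec{x}-\Vec{p}_j)\mid\Vec{x}\in T_{j-1},\,0\le\mu\le q\}$ centered at $\Vec{p}_j$; because $\Vec{p}_j$ is an internal point of $S(\Pi,\Vec{r})$ it lies in every $C(\Vec{p}_\ell,r_\ell)$, hence in $T_{j-1}$, so $T_{j-1}$ is star-shaped about $\Vec{p}_j$ and $\Vol(K_j(q))=q^{\,n}\Vol(T_{j-1})$. The key containment is $T_j\subseteq K_j(\lambda)$: for $\Vec{y}\in T_j$ and $\Vec{w}=\Vec{p}_j+\lambda^{-1}(\Vec{y}-\Vec{p}_j)$, splitting $\Vec{w}-\Vec{p}_\ell=\lambda^{-1}(\Vec{y}-\Vec{p}_\ell)+(1-\lambda^{-1})(\Vec{p}_j-\Vec{p}_\ell)$ and using $\|\Vec{p}_j-\Vec{p}_\ell\|_1\le r_\ell$, $\|\Vec{y}-\Vec{p}_\ell\|_1\le\lambda r_\ell$ for $\ell\le j$ and $\le r_\ell$ for $\ell>j$, together with $\lambda+\lambda^{-1}\ge 2$, shows $\Vec{w}\in T_{j-1}$ (the box is preserved since $\Vec{w}$ lies on the segment $\overline{\Vec{p}_j,\Vec{y}}$). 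Hence $\Vol(T_j)\le\Vol(K_j(\lambda))=\lambda^{n}\Vol(T_{j-1})$, and chaining over $j=1,\dots,k$ gives $\Psi_n^{(k)}(\Vec{r})\ge\lambda^{-kn}\Psi_n^{(k)}(\lambda\Vec{r})=\left(\tfrac{M}{M+n}\right)^{kn}\Psi_n^{(k)}\!\big((1+\tfrac{n}{M})\Vec{r}\big)$.

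Finally I would set $M=\lceil 2kn^2\delta^{-1}\rceil$, so that $\left(\tfrac{M}{M+n}\right)^{kn}\ge(1-\tfrac{n}{M})^{kn}\ge 1-\tfrac{kn^2}{M}\ge 1-\tfrac{\delta}{2}$; combining this with the two sandwich bounds gives, for the output $Z\defeq G_n^{(k)}(\Vec{r})$, $\Vol(S(\Pi,\Vec{r}))\le Z\le\Psi_n^{(k)}\!\big((1+\tfrac{n}{M})\Vec{r}\big)\le(1-\tfrac{\delta}{2})^{-1}\Vol(S(\Pi,\Vec{r}))\le(1+\delta)\Vol(S(\Pi,\Vec{r}))$ for $\delta\le 1$. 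For the running time, each of the $n$ layers processes the $(M+1)^k=\Order(M^k)$ points of $\Gamma^{(k)}$ at cost $\Order(kM)$ each, so the algorithm runs in $\Order(nkM^{k+1})$ time; substituting $M=\Order(kn^2\delta^{-1})$ yields $\Order\!\big(nk\,(kn^2\delta^{-1})^{k+1}\big)=\Order(k^{k+2}n^{2k+3}\delta^{-(k+1)})$, as claimed. The only genuinely new difficulty beyond Section~\ref{sec:fptas-two-cp} is the $k$-fold homothety argument above and the observation that the internal-point hypothesis $\Vec{p}_j\in S(\Pi,\Vec{r})$ is exactly what makes each of those containments go through; everything else is the two-cross-polytope argument with ``$2$'' replaced by ``$k$.''
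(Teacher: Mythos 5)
Your proposal is correct and reproduces the paper's overall architecture exactly: the $k$-coordinate convolution $\Psi^{(k)}_i$, the staircase $G^{(k)}_i$ on the grid $\Gamma^{(k)}$ of $(M+1)^k$ points, the two sandwich inequalities by induction, the setting $M=\Theta(kn^2/\delta)$, and the $\Order(nkM^{k+1})=\Order(k^{k+2}n^{2k+3}\delta^{-(k+1)})$ running time. The one place you genuinely diverge from the paper is the per-step argument in the ratio (``cone'') lemma. Both you and the paper chain the bound by enlarging one radius at a time, but for the single step you show the containment $T_j\subseteq \Vec{p}_j+\lambda(T_{j-1}-\Vec{p}_j)$ directly (via the decomposition $\Vec{w}-\Vec{p}_\ell=\lambda^{-1}(\Vec{y}-\Vec{p}_\ell)+(1-\lambda^{-1})(\Vec{p}_j-\Vec{p}_\ell)$ and $\lambda+\lambda^{-1}\ge 2$), which immediately gives $\Vol(T_j)\le\lambda^n\Vol(T_{j-1})$; this is the Lemma~\ref{lem:cone} argument from the $k=2$ case, transplanted to general $k$. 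The paper instead decomposes $S(\Pi,\Vec{u})$ into the cone $K_i$ with apex $\Vec{p}_i$ over its common boundary with $C(\Vec{p}_i,u_i)$ and its complement, asserts that the complement's volume is unchanged when $u_i$ is enlarged, and then applies a mediant-type inequality $\tfrac{a+b}{a+c}\ge\tfrac{b}{c}$. Both routes land on the same $(1+n/M)^{-kn}$ bound and both use the internal-point hypothesis $\Vec{p}_j\in S(\Pi,\Vec{r})$ at the same point. Your containment argument is more self-contained (it needs no justification of the cone-complement invariance, which the paper leaves implicit); the paper's decomposition is in principle tighter per step but the extra sharpness is discarded by the mediant inequality anyway. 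The remaining details in your writeup (breakpoint merge, sandwich inductions, parameter substitution) are correct and match the paper.
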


\subsection{Algorithm description}
We explain the idea of our algorithm for approximating 
${\rm Vol}(S(\Pi,\Vec{r}))$ as follows.
First, ${\rm Vol}(S(\Pi,\Vec{r}))$ is given by the following probagility
\begin{align}
  {\rm Vol}(S(\Pi,\Vec{r})= 2^n \Pr\left[\bigwedge_{i=1}^k \|\Vec{X}-\Vec{p}_i\|_1 \le u_i\right], \label{form:probability_k}
\end{align}
where $\Vec{X}=(X_1,\dots,X_n)$ is a uniform random vector over $[-1,1]^n$.
We rewrite the probability as the repetition of an integral formula.
Then, we staircase approximate the integral.

To transform (\ref{form:probability_k}) into the repetition of 
an integral formula, for $\Pi$ and $\Vec{u}\in \mathbb{R}^k$, we define
\begin{align}
\Psi_j(\Pi,\Vec{u})=2^j\Pr\left[\bigwedge_{i=1}^k \|\Vec{X}-\Vec{p}_i\|_1 \le u_i\right],
\end{align}
so that we have ${\rm Vol}(S(\Pi,\Vec{r}))=\Psi_n(\Pi,\Vec{r})$.
We have $\Psi_0(\Pi,\Vec{u})=1$ if $\Vec{u}\ge \Vec{0}$ and $\Psi_0(\Pi, \Vec{u})=0$ otherwise.
We can obtain $\Psi_j(\Pi,\Vec{u})$ from $\Psi_{j-1}(\Pi,\Vec{u})$ by
\begin{align*}
  \Psi_j(\Pi,\Vec{u})&=\int_{x_i\in [-1,1]} \Psi_{j-1}(\Pi,\Vec{u}-\Vec{q}_j(x_j)) {\rm d}x_j,
\end{align*}
where $\Vec{q}_j(x_j)=(|x_j-p_{1,j}|,\dots,|x_j-p_{k,j}|)$.
Although this gives a simple expression for $S(\Pi,\Vec{r})$, it is hard 
to compute the repetition of the integral because there are exponentially
many breakpoints of the derivative of $\Psi_n(\Pi,\Vec{u})$ of some order.

We compute the staircase approximation $G_j(\Pi,\Vec{u})$ of 
$\Psi_j(\Pi,\Vec{u})$ as follows. 
For convenience, 
we consider an intermediate $\overline{G}_j(\Pi,\Vec{u})$ given by
\begin{align}
  \overline{G}_j(\Pi,\Vec{u}) = \int_{s\in [-1,1]} G_{j-1}(\Pi,\Vec{u}-\Vec{q}_j(s)){\rm d}s. \label{form:Gbar_k}
\end{align}
This integral can be reduced to a sum, which we will explain 
after we define $G_j(\Pi,\Vec{u})$ for $j=1,\dots,n$. 
After that, $G_j(\Pi,\Vec{u})$ is a staircase approximation of $\overline{G}_j(\Pi,\Vec{u})$
given by
\begin{align}
  G_j(\Pi,\Vec{u})= \overline{G}_j \left(\Pi, \lceil M\Vec{u}/\Vec{r} \rceil/M\right)\label{form:G_k}
\end{align}
where $\lceil M\Vec{u}/\Vec{r}\rceil$ means a vector 
$(\lceil Mu_1/r_1\rceil, \dots, \lceil Mu_k/r_k\rceil)$,
and $M=2kn^2/\delta$ is a parameter of our Algorithm \ref{algorithm:CPI_k} that
is shown later.
Note that the computation of $G_j(\Pi,\Vec{u})$ is actually 
the computation of $(M+1)^k$ values.
Since $\Vec{u}-\Vec{q}_j(s)\le \Vec{r}$ holds in the computation of 
(\ref{form:Gbar2_k}) as long as $\Vec{u}\le \Vec{r}$,
we need not to have the value for the cases where
$\Vec{u}\le \Vec{r}$ does not hold.

Let us see that the integral for computing $\overline{G}_j(\Pi,\Vec{u})$ 
can be transformed into a sum. 
We consider grid points $\Gamma$ given by
\begin{align*}
  \Gamma \defeq \left\{\frac{1}{M}(\ell_1 r_1, \ell_2 r_2,\dots, \ell_k r_k)~~|~~ \ell_1,\dots,\ell_k\in \{0,1,\dots,M\} \right\}.
\end{align*}
For an arbitrary $\Vec{u} \in \Gamma$, let
\begin{align*}
S_j(\Vec{u}) \defeq \{s\in [-1,1]| \exists i\in \{1,\dots,k\},\exists \ell \in \Integer ~~\text{s.t.}~~ u_j - |s-p_{i,j}|=\ell r_i / M\},
\end{align*}
for $j=1,\dots,k$.
Then let 
\begin{align*}
  T_j(\Vec{u}) \defeq \bigcup_{j=1}^k S_j(\Vec{u}) \cup \{-1,1\}.
\end{align*}
Suppose $t_0,t_1,\dots,t_m$ be an ordering of all elements of $T_i(u,v)$ such
that $t_i\le t_{i+1}$ for any $i=0,1,\dots,m$.
Then we can compute $G_i(\Pi, \Vec{u})$ for any $\Vec{u}\in \Gamma$ by 
\begin{align}
  G_j(\Pi,\Vec{u}) &=\overline G_i(\Pi, \Vec{u}) \nonumber \\
  &= \int_{-1}^1 G_{j-1}(\Pi, \Vec{u}- \Vec{q}_j(s)){\rm d}s \nonumber \\
  &= \sum_{i=0,\dots,m} \int_{t_i}^{t_{i+1}} G_{j-1}\left(\Pi, \frac{1}{M} \Vec{w}(\Vec{u}, t_{i+1})\right) {\rm d}s  && \mbox{(by \eqref{form:G_k})}\nonumber \\
  &= \sum_{i=0,\dots,m} (t_{i+1}-t_i)G_{j-1}\left(\Pi, \frac{1}{M} \Vec{w}(\Vec{u}, t_{i+1})\right), \label{form:Gbar2_k}
\end{align}
where $\Vec{w}(\Vec{u},t_{i+1})=(\lceil M(u_1-|t_{i+1}-p_{1,j}|)\rceil ,\dots, \lceil M(u_k-|t_{i+1}-p_{k,j}|)\rceil)$.


Our algorithm outputs the value of $G_n(\Pi, \Vec{r})$.
By taking the parameter $M$ larger, 
we get closer approximation of ${\rm Vol}(S(\Pi,\Vec{r}))$.
Here we assume that $0< \delta\le 1/2$.
The following is our algorithm \ref{algorithm:CPI_k}.
\begin{algorithm}
\label{algorithm:CPI_k}
  Input: $\Pi\in \Real^{kn}$, $\Vec{0}\le \Vec{r} \le \Vec{1}$; \\
  1. Let $M:=2kn^2/\delta$; \\
  2. Let $G_0(\Pi,\Vec{u}):=1$ for $\Vec{u}\ge \Vec{0}$, otherwise $G_0(\Pi,\Vec{u}):=0$; \\
  3. For $j:=1,\dots,n$,\\
  4. \hspace*{3mm} Compute $\overline{G}_j(\Pi,\Vec{u})$ from $G_{j-1}(\Pi,\Vec{u})$ by (\ref{form:Gbar2_k}); \\
  5. \hspace*{3mm} Compute staircase approximation $G_j(\Pi,\Vec{u})$ of $\overline{G}_j(\Pi,\Vec{u})$ by (\ref{form:G_k}); \\
  6. Output $G_n(\Pi,\Vec{r})$.
\end{algorithm}

Let us consider the running time of our algorithm \ref{algorithm:CPI_k}.
In Step 4-5, computing $\overline {G}_j(\Pi, \Vec{u})$ for a fixed $\Vec{u}$ takes $O(kM)$ time
because $\overline {G}_j(\Pi, \Vec{u})$ is the sum of $m\le 2kM$ values.
We compute $\overline {G}_j(\Pi, \Vec{u})$  for $(M+1)^k$ different $\Vec{u}$'s.
Then Step 4-5 is repeated $n$ times.
We have the following observation.
\begin{observation}
  The running time of Algorithm \ref{algorithm:CPI_k} is $O(knM^{k+1})$.
\end{observation}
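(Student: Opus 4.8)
The plan is to make explicit the cost accounting sketched just after the statement of Algorithm~\ref{algorithm:CPI_k}, resting on two points: that one evaluation of $\overline{G}_j(\Pi,\Vec{u})$ via \eqref{form:Gbar2_k} costs $O(kM)$, and that the algorithm performs $n\cdot|\Gamma|$ such evaluations with $|\Gamma|=(M+1)^k$.

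First I would bound $m=|T_j(\Vec{u})|$. Fix $j$ and $\Vec{u}\in\Gamma$. For each $i\in\{1,\dots,k\}$ the set $S_i(\Vec{u})$ consists of the solutions $s\in[-1,1]$ of $u_j-|s-p_{i,j}|=\ell r_i/M$ for $\ell=0,1,\dots,M$, and for each $\ell$ there are at most two such $s$, namely $p_{i,j}\pm(u_j-\ell r_i/M)$; hence $|S_i(\Vec{u})|\le 2(M+1)$ and therefore $m=|T_j(\Vec{u})|\le\sum_{i=1}^k|S_i(\Vec{u})|+2=O(kM)$. Moreover each $S_i(\Vec{u})$ is a union of two sequences monotone in $\ell$, hence presorted, so the ordering $t_0\le t_1\le\cdots\le t_m$ of $T_j(\Vec{u})=\bigcup_i S_i(\Vec{u})\cup\{-1,1\}$ is produced by merging these presorted lists in $O(kM)$ time (for the constant $k$ at hand), exactly as in the $O(M)$ claim for $T_i(u,v)$ used in Lemma~\ref{lem:G-time}, without a general comparison sort.

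Next I would account for one evaluation of \eqref{form:Gbar2_k}. Storing $G_{j-1}(\Pi,\cdot)$ as a $k$-dimensional table indexed by the $M^k$ grid points of $\Gamma$ — which suffices because $\Vec{u}-\Vec{q}_j(s)\le\Vec{r}$ holds throughout, as noted before the algorithm — each summand $(t_{i+1}-t_i)\,G_{j-1}\!\bigl(\Pi,\tfrac1M\Vec{w}(\Vec{u},t_{i+1})\bigr)$ is obtained by computing the index vector $\Vec{w}(\Vec{u},t_{i+1})$ (namely $k$ ceilings) and one table lookup, i.e.\ $O(1)$ for constant $k$. Summing the $m=O(kM)$ terms gives $O(kM)$ per $\Vec{u}$, and the staircase step \eqref{form:G_k} is merely a reindexing of the already-computed values and costs no more. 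Ranging over all $\Vec{u}\in\Gamma$ with $|\Gamma|=(M+1)^k=O(M^k)$, one pass of the outer loop costs $O(M^k)\cdot O(kM)=O(kM^{k+1})$; the loop runs for $j=1,\dots,n$, while Step~2 costs $O(M^k)$ and Step~6 costs $O(1)$, both dominated. Hence the total running time is $O(knM^{k+1})$, which is the claim.

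The only step that needs a moment's care is the assertion that the ordered list $T_j(\Vec{u})$ is available in $O(kM)$ rather than $O(kM\log(kM))$ time; everything else is routine bookkeeping. As above, this follows because each $S_i(\Vec{u})$ is presorted, so $T_j(\Vec{u})$ is obtained by merging a constant number of sorted lists — the multidimensional analogue of the one-dimensional argument already invoked for Lemma~\ref{lem:G-time}.
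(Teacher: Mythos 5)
Your accounting is correct and mirrors the paper's own argument: the paper states that computing $\overline{G}_j(\Pi,\Vec{u})$ for fixed $\Vec{u}$ costs $O(kM)$ since it is a sum of $m\le 2kM$ terms, that this is repeated over $(M+1)^k$ grid points, and then over $n$ outer iterations, yielding $O(knM^{k+1})$. You have simply made explicit the points the paper leaves implicit (the bound on $|S_i(\Vec{u})|$, the $O(kM)$ cost of producing the sorted list by merging presorted sequences in analogy with Lemma~\ref{lem:G-time}, and the absorption of the $O(k)$ per-term index computation into constants for fixed $k$), so this is the same proof with the bookkeeping filled in.
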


\subsection{Proof of Theorem \ref{th:CPI_k}}
Here, we prove that $M=2kn^2/\delta$ is sufficient to 
have $1+\delta$ approximation of ${\rm Vol}(S(\Pi,\Vec{r}))$.
We show the following lemma.
\begin{lemma}
$\Psi_j(\Pi,\Vec{u})$ is non-decreasing with respect to each component of $\Vec{u}$.
\end{lemma}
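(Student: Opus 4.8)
The plan is to prove the lemma by induction on $j$, exactly mirroring the argument of Observation~\ref{obs:Psi-monotone} but now carried out through the integral recursion $\Psi_j(\Pi,\Vec{u})=\int_{-1}^{1}\Psi_{j-1}(\Pi,\Vec{u}-\Vec{q}_j(s))\,\dd s$ that defines $\Psi_j$ (with $\Vec{q}_j(s)=(|s-p_{1,j}|,\dots,|s-p_{k,j}|)$). The base case $j=0$ is immediate from the definition: $\Psi_0(\Pi,\Vec{u})=1$ when $\Vec{u}\ge\Vec{0}$ and $\Psi_0(\Pi,\Vec{u})=0$ otherwise, and increasing one coordinate of $\Vec{u}$ either keeps us in the region $\Vec{u}\ge\Vec{0}$ (value stays $1$), or keeps some coordinate negative (value stays $0$), or moves the last negative coordinate up to $0$ while the others are already nonnegative (value jumps from $0$ to $1$); in every case the value does not decrease.

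For the inductive step, assume $\Psi_{j-1}(\Pi,\cdot)$ is non-decreasing in each component, and fix $\Vec{u}\le\Vec{u}'$ in $\mathbb{R}^k$ (componentwise). First I would note that $\Vec{u}-\Vec{q}_j(s)\le\Vec{u}'-\Vec{q}_j(s)$ componentwise for every $s\in[-1,1]$, since the same vector $\Vec{q}_j(s)$ is subtracted from both sides. By the induction hypothesis the integrand therefore satisfies $\Psi_{j-1}(\Pi,\Vec{u}-\Vec{q}_j(s))\le\Psi_{j-1}(\Pi,\Vec{u}'-\Vec{q}_j(s))$ pointwise in $s$, and integrating over $s\in[-1,1]$ preserves the inequality, giving $\Psi_j(\Pi,\Vec{u})\le\Psi_j(\Pi,\Vec{u}')$. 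The same fact also follows in one line from the probabilistic description $\Psi_j(\Pi,\Vec{u})=2^j\Pr[\bigwedge_{i=1}^k\sum_{\ell=1}^j|X_\ell-p_{i,\ell}|\le u_i]$: enlarging any $u_i$ only enlarges the event whose probability is measured, and $2^j>0$; I would mention this as a sanity check.

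I do not expect any genuine obstacle here; the statement is a soft monotonicity fact, and all the analytic content (the existence and correctness of the integral recursion, and that exactly the first $j$ coordinates of the uniform vector enter $\Psi_j$) is inherited verbatim from the two--cross-polytope development in Section~\ref{sec:convol}. The only point that deserves an explicit line is the base case $\Psi_0$, handled above. This lemma is then exactly what is needed to push staircase roundings $\lceil M\Vec{u}/\Vec{r}\rceil/M$ through the recursion when bounding $G_n(\Pi,\Vec{r})$ above and below by $\Psi_n$, in the same way Observation~\ref{obs:Psi-monotone} was used in Section~\ref{sec:G-algo}.
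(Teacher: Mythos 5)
Your proposal is correct. The paper's own proof is the one-liner you offer as a ``sanity check'': it writes $\Psi_j(\Pi,\Vec{u})$ as the volume of the set $\bigl\{\Vec{x}\in[-1,1]^j : \bigwedge_{i=1}^k\sum_{\ell=1}^j|x_\ell-p_{i,\ell}|\le u_i\bigr\}$ and observes that enlarging any $u_i$ only enlarges the set (this is also exactly how Observation~\ref{obs:Psi-monotone} is proved in the two-cross-polytope case, so strictly speaking you are not ``mirroring'' that argument when you do the induction --- that observation's proof is the probabilistic one, not an inductive one). Your primary route, the induction on $j$ through the recursion $\Psi_j(\Pi,\Vec{u})=\int_{-1}^1\Psi_{j-1}(\Pi,\Vec{u}-\Vec{q}_j(s))\,\dd s$, is a genuinely different and perfectly sound argument: the base case is the monotonicity of the indicator of $\{\Vec{u}\ge\Vec{0}\}$ in each coordinate, and the step is just that subtracting the same $\Vec{q}_j(s)$ preserves the componentwise order and integration preserves pointwise inequalities. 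The inductive route has the mild virtue of staying entirely inside the integral-recursion framework the algorithm actually computes with (no appeal to the probabilistic/volume reformulation), whereas the paper's route is shorter once that reformulation is in hand; substantively they buy the same thing, namely the license to push the staircase roundings $\lceil M\Vec{u}/\Vec{r}\rceil/M$ through the recursion in Lemma~\ref{lemma:sandwitch_k}.
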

\begin{proof}
Let $\Vec{u}=(u_1,\dots,u_k)\le \Vec{u}'=(u'_1,\dots,u'_k)$.
By definition, we have that
\begin{align*}
  \Psi_j(\Pi,\Vec{u}) &={\rm Vol}\left(\left\{\Vec{x}\in \Real^j \left| \bigwedge_{i=1}^k \sum_{\ell=1}^j |x_\ell-p_{i,\ell}| \le u_i \right.\right\}\right)  \\
  &\le {\rm Vol}\left(\left\{\Vec{x}\in \Real^j \left| \bigwedge_{i=1}^k \sum_{\ell=1}^j |x_\ell- p_{i,\ell}| \le u'_i \right.\right\}\right) = \Psi_j(\Pi,\Vec{u}').
\end{align*}
\end{proof}

Then, we can prove the following lemma, which gives upper and lower bounds 
on the approximation $G_n(\Pi,\Vec{u})$.
\begin{lemma}
\label{lemma:sandwitch_k}
$\Psi_n(\Pi,\Vec{u}) \le G_n(\Pi,\Vec{u}) \le \Psi_n(\Pi,\Vec{u}+n\Vec{r}/M)$.
\end{lemma}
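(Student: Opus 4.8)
The plan is to prove both inequalities at once by induction on $j$, running from $j=0$ up to $j=n$, in exactly the spirit of Lemmas~\ref{lemma:lowG} and~\ref{lem:upG} for the two--cross-polytope case: the scalar arguments $(u,v)$ there are now bundled into the vector $\Vec{u}\in\Real^k$, and the role played by Observation~\ref{obs:Psi-monotone} is now played by the coordinatewise monotonicity of $\Psi_j$ just established. Concretely I would prove the slightly stronger statement that, for every $\Vec{u}\in\Real^k$ and every $j=0,1,\dots,n$, $\Psi_j(\Pi,\Vec{u})\le G_j(\Pi,\Vec{u})\le\Psi_j(\Pi,\Vec{u}+j\Vec{r}/M)$; Lemma~\ref{lemma:sandwitch_k} is then the case $j=n$. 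Stating the claim for all $\Vec{u}$ (not only $\Vec{u}\le\Vec{r}$) matters because the convolution integrals sample shifted arguments $\Vec{u}-\Vec{q}_j(s)$ that may have negative coordinates, where both $G_{j-1}$ and $\Psi_{j-1}$ vanish and the inequalities hold trivially. The base case $j=0$ is immediate: $G_0(\Pi,\cdot)=\Psi_0(\Pi,\cdot)$ by definition and $0\cdot\Vec{r}/M=\Vec{0}$.

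For the inductive step, write $\Vec{u}^{\star}$ for the grid point obtained from $\Vec{u}$ by the coordinatewise round-up in~\eqref{form:G_k}. Unfolding~\eqref{form:Gbar_k} and~\eqref{form:G_k} gives $G_j(\Pi,\Vec{u})=\int_{-1}^{1}G_{j-1}(\Pi,\Vec{u}^{\star}-\Vec{q}_j(s))\,\mathrm{d}s$. The two facts I would use about the rounding are the coordinatewise sandwich $\Vec{u}\le\Vec{u}^{\star}\le\Vec{u}+\Vec{r}/M$: the left inequality because $\lceil\cdot\rceil$ rounds up, and the right inequality because the grid spacing of $\Gamma$ in coordinate $i$ is $r_i/M$, so $u^{\star}_i$ overshoots $u_i$ by less than $r_i/M$. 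Pinning down the exact rounding convention of~\eqref{form:G_k} so that this bound — and hence the final error term $n\Vec{r}/M$ — comes out with the stated constant is the one spot that needs a careful line.

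The lower bound is then $G_j(\Pi,\Vec{u})=\int_{-1}^{1}G_{j-1}(\Pi,\Vec{u}^{\star}-\Vec{q}_j(s))\,\mathrm{d}s\ge\int_{-1}^{1}\Psi_{j-1}(\Pi,\Vec{u}^{\star}-\Vec{q}_j(s))\,\mathrm{d}s=\Psi_j(\Pi,\Vec{u}^{\star})\ge\Psi_j(\Pi,\Vec{u})$, using the induction hypothesis, then the integral recursion defining $\Psi_j$, then $\Vec{u}^{\star}\ge\Vec{u}$ with monotonicity. For the upper bound, apply the induction hypothesis in the other direction and then $\Vec{u}^{\star}\le\Vec{u}+\Vec{r}/M$ with monotonicity: $G_j(\Pi,\Vec{u})\le\int_{-1}^{1}\Psi_{j-1}\bigl(\Pi,\Vec{u}^{\star}-\Vec{q}_j(s)+(j-1)\Vec{r}/M\bigr)\mathrm{d}s\le\int_{-1}^{1}\Psi_{j-1}\bigl(\Pi,\Vec{u}+j\Vec{r}/M-\Vec{q}_j(s)\bigr)\mathrm{d}s=\Psi_j(\Pi,\Vec{u}+j\Vec{r}/M)$, recognizing the $\Psi_j$-recursion at the end.

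I do not expect a real obstacle here: this is a mechanical vector-valued version of the two earlier lemmas, and the coupling of the $k$ coordinates through $\Vec{q}_j(s)$ is harmless precisely because both the rounding error $\Vec{u}^{\star}-\Vec{u}$ and the monotonicity step act coordinatewise. The ``hard part'', such as it is, is purely bookkeeping: keeping the vector inequalities componentwise throughout and making sure the accumulated rounding after $n$ steps is exactly $n\Vec{r}/M$ rather than something slightly larger.
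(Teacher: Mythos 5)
Your proof is correct and takes essentially the same route as the paper: induction on $j$ with the coordinatewise sandwich $\Vec{u}\le\Vec{u}^\star\le\Vec{u}+\Vec{r}/M$ for the grid rounding, monotonicity of $\Psi_j$ standing in for Observation~\ref{obs:Psi-monotone}, and the $\Psi$-recursion closing the induction (the paper writes out only the upper bound, dismissing the lower bound as ``clear from the algorithm''). Your caution about the rounding convention is well placed—the paper's expression $\lceil M\Vec{u}/\Vec{r}\rceil/M$ in \eqref{form:G_k} is really the $\Gamma$-grid point with $i$-th coordinate $\lceil Mu_i/r_i\rceil r_i/M$, which is exactly the reading your error bound requires.
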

\begin{proof}
Since $\Psi_n(\Pi,\Vec{u})\le G_n(\Pi,\Vec{u})$ is clear from the algorithm,
we prove $G_n(\Pi,\Vec{u})\le \Psi_n(\Pi,\Vec{u}+n\Vec{r}/M)$ in the following.
This is proved by induction on $n$.
Since $G_0(\Pi,\Vec{u})=\Psi_0(\Pi,\Vec{u})$ for any $\Vec{u}\in \Real_{\ge 0}^k$, 
the base case holds.
Then, as for the induction step, we assume 
$G_j(\Pi,\Vec{u}) \le \Psi_j(\Pi,\Vec{u}+j\Vec{r}/M)$.
By the definition of $G_j(\Pi,\Vec{u})$ and $\overline{G}_{j+1}(\Pi,\Vec{u})$, 
we have
\begin{align*}
  G_{j+1}(\Pi, \Vec{u}) &= \overline G_{j+1} (\Pi, \lceil M\Vec{u}/\Vec{r}\rceil/M) \\
  &= \int_{-1}^1 G_j(\Pi, \lceil M \Vec{u}/\Vec{r}\rceil/M -\Vec{q}_{j+1}(s)){\rm d}s \\
  &\le \int_{-1}^1 \Psi_j(\Pi, \lceil M \Vec{u}/\Vec{r}\rceil/M-\Vec{q}_{j+1}(s)+j\Vec{r}/M){\rm d}s && \mbox{(Induction hypo.)}\\
  &\le \int_{-1}^1 \Psi_j(\Pi, \Vec{u}/\Vec{r} -\Vec{q}_{j+1}(s)+(j+1)\Vec{r}/M){\rm d}s \\
  &= \Psi_{j+1}(\Pi, \Vec{u}/\Vec{r}+(j+1)\Vec{r}/M),
\end{align*}
where $\Vec{u}/\Vec{r}=(u_1/r_1, \dots,u_k/r_k)$ and $\lceil\Vec{u}/\Vec{r}\rceil=(\lceil u_1/r_1 \rceil, \dots, \lceil u_k/r_k \rceil)$.
Then we have the lemma.
\end{proof}

We prove Theorem \ref{th:CPI_k} as follows.

\noindent{\it Proof of Theorem \ref{th:CPI_k}}:
By Lemma \ref{lemma:sandwitch_k}, we have that the approximation ratio
is bounded from above by 
$\Psi_n(\Pi,\Vec{u}+\Vec{h})/\Psi_n(\Pi,\Vec{u})$, where $\Vec{h}=(h_1,\dots,h_k)\le n\Vec{r}/M$.
We bound the reciprocal of the approximation ratio from below.

For convenience, let
\begin{align*}
  K_i(\Pi,\Vec{u},d)=\left\{\Vec{x}\in \Real^n \big| \exists\Vec{y}\in S(\Pi,\Vec{u}), \exists b\in [0,d], \Vec{x}-\Vec{p}_i=b(\Vec{y}-\Vec{p}_i),\text{s.t.} \|\Vec{y}-\Vec{p}_i\|_1=u_i \right\}.
\end{align*}
Here $K_i(\Pi,\Vec{u},1)$ is given by considering the cones that are 
given by the center $\Vec{p}_i$ as the top vertex and 
the shared surface of $S(\Pi,\Vec{u})$ and $C(\Vec{p}_i,u_i)$ as the bottom.
Then $K_i(\Pi,\Vec{u},d)$ is given by scaling $K_i(\Pi,\Vec{u},1)$.
Since we assume that $\Vec{0}\in S(\Pi,\Vec{u})$, 
we have $K_i(\Pi,\Vec{u},1)\subseteq S(\Pi,\Vec{u})$.
Since ${\rm Vol}(S(\Pi,\Vec{u})-K_i(\Pi,\Vec{u},1))$ is 
equal to ${\rm Vol}(S(\Pi,\Vec{u}+h_i\Vec{e}_i)-K_i(\Pi,\Vec{u},(u_i+h_i)/u_i))$,
we have that
\begin{align*}
&\frac{\Psi_n(\Pi,\Vec{u})}{\Psi_n(\Pi,\Vec{u}+h_i\Vec{e}_i)} =\frac{{\rm Vol}(S(\Pi,\Vec{u}))}{{\rm Vol}(S(\Pi,\Vec{u}+h_i\Vec{e}_i))} \\
&=\frac{{\rm Vol}(S(\Pi,\Vec{u})-K_i(\Pi,\Vec{u},1))+{\rm Vol}(K_i(\Pi,\Vec{u},1))}{{\rm Vol}(S(\Pi,\Vec{u}+h_i\Vec{e}_i)-K_i(\Pi,\Vec{u},(u_i+h_i)/u_i))+{\rm Vol}(K_i(\Pi,\Vec{u},(u_i+h_i)/u_i))}\\
&\ge \frac{{\rm Vol}(K(\Pi,\Vec{u},1))}{{\rm Vol}(K_i(\Pi,\Vec{u},(u_i+h_i)/u_i))}
\ge \frac{1}{(1+h_i/u_i)^n}.
\end{align*}

This leads to 
\begin{align*}
  \frac{\Psi_n(\Pi,\Vec{r})}{\Psi_n(\Pi,\Vec{r}+\Vec{h})}\ge \prod_{i=1}^k\frac{1}{(1+h_i/r_i)^n}\ge 1-\sum_{i=1}^k nh_i/r_i.
\end{align*}

Then, for $\delta \le 1/2$, we have $\frac{\Psi_n(\Pi,\Vec{r}+\Vec{h})}{\Psi_n(\Pi,\Vec{r})} \le \frac{1}{1-\sum_{i=1}^k nh_i/r_i}\le \frac{1}{1-kn^2/M} \le 1+2kn^2/M = 1+\delta$.
\qed

\section{The Volume of ${\cal V}$-polytopes with $n+k$ Vertices}\label{sec:n+c}
Given a vertex set $V=\{\Vec{v}_1,\dots,\Vec{v}_{n+k}\}$, where 
$k\ge 1$ is a constant.
Here we consider the problem of computing the volume of 
$P={\rm conv}(V)$.
Without loss of generality, we assume that $P$ contains the origin $\Vec{0}$
as its interior point. 
Also note that we assume that all the vectors are vertical vectors.
Then we have the following Theorem.
\begin{theorem}
  By decomposing $P$ into simplices, we can compute ${\rm Vol}(P)$ in $O(n^{k+3})$ time.
\end{theorem}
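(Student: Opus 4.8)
The plan is to build an explicit decomposition of $P$ into simplices and add up their volumes, exploiting two elementary counting facts that hold because $P$ has only $k$ more vertices than its dimension. First, in any triangulation of $P$ that uses only the points of $V$ (for instance a \emph{pulling}/placing triangulation), each full‑dimensional simplex is $\mathrm{conv}$ of $n+1$ distinct vertices from $V$, and distinct simplices of a simplicial complex are distinct vertex sets, so such a triangulation has at most $\binom{n+k}{n+1}=\binom{n+k}{k-1}=O(n^{k-1})$ simplices. Second, $P$ has at most $\binom{n+k}{n}=\binom{n+k}{k}=O(n^{k})$ facets: a facet is an $(n-1)$‑polytope, hence has at least $n$ vertices, and those vertices affinely span its supporting hyperplane, so every facet hyperplane equals $\mathrm{aff}(S)$ for some $n$‑element $S\subseteq V$.

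First I would enumerate all facet hyperplanes: for each of the $O(n^{k})$ $n$‑subsets $S\subseteq V$, solve the associated homogeneous linear system in $O(n^{3})$ time to obtain a normal vector (this simultaneously detects whether $S$ is affinely dependent), then test in $O(kn)=O(n)$ time whether the remaining $k$ vertices lie on one closed side; after discarding duplicates this produces every facet of $P$ together with its vertex set $V\cap H$ in $O(n^{k}\cdot n^{3})=O(n^{k+3})$ total time. Next I would triangulate $P$. Because $\Vec{0}\in\mathrm{int}(P)$ one may cone every facet from $\Vec{0}$, using the identity $P=\bigcup_{F}\mathrm{conv}(\{\Vec{0}\}\cup F)$ with pairwise disjoint interiors, and recursively triangulate each facet $\mathrm{conv}(V\cap H_{F})$ — an $(n-1)$‑polytope on a subset of $V$ with the same vertex surplus $k$ — whose own facets are the ridges $F\cap F'$ of $P$, read off by intersecting $V\cap H_{F}$ with the already‑computed facet hyperplanes rather than by a fresh enumeration. (Equivalently one may use a pulling triangulation at $\Vec{v}_{1}$ via $P=\bigcup_{F:\,\Vec{v}_{1}\notin F}\mathrm{conv}(\{\Vec{v}_{1}\}\cup F)$.) Finally, for each of the $O(n^{k-1})$ resulting simplices $\mathrm{conv}(\Vec{w}_{0},\dots,\Vec{w}_{n})$ I would compute $\tfrac{1}{n!}\bigl|\det(\Vec{w}_{1}-\Vec{w}_{0},\dots,\Vec{w}_{n}-\Vec{w}_{0})\bigr|$ in $O(n^{3})$ time and sum, at cost $O(n^{k-1}\cdot n^{3})=O(n^{k+2})$; thus the running time is dominated by the $O(n^{k+3})$ facet‑enumeration step.

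The step I expect to need the most care is the running‑time accounting for the recursive part: one must check that descending through faces triggers no cascade of expensive re‑enumerations even though the recursion has depth $\Theta(n)$. The point is that every face encountered is a polytope on a subset of $V$ with vertex surplus $k$, so its facets are obtained by $O(n^{k})$ intersections with stored hyperplanes at $O(n^{2})$ each, and — since a $d$‑polytope with $d+k$ vertices has only polynomially many faces for fixed $k$, as one sees from its Gale diagram — memoizing the (sub)triangulation of each distinct face keeps the aggregate work within $O(n^{k+3})$. That Gale‑diagram bound on the number of faces is the only nonelementary ingredient; everything else is linear algebra together with the two binomial counts above.
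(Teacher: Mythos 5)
Your first step is the same as the paper's: enumerate the $\binom{n+k}{n}=O(n^{k})$ size-$n$ subsets of $V$, for each solve a linear system in $O(n^{3})$ time to find the affine hyperplane and test a sidedness condition, and cone from $\Vec{0}\in\mathrm{int}(P)$; this step dominates the claimed $O(n^{k+3})$. The paper's Algorithm then simply adds $\det(M_U)/n!$ for every facet-spanning $n$-set $U$, which tacitly assumes each facet is a simplex (a general-position assumption the statement does not make) — a non-simplicial facet $F$ would otherwise be counted $\binom{|V\cap H_F|}{n}$ times. You correctly notice this and insert a further recursive (pulling) triangulation of non-simplicial faces, so on this point you are being more careful than the paper, not less.

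The gap is in your running-time accounting for that recursion. The Gale-diagram claim --- ``a $d$-polytope with $d+k$ vertices has only polynomially many faces for fixed $k$'' --- is false: already for $k=1$ a $d$-simplex has $2^{d+1}-2$ proper faces (its Gale diagram is the all-zero configuration in $\mathbb{R}^{0}$, so every vertex subset is a coface). What you actually need is not a bound on the total number of faces of $P$ but on the number of faces \emph{visited} by the pulling recursion; this is at most the number of prefixes of the $O(n^{k-1})$ maximal chains, hence $O(n^{k})$. Even granting that, the per-face facet-discovery cost you quote --- ``$O(n^{k})$ intersections with stored hyperplanes at $O(n^{2})$ each'' --- yields a total of $O(n^{2k+2})$, which exceeds $O(n^{k+3})$ once $k\geq 2$. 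To land at $O(n^{k+3})$ you would need a sharper per-face estimate, for instance by enumerating only the $\binom{|V_{G}|}{\dim G}$ subsets of the current face's own vertex set and arguing that this telescopes across levels, rather than rescanning all $O(n^{k})$ facet hyperplanes of $P$ at every node.
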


The following is the algorithm for computing ${\rm Vol}(P)$.
For all possible $U\subseteq V$, we check if the $n-1$ dimensional 
polytope $f_U$ given by $U$ is the facet of $P$, 
and if so, we compute the volume $S_U:=\det(M_U)/n!$,
where $M_U=(\Vec{u}_1, \dots, \Vec{u}_n)$.
Then ${\rm Vol}(P)=\sum_{U\subseteq V}S_U$.
\begin{algorithm}
Input: $V=\{\Vec{v}_1,\dots,\Vec{v}_{n+k}\}\in \Real^{n(n+k)}$\\
1. $S:=0$, $M_V:=(\Vec{v}_1,\dots,\Vec{v}_{n+k})$;\\
2. For all possible $U=\{\Vec{u}_1,\dots,\Vec{u}_n\}\subseteq V$,\\
3. \hspace*{3mm} Compute $\Vec{a}=(a_1,\dots,a_n)\in \Real^n$ s.t. $\Vec{a}^{\top}M_U=\Vec{1}$,where $M_U=(\Vec{u}_1,\dots,\Vec{u}_n)$; \\
4. \hspace*{3mm} If $\Vec{a}^{\top}M_V\le \Vec{1}$ or $\Vec{a}^{\top}M_V\ge \Vec{1}$, then \\
5. \hspace*{6mm}     $S:=S+{\rm Vol}(S_U)$, where ${\rm Vol}(S_U)=\det(M_U)/n!$; \\
6. Output $S$.
\end{algorithm}

We consider the running time of the algorithm.
The loop from Step 2 to Step 5 is repeated 
$\binom{n+k}{k}$ times.
In Step 3, we compute $\Vec{a}$ by the Gaussian elimination,
which takes $O(n^3)$ time.
Step 4 checks if all vertices is contained in a half space given by $f_U$.
This takes at most $n(n+k)$ additions and multiplications.
In Step 5, computing ${\rm Vol}(S_U)$ takes $O(n^3)$.
The running time amounts to 
$O\left((n^3+n(n+k)+n^3)\binom{n+k}{k}\right)=O(n^{k+3})$.

\section{Conclusion}
 Motivated by a deterministic approximation of the volume of a ${\cal V}$-polytope, 
  this paper gave an FPTAS for the volume 
of the knapsack dual polytope $\Vol(P_{\Vec{a}})$. 
 In the process, 
  we showed that the volume of the intersection of $L_1$-balls is {\#}P-hard, and 
  gave an FPTAS.
 As we remarked, 
  the volume of 
the intersection of two $L_q$-balls are easy for $q=2,\infty$. 
 The complexity 
  of the volume of 
the intersection of two $L_q$-balls for other $q>0$ is interesting. 
 The problem seems difficult even for approximation in the case of $q \in (0,1)$, 
  since $L_q$-ball is no longer convex. 
 Our FPTAS for the intersection of two cross-polytopes 
  assumes that each cross-polytope contains the center of the other one. 
 It is open if an FPATS exists without the assumption. 

 We have remarked that the volume of a ${\cal V}$-polytope with $n+k$ vertex is computed in $\Order(n^{k+3})$, 
  while Khachiyan's result~\cite{Khachiyan1989} implies that it is {\#}P-hard when $k \geq n+1$. 
 The complexity when $k=\omega(1)$ and $k=\order(n)$ seems not known. 
 It is an interesting question if an FPT algorithm regarding $k$ exists.

\section*{Acknowledgments}
 This work is partly supported by
  Grant-in-Aid for Scientific Research on Innovative Areas MEXT Japan
 ``Exploring the Limits of Computation (ELC)'' (No. 24106008, 24106005).


\end{document}